\documentclass[11pt]{article}

\usepackage[margin=1in]{geometry}

\usepackage{amssymb,amsmath,amsthm,amsfonts}
\usepackage{bm}
\usepackage{textgreek}
\usepackage{mathtools}
\usepackage{enumitem}
\usepackage[numbers,comma,sort&compress]{natbib}
\usepackage{graphicx}
\usepackage[font=small]{caption}
\usepackage{subcaption} 
\usepackage{float}
\usepackage[ruled,vlined,linesnumbered]{algorithm2e}

\usepackage[pagebackref,colorlinks]{hyperref}

\usepackage{physics}
\usepackage{float}
\usepackage{tablefootnote}
\usepackage{overpic}
\usepackage{multirow}
\usepackage{afterpage}
\usepackage{multirow}     
\usepackage{booktabs}      
\usepackage{tabularx}
\usepackage{makecell}
\usepackage{bbm}

\usepackage{footnotehyper} 
\makesavenoteenv{table}

\newtheorem{theorem}{Theorem}[section]

\newtheorem{lemma}[theorem]{Lemma}
\newtheorem{corollary}[theorem]{Corollary}
\newtheorem{proposition}[theorem]{Proposition}

\newtheorem{definition}[theorem]{Definition}

\newcommand{\eq}[1]{\hyperref[eq:#1]{Equation~(\ref*{eq:#1})}}

\newcommand{\thm}[1]{\hyperref[thm:#1]{Theorem~\ref*{thm:#1}}}
\newcommand{\cor}[1]{\hyperref[cor:#1]{Corollary~\ref*{cor:#1}}}
\newcommand{\defn}[1]{\hyperref[defn:#1]{Definition~\ref*{defn:#1}}}
\newcommand{\lem}[1]{\hyperref[lem:#1]{Lemma~\ref*{lem:#1}}}
\newcommand{\prop}[1]{\hyperref[prop:#1]{Proposition~\ref*{prop:#1}}}
\newcommand{\assum}[1]{\hyperref[assum:#1]{Assumption~\ref*{assum:#1}}}
\newcommand{\fig}[1]{\hyperref[fig:#1]{Figure~\ref*{fig:#1}}}
\newcommand{\tab}[1]{\hyperref[tab:#1]{Table~\ref*{tab:#1}}}
\newcommand{\algo}[1]{\hyperref[algo:#1]{Algorithm~\ref*{algo:#1}}}
\renewcommand{\sec}[1]{\hyperref[sec:#1]{Section~\ref*{sec:#1}}}
\newcommand{\append}[1]{\hyperref[append:#1]{Appendix~\ref*{append:#1}}}
\newcommand{\fac}[1]{\hyperref[fac:#1]{Fact~\ref*{fac:#1}}}
\newcommand{\lin}[1]{\hyperref[lin:#1]{Line~\ref*{lin:#1}}}

\newcommand{\cond}[1]{\hyperref[cond:#1]{Condition~\ref*{cond:#1}}}

\newcommand{\eps}{\varepsilon}

\def\>{\rangle}
\def\<{\langle}

\renewcommand{\i}{\mathrm{i}}

\renewcommand\bra[1]{{\langle{#1}|}}
\renewcommand\ket[1]{{|{#1}\rangle}}
\renewcommand{\ketbra}[2]{\ket{#1}\!\bra{#2}}

\DeclarePairedDelimiter\rbra{\lparen}{\rparen}

\DeclarePairedDelimiter\cbra{\{}{\}}

\DeclarePairedDelimiter\ceil{\lceil}{\rceil}

\newcommand{\TsaF}[2]{\operatorname{F}_{#1}\rbra*{#2}}
\newcommand{\Tsa}[2]{\operatorname{H}_{#1}^{\textup{Tsa}}\rbra*{#2}}
\newcommand{\Ren}[2]{\operatorname{H}_{#1}^{\textup{R\'en}}\rbra*{#2}}

\usepackage{interval}
\intervalconfig{soft open fences}
\allowdisplaybreaks

\SetKwInOut{Input}{Input}
\SetKwInOut{Registers}{Registers}
\SetKwInOut{Output}{Output}
\SetKw{Apply}{Apply}
\SetKw{Construct}{Construct}
\SetKw{Estimate}{Estimate}
\SetKw{Return}{Return}

\title{Quantum Multi-Level Estimation of Functionals of\\Discrete Distributions}
\author{Kean Chen\thanks{\url{keanchen.gan@gmail.com}} \and Minbo Gao\thanks{\url{gmb17@tsinghua.org.cn}} \and Tongyang Li\thanks{\url{tongyangli@pku.edu.cn}} \and Qisheng Wang\thanks{\url{QishengWang1994@gmail.com}} \and Xinzhao Wang\thanks{\url{wangxz@stu.pku.edu.cn}}
}
\date{}

\begin{document}

\maketitle

\begin{abstract}
    We propose a quantum multi-level estimation framework for a functional $\sum_{i=1}^n f(p_i)$ of a discrete distribution $(p_i)_{i=1}^n$. 
    We partition the values $p_i$ into logarithmically many intervals whose length decays exponentially.
    For each interval, we perform non-destructive singular value discrimination to isolate the relevant $p_i$, 
    enabling adaptive estimation of the partial sum over this interval.
    Unlike previous variable-time approaches, our method avoids high control overhead and requires only constant extra ancilla qubits.
    As an application, we present efficient quantum estimators for the $q$-Tsallis entropy of discrete distributions. 
    Specifically, 
    \begin{itemize} 
        \item For $q > 1$, we obtain a \textit{near-optimal} quantum algorithm with query complexity $\widetilde{\Theta}(1/\varepsilon^{\max\{\frac{1}{2(q-1)}, 1\}})$,\footnote{Throughout this paper, $\widetilde{O}\rbra{\cdot}$, $\widetilde{\Omega}\rbra{\cdot}$, and $\widetilde{\Theta}\rbra{\cdot}$ suppress polylogarithmic factors.} improving the prior best $O(1/\varepsilon^{1+\frac{1}{q-1}})$ due to \hyperlink{cite.liu2025on}{Liu and Wang (SODA 2025; \textit{IEEE Trans.\ Inf.\ Theory} 2026)}. 
  
        \item For $0 < q < 1$, we obtain a quantum algorithm with query complexity $\widetilde{O}(n^{\frac{1}{q}-\frac{1}{2}}/\varepsilon^{\frac{1}{q}})$, exhibiting a quantum speedup over the near-optimal classical estimators due to \hyperlink{cite.JVHW17}{Jiao, Venkat, Han, and Weissman (\textit{IEEE Trans.\ Inf.\ Theory} 2017)}. 
    \end{itemize}
    Our results achieve, to our knowledge, the \textit{first} near-optimal quantum estimators for parameterized $q$-entropy for non-integer $q$.
    
\end{abstract}


\newpage
\tableofcontents
\newpage

\section{Introduction}

Property testing on quantum computers is a growing field in quantum computing \cite{MdW16}. 
An important question is whether quantum computing can bring speedups to classical tasks, namely the property testing of discrete distributions. 
Early works \cite{bravyi2011quantum,chakraborty2010new} investigated the quantum query complexity of testing a series of properties of discrete probability distributions, including identity, closeness, uniformity, and orthogonality.
A quantum algorithm for the hypothesis testing of discrete distributions was presented in \cite{belovs2019quantum}. 

Quantum approaches for parameterized tasks regarding discrete distributions were first considered in \cite{li2018quantum}. 
Specifically, they presented quantum estimators for the $\alpha$-R\'enyi entropy $\Ren{\alpha}{p} = \frac{1}{1-\alpha} \log ( \sum_{i=1}^n p_i^\alpha )$ \cite{Ren61} for an unknown discrete distribution $p = (p_i)_{i=1}^n$, where $\alpha \in \interval{0}{+\infty}$ is the parameter. 
In particular, their quantum estimators for the $1$- and $2$-R\'enyi entropies achieve near-optimal dependence on $n$ (cf.\ \cite{BKT20}). 
Here, the $1$-R\'enyi entropy is the Shannon entropy \cite{Sha48a,Sha48b}. 
Recently, an improved quantum estimator for Shannon entropy was proposed in \cite{shin2025near}. 
Another parameterized entropy of great interest is the $q$-Tsallis entropy $\Tsa{q}{p} = \frac{1}{1-q} ( \sum_{i=1}^n p_i^q - 1 )$ \cite{Tsa88}, for which near-optimal quantum estimators were provided in \cite{Wan25} for any integer $q \geq 2$.
For non-integer $q > 2$, quantum algorithms with query complexity $O(1/\varepsilon^{1+\frac{1}{q-1}})$ were proposed in \cite{liu2025on}, improving the classical sample complexity $O(1/\varepsilon^2)$ given in \cite{JVHW17}, where $\varepsilon$ is the desired additive error. 

Another parameterized task is the $\ell_\alpha$ closeness testing, where $\alpha \in \interval{1}{+\infty}$ is the parameter. 
The quantum query complexity of the $\ell_1$ closeness testing was first studied in \cite{bravyi2011quantum}, where they further considered the estimation of the $\ell_1$ distance. 
Improved quantum algorithms for the $\ell_1$ closeness testing were subsequently proposed in \cite{Mon15,gilyen2020distributional,LWL24,CKO25}.
The $\ell_2$ closeness testing was considered in \cite{gilyen2019quantum} and later improved to optimal up to constant factors in \cite{LWL24}. 
For non-integer $\alpha > 1$, there is only an attempt in \cite{LW25Lalpha} for the $\ell_\alpha$ closeness testing of quantum states. 

As we can see, (near-)optimal quantum estimators/testers for discrete distributions are only known for integer parameters. 
We therefore ask the following question:
\[
\textit{Can we find optimal quantum estimators for properties with non-integer parameters?}
\]
In this paper, we answer this question in the affirmative.
We propose a multi-level estimation framework that provides a comprehensive resolution to the problem by establishing near-optimal quantum estimators for the $q$-Tsallis entropy for all non-integers $q > 1$.
A key feature of our framework is that it addresses the implementation overheads of prior art while achieving a fine-grained query complexity.
Specifically, we implement this multi-level structure using only a constant number of additional ancilla qubits, effectively eliminating the complex control logic required by previous methods.

The necessity of such a multi-level strategy arises from a primary technical challenge: the component function $f(x)$ in the target functional $\sum_{i=1}^n f(p_i)$ often exhibits a singular behavior at zero. 
Important examples include the Shannon entropy and the $q$-Tsallis entropy for non-integer $q$.
The standard QSVT-based framework~\cite{gilyen2020distributional} typically uses a \emph{single} polynomial to approximate the target function $f(x)$ globally.
However, in the presence of singularities, efficient polynomial approximation is typically feasible only on an interval $[\delta, 1]$ bounded away from zero, where the required degree scales inversely with the cutoff $\delta$.

A similar difficulty appears in the quantum linear system problem where the inverse function is singular at the origin.
A standard solution in that context is Variable Time Amplitude Amplification (VTAA)~\cite{ambainis2012variable,childs2017quantum}, which partitions the state space based on the distance to the singularity.
While improving query complexity, this strategy inherently incurs significant implementation overheads: partitioning the state space often requires complex multi-qubit controlled operations and a large number of ancilla qubits.
Prior approaches adapting this partitioning idea to distribution testing~\cite{wang2024quantum,shin2025near} inherit these heavy circuit complications, which our framework successfully resolves.

\subsection{Main results}

In this paper, to estimate functionals of an unknown discrete distribution $(p_i)_{i=1}^n$, we employ the \textit{purified quantum query access} model, where one is given a unitary oracle, $\mathcal{O}_p$, for the distribution $p$ such that
\begin{equation*}
    \mathcal{O}_p\ket{\mathbf{0}}\ket{\mathbf{0}} = \sum_{i=1}^n \sqrt{p_i}\ket{i}\ket{\phi_i},
\end{equation*}
with each $\ket{\phi_i}$ an arbitrary normalized pure state. 
This quantum query model is the now standard input model for quantum property testing, commonly used in quantum computational complexity \cite{Wat02} and quantum algorithms \cite{gilyen2020distributional}. 

To establish our quantum estimators, we provide a quantum multi-level estimation framework for general functionals of the form $\sum_{i=1}^n f(p_i)$. While building on the partitioning strategy of \cite{shin2025near}, our framework significantly reduces the circuit implementation cost. We introduce the function $g(x) \coloneqq f(x)/x$ to rewrite the target functional as $\sum_{i=1}^n p_i g(p_i)$.

\begin{theorem}[Informal version of \thm{multi-level-ae-1}]\label{thm:framework-intro}
    Suppose $(p_i)_{i=1}^n$ is an unknown discrete distribution. 
    We partition $\interval{0}{1}$ 
    into intervals $\Delta_{m+1}=[0, \varphi_m], \Delta_{m}=(\varphi_{m}, \varphi_{m-1}],\ldots, \Delta_1=(\varphi_1, 1]$, with geometrically decreasing $\varphi_j$.
    For $1 \leq j \leq m$, 
    let $n_j$ be the number of $\sqrt{p_i}$ falling into $\Delta_j$, and let $\mathcal{B}_j$ be an upper bound on $2|g(x^2)|$ over all $x$ in $\Delta_j$ and its adjacent intervals.
    
    If $g(x^2)$ can be approximated by the square of a polynomial of degree $\widetilde{O}(1/\varphi_j)$ on $\Delta_j$ for each $1\le j\le m$ (see \cond{approx-P-j}) and $\sum_{\sqrt{p_i} \in \Delta_{m+1}} p_i \abs{g(p_i)}$ is bounded by $O\rbra{\varepsilon}$ (see \cond{tail-bound}), then
    there is a quantum algorithm that 
    estimates the functional $\sum_{i=1}^n p_i g(p_i)$ to within additive error $\eps$ with high probability, using
    \begin{align*}
        \widetilde{O}\left( \sum_{j=1}^m \left( \frac{\mathcal{B}_j \sqrt{n_j}}{\eps} + \frac{\sqrt{\mathcal{B}_j}}{\varphi_{j}\sqrt{\eps}} \right) \right)
    \end{align*}
    queries to the purified quantum query access oracle $\mathcal{O}_p$ and only $4$ additional qubits beyond the register required for the projected unitary encoding of the distribution $p$. 
\end{theorem}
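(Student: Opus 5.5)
The plan is to build on the standard projected unitary encoding of $p$. From $\mathcal{O}_p$ one obtains a block encoding $U_p$ whose singular values are exactly the $\sqrt{p_i}$, with right singular vectors $\ket{i}$ and left singular vectors $\sqrt{p_i}\ket{i}\ket{\phi_i}$-type states. Preparing $\mathcal{O}_p\ket{\mathbf 0}\ket{\mathbf 0}$ then gives $\sum_i \sqrt{p_i}\ket{\psi_i}$, a superposition over singular vectors. The target is
\[
\sum_i p_i g(p_i)=\sum_{j=1}^{m+1}S_j,\qquad S_j:=\sum_{\sqrt{p_i}\in\Delta_j}p_i g(p_i).
\]
By \cond{tail-bound}, $S_{m+1}$ contributes only $O(\varepsilon)$, so it can be dropped. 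We estimate each $S_j$ for $1\le j\le m$ to accuracy $\varepsilon/m$ (after rescaling $\varepsilon$ by $\log$ factors, which the $\widetilde O$ absorbs) and sum the estimates. A union bound over the $m=O(\log(1/\varepsilon))$ levels, with failure probability $1/(3m)$ per level by median amplification, gives overall success with high probability.

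For a fixed level $j$ I would proceed in three steps.

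\textbf{Step 1.} Apply non-destructive singular value discrimination via QSVT, using a polynomial approximation of the threshold function at $\varphi_j$ and at $\varphi_{j-1}$. Each has degree $\widetilde O(1/\varphi_j)$, since the gap between thresholds scales with $\varphi_j$. This flags, in one ancilla qubit, the components whose $\sqrt{p_i}$ lies in $\Delta_j$. Because the discriminator is imperfect near the boundaries, a flagged component may actually lie in an adjacent interval. This is why $\mathcal{B}_j$ bounds $2|g(x^2)|$ over $\Delta_j$ and its neighbours, and why \cond{approx-P-j} should be read on the slightly enlarged interval.

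\textbf{Step 2.} Conditioned on the flag, apply QSVT with the polynomial $P_j$ of degree $\widetilde O(1/\varphi_j)$ satisfying $P_j(x)^2\approx g(x^2)/\mathcal{B}_j$, suitably bounded by $1$ after normalization. Handling the sign of $g$ takes one more qubit, or a split into positive and negative parts. The amplitude on the good subspace then becomes
\[
a_j=\sum_{\sqrt{p_i}\in\Delta_j} p_i\,g(p_i)/\mathcal{B}_j \pm \text{small error}.
\]
All of this uses a constant number of ancillas: one for the discrimination flag, one for the QSVT signal, and one or two for the combination and sign. That gives the claimed $4$ qubits beyond the encoding register.

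\textbf{Step 3.} Run amplitude estimation on $a_j$ with additive error $\varepsilon/(m\mathcal{B}_j)$. The cost per iteration is $\widetilde O(1/\varphi_j)$. Here a naive count gives $\widetilde O(\mathcal{B}_j/(\varphi_j\varepsilon))$, which is worse than claimed. So the main obstacle is reaching the bound
\[
\frac{\mathcal{B}_j\sqrt{n_j}}{\varepsilon}+\frac{\sqrt{\mathcal{B}_j}}{\varphi_j\sqrt{\varepsilon}}.
\]
The way around this is to split the cost between cheap state preparation and expensive QSVT. The flagged weight satisfies $w_j:=\sum_{\Delta_j}p_i\le n_j\varphi_{j-1}^2=O(n_j\varphi_j^2)$. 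I would first amplify onto the flagged subspace, which costs $\widetilde O(1/(\varphi_j\sqrt{w_j}))$ discrimination queries, and then estimate the relative amplitude. Combining these using the standard error analysis of amplitude estimation, in which the error for amplitude $a$ after $k$ steps is $\sqrt{a}/k+1/k^2$, gives a count of the form
\[
\frac{1}{\varphi_j}\Bigl(\frac{\sqrt{a_j\,\mathcal{B}_j}}{\varepsilon}\cdot\sqrt{\mathcal{B}_j}+\sqrt{\frac{\mathcal{B}_j}{\varepsilon}}\Bigr).
\]
Using $a_j\le w_j\lesssim n_j\varphi_j^2$, the first term becomes $\mathcal{B}_j\sqrt{n_j}/\varepsilon$ and the second becomes $\sqrt{\mathcal{B}_j}/(\varphi_j\sqrt{\varepsilon})$, matching the claim.

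I expect the bookkeeping of this square-root error form to be the delicate part. In particular, amplitude estimation with error $\sqrt{a}/k+1/k^2$ must be applied to $a_j$ directly, without knowing $n_j$ in advance. Making this work also requires bounding the leakage from misclassified boundary components, which the adjacent-interval definition of $\mathcal{B}_j$ is designed to handle.
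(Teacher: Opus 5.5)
Your complexity count in Step 3 matches the paper's. It uses direct amplitude estimation with error $\sqrt{a}/t+1/t^2$, the bound $a_j\lesssim n_j\varphi_{j-1}^2$, and $\varphi_{j-1}/\varphi_{j+1}=\rho^2=O(1)$. No preliminary amplification onto the flagged subspace is needed. Not knowing $a_j$ is handled by the two-stage estimation of \cor{two-stage-ae}: it first spends $O(1/\sqrt{\delta})$ queries to either certify $a_j<\delta$ or obtain a constant-factor estimate.

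\textbf{The gap is in correctness.} Your decomposition into sharp partial sums $S_j$ cannot be realized, and the claim $a_j=\sum_{\sqrt{p_i}\in\Delta_j}p_ig(p_i)/\mathcal{B}_j\pm\text{small}$ is false. A discriminator of degree $\widetilde{O}(1/\varphi_j)$ only separates $\sigma\ge\gamma$ from $\sigma\le\gamma/\rho$ with $\rho$ constant. So every $\sqrt{p_i}$ in a transition region receives an uncontrolled weight in $[0,1]$. The resulting error is of order $\sum_{i\text{ in transition regions}}p_i g(p_i)$, which in general is comparable to the target itself. This is a weighting error, not an approximation error. Reading \cond{approx-P-j} on an enlarged interval, or letting $\mathcal{B}_j$ cover neighbours, does not shrink it; it cannot be bounded away and must cancel between adjacent levels.

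\textbf{The missing idea is a partition of unity across levels.} The paper uses the \emph{same} discriminator $D_j$ (threshold $\varphi_j$, gap ratio $\rho$) as the lower boundary of group $j$ and the upper boundary of group $j+1$. Group $j$ is flagged by ``$D_{j-1}$ reports small and $D_j$ reports large.'' This yields weights $|\beta_{i,j}|^2$ with two properties:
\begin{itemize}
\item \emph{Localization:} $|\beta_{i,j}|^2\approx 0$ unless $\sqrt{p_i}\in(\varphi_{j+1},\varphi_{j-1}]$.
\item \emph{Completeness:} $|\beta_{i,j_i-1}|^2+|\beta_{i,j_i}|^2\approx 1$, because these are $|\langle 0|\xi\rangle|^2$ and $|\langle 1|\xi\rangle|^2$ of one and the same ancilla state produced by $D_{j_i-1}$.
\end{itemize}
Individual levels are then not accurate; only the total $\sum_j\max\{B_j,B_{j+1}\}\tilde v_j$ is. This is exactly why $P_j$ must approximate $g$ on $\Delta_j\cup\Delta_{j+1}$.

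Completeness in turn requires a discriminator acting as $|0\rangle|\phi\rangle\mapsto|\xi(\lambda)\rangle|\phi\rangle$ on \emph{every} eigenvector, including those in the transition region. Only then do the second discriminator and $W_j$ act on an undisturbed system state, with flag weights that are products of $|\langle b|\xi\rangle|^2$.

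Your Step 1, ``non-destructive discrimination via QSVT with a threshold polynomial,'' does not supply this. Plain QSVT thresholding sends the below-threshold part to an unusable garbage state. A compute--flag--uncompute variant does return a flag, but for $\sigma$ in the transition region it leaves a residual component of amplitude of order $|P(\sigma)|\sqrt{1-P(\sigma)^2}$. The two flag weights on the intact component are then $P(\sigma)^4$ and $(1-P(\sigma)^2)^2$, which do not sum to one, and completeness fails.

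The paper instead passes to the Hermitian dilation of $A$. This is why all $P_j$ must share a parity, so that $P_j^2(\pm\sqrt{p_i}/2)$ coincide. It then uses branch marking plus gapped phase estimation on the qubitized walk \cite{low2024quantum} (\thm{nd-svd}), which acts diagonally on the eigenvectors $|\phi_{i,\nu}\rangle$. Its two qubits $\mathsf{a},\mathsf{b}$ (with $\mathsf{a}$ shared with the QSVT ancilla) plus the two flag qubits $\mathsf{c},\mathsf{d}$ are the four extra qubits. Your ancilla count omits whatever the non-destructive step itself needs.
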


As a direct application of \thm{framework-intro}, we obtain a quantum estimator for Shannon entropy with query complexity $O(\sqrt{n}\log^{4.5}(n/\eps)/\varepsilon)$, reproducing the result of \cite{shin2025near} with explicit polylogarithmic factors (see \cor{Shannon-entropy}). 

Our specific main applications of \thm{framework-intro} are a series of quantum estimators for the $q$-Tsallis entropy~\cite{Tsa88}
\[
\Tsa{q}{p} = \frac{1}{1-q} \left( \sum_{i=1}^n p_i^q - 1 \right)
\]
for all $q \in \interval[open]{0}{1} \cup \interval[open]{1}{+\infty}$, where $\TsaF{q}{p} = \sum_{i=1}^n p_i^q$ is the key component to estimate. 
Classical estimators for the $q$-Tsallis entropy have been investigated in \cite{AK01,JVHW15,JVHW17}.
We introduce our quantum estimators below and compare them with prior results in \tab{quantum-query-complexity-tsallis}.

\begin{table}[tb]
\centering
\begin{tabular}{|c|cc|c|cc|}
\hline
 &
  \multicolumn{1}{c|}{$0<q<0.5$} &
  $0.5\le q < 1$ &
  $1<q<1.5$ &
  \multicolumn{1}{c|}{$1.5\le q< 2$} &
  $q > 2$ \\ \hline
\begin{tabular}[c]{@{}c@{}}Previous\\ Upper Bounds\end{tabular}&
  \multicolumn{1}{c|}{\begin{tabular}[c]{@{}c@{}}$\rule{0pt}{20pt}\widetilde{O}\rbra*{\frac{n^{\frac{1}{q}}}{\varepsilon^{\frac{1}{q}}}}$\\ \cite{JVHW15}\end{tabular}} &
  \begin{tabular}[c]{@{}c@{}}\rule{0pt}{20pt}$\widetilde{O}\rbra*{\frac{n^{\frac{1}{q}}}{\varepsilon^{\frac{1}{q}}}+\frac{n^{2-2q}}{\varepsilon^2}}$\\ \cite{JVHW15}\end{tabular} &
  \begin{tabular}[c]{@{}c@{}}\rule{0pt}{20pt}$\widetilde{O}\rbra*{\frac{1}{\varepsilon^{\frac{1}{q-1}}}}$\\ \cite{JVHW15}\end{tabular} &
  \multicolumn{1}{c|}{\begin{tabular}[c]{@{}c@{}}$O\rbra*{\frac{1}{\varepsilon^2}}$\\ \cite{JVHW17}\end{tabular}} &
  \begin{tabular}[c]{@{}c@{}}\rule{0pt}{20pt}$O\rbra*{\frac{1}{\varepsilon^{1+\frac{1}{q-1}}}}$\\ \cite{liu2025on}\end{tabular} \\ \hline
\begin{tabular}[c]{@{}c@{}}Our \\ Upper Bounds\end{tabular} &
  \multicolumn{2}{c|}{\begin{tabular}[c]{@{}c@{}}\rule{0pt}{20pt}$\widetilde{O}\left(\frac{n^{\frac{1}{q}-\frac{1}{2}}}{\varepsilon^{\frac{1}{q}}}\right)$\\
  \thm{1312127}
  \end{tabular}} &
  \multirow{2}{*}{\begin{tabular}[c]{@{}c@{}}\rule{0pt}{10pt}$\widetilde{\Theta}\left(  \frac{1}{\varepsilon^{\frac{1}{2(q-1)}}}\right)$\\
    \thm{Tsallis-entropy-q-greater-than-one}\\
  \thm{lower-bound-q-tsallis}
  \end{tabular}} &
  \multicolumn{2}{c|}{\multirow{2}{*}{\begin{tabular}[c]{@{}c@{}}\rule{0pt}{20pt}$\widetilde{\Theta}\rbra*{\frac{1}{\varepsilon}}$\\
  \thm{Tsallis-entropy-q-greater-than-one}\\
  \thm{lower-bound-q-tsallis}
  \end{tabular}}} \\ \cline{1-3}
\begin{tabular}[c]{@{}c@{}}Our\\ Lower Bounds\end{tabular} &
  \multicolumn{1}{c|}{\begin{tabular}[c]{@{}c@{}}\rule{0pt}{20pt}$\widetilde{\Omega}\left(\frac{n^{\frac{1}{2q}}}{\varepsilon^{\frac{1}{2q}}}\right)$\footnote{This lower bound holds for $\varepsilon \geq \Omega\!\left(n^{\frac{2q^2-2q+1}{1-2q}}\log^{\frac{q}{1-2q}}(n)\right)$.} \\
  \thm{lower-bound-q-tsallis}
  \end{tabular}} &
  \begin{tabular}[c]{@{}c@{}}\rule{0pt}{20pt}$\widetilde{\Omega}\left(\frac{n^{\frac{1}{2q}}}{\varepsilon^{\frac{1}{2q}}} +\frac{n^{1-q}}{\varepsilon}\right)$ \\
  \thm{lower-bound-q-tsallis}
  \end{tabular} &
   &
  \multicolumn{2}{c|}{} \\ \hline
\end{tabular}
\caption{Quantum query complexity of estimating $q$-Tsallis entropy for non-integer $q$.}
\label{tab:quantum-query-complexity-tsallis}
\end{table}

\begin{theorem}[Tsallis entropy estimation for $q\ge 1.5$, \thm{Tsallis-entropy-q-greater-than-one} and \thm{lower-bound-q-tsallis} combined] \label{thm:Tsallis-geq-1.5-intro}
    For any constant $q \ge 1.5$, the $q$-Tsallis entropy $\Tsa{q}{p}$ can be estimated to within additive error $\varepsilon$, with near-optimal quantum query complexity \[ \widetilde{\Theta}\left( \frac{1}{\varepsilon}\right).\]
\end{theorem}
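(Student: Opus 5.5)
The upper bound follows from the multi-level framework of \thm{framework-intro}, taking $f(x)=x^q$ and $g(x)=x^{q-1}$. Estimating $\TsaF{q}{p}=\sum_i p_i\,p_i^{q-1}$ to additive error $O(\varepsilon)$ is enough, because $\Tsa{q}{p}$ is an affine function of it with the constant factor $1/(1-q)$.

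\emph{Step 1: choose the levels.} Take $\varphi_j=2^{-j}$ for $j=1,\dots,m$, and choose $m$ so that the tail condition (\cond{tail-bound}) holds. On $\Delta_{m+1}$ every $p_i$ satisfies $p_i\le\varphi_m^2$. Hence
\[
\sum_{\sqrt{p_i}\in\Delta_{m+1}} p_i\, p_i^{q-1}\le \varphi_m^{2(q-1)}\sum_i p_i\le \varphi_m^{2(q-1)}.
\]
This is at most $\varepsilon$ once $\varphi_m=\varepsilon^{1/(2(q-1))}$, so $m=O(\log(1/\varepsilon))$.

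\emph{Step 2: bound the parameters.} On $\Delta_j$ and its neighbours, $g(x^2)=x^{2(q-1)}$, so we may take $\mathcal{B}_j=O(\varphi_{j-1}^{2(q-1)})=O(4^{-j(q-1)})$. Since $\sum_i p_i\le 1$, the count obeys $n_j\le 1/\varphi_j^2=4^j$. Substituting into the bound of \thm{framework-intro}, the first term is
\[
\frac{\mathcal{B}_j\sqrt{n_j}}{\varepsilon}\lesssim \frac{2^{j}\,4^{-j(q-1)}}{\varepsilon}=\frac{2^{-j(2q-3)}}{\varepsilon},
\]
which is $O(1/\varepsilon)$ for every $j$ when $q\ge 1.5$. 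The second term is
\[
\frac{\sqrt{\mathcal{B}_j}}{\varphi_j\sqrt{\varepsilon}}\lesssim \frac{2^{j(2-q)}}{\sqrt{\varepsilon}}.
\]
For $q\ge 2$ this is $O(1/\sqrt{\varepsilon})$. For $1.5\le q<2$ it is largest at $j=m$, where it equals $\varepsilon^{-\frac{2-q}{2(q-1)}-\frac12}=\varepsilon^{-\frac{1}{2(q-1)}}\le 1/\varepsilon$. Summing over the $m=O(\log(1/\varepsilon))$ levels gives $\widetilde{O}(1/\varepsilon)$.

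\emph{Step 3: polynomial approximation (main obstacle).} It remains to verify \cond{approx-P-j}: on $\Delta_j$ the function $x^{2(q-1)}$ must be approximated by the square of a polynomial of degree $\widetilde{O}(1/\varphi_j)$. Equivalently, we need a polynomial approximating $x^{q-1}$ on $[\varphi_j,1]$ that is bounded by about $\sqrt{\mathcal{B}_j}$ near $\Delta_j$ and remains bounded on $[-1,1]$. My first attempt would be the standard QSVT approach: Taylor-expand $x^{q-1}$ around $1$ on $[\varphi_j/2,1]$, truncate the series, and multiply by a smooth rectangle/threshold polynomial of degree $O(\varphi_j^{-1}\log(1/\varepsilon))$ that localizes to the window $\Delta_{j-1}\cup\Delta_j\cup\Delta_{j+1}$. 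The delicate part is controlling the error relative to the scale $\mathcal{B}_j$, as the condition requires, rather than only in absolute terms. Here I would rescale the variable by $\varphi_{j-1}$ so the approximation takes place on a constant-size interval.

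\emph{Lower bound.} I would reduce from distinguishing two distributions that differ by a $\Theta(\varepsilon)$ shift of mass between two heavy elements. Take $p=(\tfrac12+\delta,\tfrac12-\delta)$ against $(\tfrac12,\tfrac12)$; the difference in $\TsaF{q}{p}$ is $\Theta(\delta^2)$ at this symmetric point, so I would instead use an asymmetric point such as $(\tfrac13,\tfrac23)$, where the difference is linear in $\delta$. Choosing $\delta=\Theta(\varepsilon)$ makes the Tsallis values differ by more than $2\varepsilon$ whenever $q\ne 1$. The two purified states have Hellinger/fidelity distance $\Theta(\delta)$, so the standard quantum query lower bound for distinguishing them (for example via the amplitude-estimation lower bound of Nayak--Wu) gives $\Omega(1/\varepsilon)$ queries.
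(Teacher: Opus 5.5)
Your upper-bound outline follows the paper's own argument (\thm{Tsallis-entropy-q-greater-than-one} via \prop{tsallis-q-greater-poly-approx}). It uses dyadic levels, $m=\Theta(\log(1/\eps)/(q-1))$ fixed by the tail bound, $\mathcal{B}_j\approx 4^{-j(q-1)}$, $n_j\lesssim 4^{j}$, and the same level-by-level sums. Your bound on the second term, $\eps^{-1/(2(q-1))}$, is slightly sharper than the paper's, which just dominates it by the first term.

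\textbf{Gap in Step~3.} You flag this step as the main obstacle but leave it as a plan, even though it is the only non-bookkeeping ingredient. For each $j$ you need a polynomial $P_j$ with these properties: a parity common to all levels (a requirement you never mention); bounded by $1$ on all of $[-1,1]$; approximating the \emph{rescaled} power $h_j(y)=y^{q-1}/(2\varphi_{j-1}^{q-1})$ on the window $[\varphi_{j+1}/2,\varphi_{j-1}/2]$ to additive error about $\eps/(mB_j)$; and of degree $\widetilde{O}(1/\varphi_j)$.

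Your literal first attempt does not deliver this. To be accurate near $\varphi_j$, the Taylor series of $x^{q-1}$ at $1$ must be truncated at a degree $K$ of order $\varphi_j^{-1}\log(1/\eta)$. For non-integer $q$, this truncation reaches size roughly $2^{K}$ on $[-1,0)$. A standard rectangle polynomial with transition width $\Theta(\varphi_j)$ would then need error about $2^{-K}$, i.e.\ degree of order $\varphi_j^{-2}\log(1/\eta)$, to restore \cond{bound-P-j-1}.

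Your rescaling remark points to a repair. Expand about the centre of the window instead: there $K=O(\log(1/\eta))$ suffices, and the truncation grows only like $\varphi_j^{-O(K)}$ on $[-1,1]$, costing one extra $\log(1/\varphi_j)$ factor in the rectangle's degree.

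The paper instead invokes \lem{poly-approx-positive-powers} directly. It takes $P_j=S_{q-1,\varphi_{j+1}/2,\varphi_{j-1}/2,\varepsilon_j}$ with $\varepsilon_j=\eps/(24mB_j)$. This $P_j$ is even and bounded by $1$ on $[-1,1]$. It is $\varepsilon_j$-close to $h_j$ on the window, since $h_j$ is exactly the lemma's target $2^{-c-1}\beta^{-c}y^{c}$ with $c=q-1$ and $\beta=\varphi_{j-1}/2$. Its degree is $O(\varphi_{j+1}^{-1}\log(1/(\varphi_j^{2}\varepsilon_j)))$. Because $B_j$ is chosen so that $B_jh_j^2(x/2)=g(x^2)$, and $|P_j^2-h_j^2|\le 2|P_j-h_j|$ on the window, \cond{approx-P-j} follows with error $2B_j\varepsilon_j\le\eps/(12m)$.

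\textbf{Missing check of \cond{bound-P-j}.} You never verify $B_mP_m^2(x/2)\le C|g(x^2)|$ on $(\varphi_{m+1},\varphi_m]$. The formal \thm{multi-level-ae-1} requires it, even though the informal \thm{framework-intro} suppresses it. On that interval $g(x^2)$ is only of order $\eps$, comparable to the slack $\eps/(12m)$ allowed by \cond{approx-P-j}. Also, the tail allowed by \cond{tail-bound} shrinks as $C$ grows. So this needs a short argument. The paper arranges it by using a finer accuracy $\varepsilon_m$ at the last level, at the price of one more logarithm in $\deg(P_m)$.

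\textbf{Lower bound: a different, valid route.} The paper (\thm{lower-bound-q-tsallis}) quotes the classical minimax rate of Jiao, Venkat, Han, and Weissman, giving sample complexity $\Omega(1/\eps^{2})$ for $q\ge 1.5$. It converts this into $\Omega(1/\eps)$ queries by quantum sample-to-query lifting.

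Your two-point argument is more elementary and self-contained, and it is sound once formalized. The estimator must succeed for every oracle consistent with the distribution. So you may take $\mathcal{O}_p$ and $\mathcal{O}_{p'}$ to be rotations in a fixed plane with $\|\mathcal{O}_p-\mathcal{O}_{p'}\|=O(\delta)$; this is possible because $p_1$ stays bounded away from $0$ and $1$. A hybrid argument then shows that $T$ uses of the oracle, its inverse, or their controlled versions move the final state by $O(T\delta)$, forcing $T=\Omega(1/\delta)=\Omega(1/\eps)$. Alternatively, as you suggest, simulate $\mathcal{O}_p$ with one query to a Boolean oracle and apply the Nayak--Wu counting bound.

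What the lifting route buys is uniformity. The same reduction gives the $n$-dependent bounds for $q<1$ and the stronger $\eps^{-1/(2(q-1))}$ bound for $1<q<1.5$, which a two-point construction on two heavy elements cannot reach. For the $q\ge 1.5$ statement at hand, your argument suffices.
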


\begin{theorem}[Tsallis entropy estimation for $1<q<1.5$, \thm{Tsallis-entropy-q-greater-than-one} and \thm{lower-bound-q-tsallis} combined] \label{thm:Tsallis-1-to-1.5-intro}
    For any constant $1<q<1.5$, 
    the $q$-Tsallis entropy $\Tsa{q}{p}$ can be estimated to within additive error $\varepsilon$, with near-optimal quantum query complexity \[\widetilde{\Theta}\left(  \frac{1}{\varepsilon^{\frac{1}{2(q-1)}}}\right).\]
\end{theorem}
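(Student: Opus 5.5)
The statement combines an upper bound and a matching lower bound, so I will prove the two directions separately.

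\emph{Upper bound.} I apply \thm{framework-intro} to $f(x)=x^q$, so $g(x)=x^{q-1}$ and $g(x^2)=x^{2(q-1)}$. Since $\Tsa{q}{p}=(\TsaF{q}{p}-1)/(1-q)$ and $q$ is a constant, estimating $\TsaF{q}{p}$ to additive error $O(\varepsilon)$ suffices. I choose $\varphi_j=2^{-j}$ and stop at $m$ with $\varphi_m\approx\varepsilon^{1/(2(q-1))}$. The tail condition \cond{tail-bound} then holds: $\sum_{\sqrt{p_i}\le\varphi_m} p_i\cdot p_i^{q-1}\le \varphi_m^{2(q-1)}\sum_i p_i\le\varepsilon$.

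For \cond{approx-P-j}, I need a polynomial $P_j$ of degree $\widetilde{O}(1/\varphi_j)$ with $P_j(x)^2\approx x^{2(q-1)}$ on $\Delta_j$ and its neighbours. This amounts to approximating $x^{q-1}$ on $[\varphi_j/2,1]$. A standard Chebyshev/QSVT approximation of power functions bounded away from zero achieves degree $O(\varphi_j^{-1}\log(1/\eta))$ for accuracy $\eta$, and I take $\eta=\mathrm{poly}(\varepsilon)$.

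Next, $\mathcal{B}_j=O(\varphi_j^{2(q-1)})$. Since $\sqrt{p_i}\ge\varphi_j$ for each $i$ counted in $n_j$ and $\sum_i p_i\le1$, we get $n_j\le\varphi_j^{-2}$. The cost of level $j$ is therefore
\[
\widetilde{O}\left(\frac{\varphi_j^{2q-3}}{\varepsilon}+\frac{\varphi_j^{q-2}}{\sqrt{\varepsilon}}\right).
\]
Summing over the geometric levels $j\le m$ gives the following.
\begin{itemize}
\item For $q\ge1.5$, the first term is $\widetilde{O}(1/\varepsilon)$; the exponent $0$ case contributes a $\log$ factor. The second term is at most $\varphi_m^{q-2}/\sqrt{\varepsilon}=\varepsilon^{\frac{q-2}{2(q-1)}-\frac12}=\varepsilon^{-\frac{1}{2(q-1)}}\le 1/\varepsilon$.
\item For $1<q<1.5$, both terms are dominated by $j=m$. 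The first term is $\varepsilon^{\frac{2q-3}{2(q-1)}-1}=\varepsilon^{-\frac{1}{2(q-1)}}$, and the second matches it.
\end{itemize}
In both cases this yields $\widetilde{O}(1/\varepsilon^{\max\{1,1/(2(q-1))\}})$.

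\emph{Lower bound.} The $\Omega(1/\varepsilon)$ part comes from reduction to amplitude estimation. Take a two-point distribution $(p,1-p)$ with $p$ bounded away from $0$ and $1$. Perturbing $p$ by $\Theta(\varepsilon)$ changes $\TsaF{q}{p}$ by $\Theta(\varepsilon)$, and distinguishing the two requires $\Omega(1/\varepsilon)$ queries.

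For $1<q<1.5$, I place the mass $\delta$ on a single small element, $p=(\delta,1-\delta)$ versus $(\delta',1-\delta')$, or use $k$ elements of mass $\delta$. The contribution $k\delta^{q}$ with $k\delta\approx$ const gives $\delta^{q-1}\approx\varepsilon$, i.e. $\delta\approx\varepsilon^{1/(q-1)}$. Detecting elements of mass $\delta$ costs $\Omega(1/\sqrt{\delta})$ queries, by the standard amplitude-estimation bound or a reduction from approximate counting / collision-type problems. This gives $\Omega(\varepsilon^{-1/(2(q-1))})$. Concretely, I would compare a distribution with a constant-mass lump of $1/\delta$ elements of weight $\delta$ against one where that mass is concentrated on fewer elements. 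The two values of $\TsaF{q}{p}$ differ by about $\delta^{q-1}=\varepsilon$, and distinguishing them via a reduction from a known quantum lower bound for distinguishing $\delta$-granular distributions needs $\widetilde{\Omega}(1/\sqrt{\delta})$ queries.

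The main obstacle is twofold. On the algorithmic side, one must verify the polynomial-approximation condition with the correct squared form and boundedness by $\mathcal{B}_j$. On the lower-bound side, one must construct a rigorous hard instance achieving $1/\sqrt{\delta}$. For the latter I would first try reducing from the quantum lower bound for approximate counting, or from the polynomial method for symmetric functions.
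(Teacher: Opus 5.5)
Your upper-bound outline matches the paper's multi-level route: $g(x)=x^{q-1}$, $\varphi_j=2^{-j}$, $\varphi_m\approx\varepsilon^{1/(2(q-1))}$, $\mathcal{B}_j=\Theta(\varphi_j^{2(q-1)})$, $n_j=O(\varphi_j^{-2})$, and your level-by-level sum is correct. The genuine gap is the lower bound $\widetilde{\Omega}(\varepsilon^{-1/(2(q-1))})$, which is the part that makes the statement a $\widetilde{\Theta}$. (Your two-point $\Omega(1/\varepsilon)$ argument is fine, but it is dominated here since $\frac{1}{2(q-1)}>1$.)

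The fact that detecting an atom of mass $\delta$ costs $\Omega(1/\sqrt{\delta})$ applies to a single atom. A single atom changes $\sum_i p_i^q$ by only $\delta^q=\varepsilon\delta\ll\varepsilon$. Shifting the functional by $\varepsilon$ needs constant mass at scale $\delta$, i.e.\ about $1/\delta$ atoms. Distinguishing $1/\delta$ atoms of weight $\delta$ from $1/(2\delta)$ atoms of weight $2\delta$ is then a collision/support-size problem, not a detection problem, and the known bounds for it are much weaker:
\begin{itemize}
\item the collision lower bound transfers only $\Omega(\delta^{-1/3})$;
\item lifting the birthday-paradox sample bound gives only $\Omega(\delta^{-1/4})$;
\item for the natural purified oracle built from a 1-to-1 versus 2-to-1 function (from which the function itself can be queried), the Brassard--H{\o}yer--Tapp algorithm distinguishes your two distributions with $O(\delta^{-1/3})=O(\varepsilon^{-1/(3(q-1))})$ queries.
\end{itemize}
So the ``known quantum lower bound for $\delta$-granular distributions'' at $1/\sqrt{\delta}$ that you invoke is not available, and your instance as proposed cannot give the claimed bound.

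The missing idea, used in \thm{lower-bound-q-tsallis}, is the quantum sample-to-query lifting theorem: a classical sample lower bound $\Omega(S)$ yields an $\Omega(\sqrt{S})$ query lower bound in the purified model. Apply it to the minimax bound $S=\Omega(\varepsilon^{-1/(q-1)}/\log(1/\varepsilon))$ of Jiao, Venkat, Han, and Weissman for $1<q<1.5$. That classical bound is nearly linear in $1/\delta$, far above the $\delta^{-1/2}$ birthday bound, because it comes from moment-matching priors rather than a granularity instance. This is exactly what makes $\sqrt{S}\approx\delta^{-1/2}$ attainable.

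A smaller, repairable issue: your polynomial step contradicts your choice of $\mathcal{B}_j$. With $\max\{B_j,B_{j+1}\}=\Theta(\varphi_j^{2(q-1)})$, \cond{approx-P-j} forces $|P_j(y)|\approx\Theta(\varphi_j^{-(q-1)})\,y^{q-1}$. This exceeds $1$ once $y\gg\varphi_j$, so an approximation on all of $[\varphi_j/2,1]$ violates \cond{bound-P-j-1}. If you instead approximate the unnormalized $x^{q-1}$ there, you are forced to take $B_j=\Theta(1)$. Then, in the worst case $n_j=\Theta(\varphi_j^{-2})$, the first term is $\varphi_j^{-1}/\varepsilon$ per level, summing to $\varepsilon^{-1-1/(2(q-1))}$. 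What you need is a windowed approximation: accurate only on $[\varphi_{j+1}/2,\varphi_{j-1}/2]$ and bounded by $1$ on $[-1,1]$. That is \lem{poly-approx-positive-powers}, which the paper uses in \prop{tsallis-q-greater-poly-approx}, with a smaller accuracy at level $m$ so that \cond{bound-P-j} also holds.
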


To the best of our knowledge, \thm{Tsallis-geq-1.5-intro} and \thm{Tsallis-1-to-1.5-intro} present the first near-optimal quantum estimators for the (parameterized) $q$-entropy of probability distributions for \textit{non-integer} $q$.\footnote{Near-optimal quantum estimators for the $q$-entropy of probability distributions are known for integer $q$ in certain cases, e.g., the $2$-R\'enyi entropy ($q = 2$) \cite{li2018quantum}, the $q$-Tsallis entropy for integer $q\ge 2$ \cite{Wan25}, and the Shannon entropy ($q = 1$) \cite{shin2025near}.} 
It is known that the sample complexity of estimating the $q$-Tsallis entropy is $\Theta(1/\varepsilon^2)$ for $q \geq 1.5$ \cite{JVHW17} and $\widetilde{\Theta}(1/\varepsilon^{\frac{1}{q-1}})$ for $1 < q < 1.5$ \cite{JVHW15}. 
In comparison, \thm{Tsallis-geq-1.5-intro} and \thm{Tsallis-1-to-1.5-intro} achieve a strict quantum advantage over the corresponding classical estimators. 
Moreover, our results also improve the previous best query complexity $O(1/\varepsilon^{1+\frac{1}{q-1}})$ for $q > 1.5$ due to \cite{liu2025on}.\footnote{In \cite{liu2025on}, they considered the estimation of the $q$-Tsallis entropy of quantum states, which is a more general task. Even though, their query complexity remains unchanged when the quantum states degenerate to probability distributions.} 

\begin{theorem}[Tsallis entropy estimation for $0<q<1$, \thm{1312127} restated]\label{thm:232230}
For a constant $0<q<1$, and a probability distribution $p$ of size $n$, the $q$-Tsallis entropy $\Tsa{q}{p}$ can be estimated to within additive error $\varepsilon$, with quantum query complexity 
\[\widetilde{O}\left(\frac{n^{\frac{1}{q}-\frac{1}{2}}}{\varepsilon^{\frac{1}{q}}}\right).\]
\end{theorem}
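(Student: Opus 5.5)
The plan is to apply the multi-level framework \thm{framework-intro} (formally \thm{multi-level-ae-1}) with $f(x)=x^q$, so that $g(x)=x^{q-1}$ and $g(x^2)=x^{2q-2}$. An additive $\varepsilon(1-q)$-estimate of $\TsaF{q}{p}=\sum_i p_i g(p_i)$ gives an $\varepsilon$-estimate of $\Tsa{q}{p}$, and $q$ is a constant. It therefore suffices to estimate $\TsaF{q}{p}$ to within $\Theta(\varepsilon)$.

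\textbf{Choice of levels and the tail.} I would take geometric thresholds $\varphi_j=2^{-j}$ for the values $\sqrt{p_i}$. The last index $m$ is chosen so that $\varphi_m=\Theta((\varepsilon/n)^{1/(2q)})$, giving $m=O(\log(n/\varepsilon))$. The tail condition \cond{tail-bound} is then immediate:
\[
\sum_{\sqrt{p_i}\le\varphi_m} p_i^q \le n\varphi_m^{2q}=O(\varepsilon).
\]
For $1\le j\le m$, a point of $\Delta_j$ or of an adjacent interval satisfies $x\ge\varphi_{j+1}=\varphi_j/2$. Since $2q-2<0$, we may take $\mathcal{B}_j=O(\varphi_j^{2q-2})$. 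Moreover, $n_j\le\min\{n,\varphi_j^{-2}\}$, because $\sum_i p_i=1$ and each $p_i$ counted in $n_j$ exceeds $\varphi_j^2$.

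\textbf{Cost accounting.} For the first term,
\[
\frac{\mathcal{B}_j\sqrt{n_j}}{\varepsilon}\lesssim\frac{\varphi_j^{2q-2}\min\{\sqrt n,\varphi_j^{-1}\}}{\varepsilon}.
\]
In the regime $\varphi_j\le 1/\sqrt n$ this is $\varphi_j^{2q-2}\sqrt n/\varepsilon$, which increases as $\varphi_j$ decreases. At $\varphi_m$ it equals
\[
(\varepsilon/n)^{1-1/q}\sqrt n/\varepsilon=n^{\frac1q-\frac12}/\varepsilon^{\frac1q}.
\]
In the regime $\varphi_j>1/\sqrt n$ the term is at most $n^{\frac32-q}/\varepsilon$. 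When $\varepsilon\le n^{1-q}$, a short exponent comparison shows this is dominated by $n^{\frac1q-\frac12}/\varepsilon^{\frac1q}$. When $\varepsilon>n^{1-q}$, the levels below $1/\sqrt n$ are absent. In that case I would simply cap $\varphi_m\le 1/\sqrt n$, or use the trivial bound, and check that the claimed bound still dominates.

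For the second term,
\[
\frac{\sqrt{\mathcal{B}_j}}{\varphi_j\sqrt\varepsilon}\lesssim\frac{\varphi_j^{q-2}}{\sqrt\varepsilon},
\]
which is largest at $j=m$ and equals
\[
(\varepsilon/n)^{\frac{q-2}{2q}}\varepsilon^{-1/2}=n^{\frac1q-\frac12}/\varepsilon^{\frac1q}.
\]
Both sums over $j$ are geometric, or carry at most $m=O(\log(n/\varepsilon))$ terms. The total is therefore $\widetilde{O}(n^{\frac1q-\frac12}/\varepsilon^{\frac1q})$.

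\textbf{Main obstacle: the approximation condition.} The remaining task is to verify \cond{approx-P-j}. We need a polynomial $P_j$ of degree $\widetilde{O}(1/\varphi_j)$, bounded by $1$ on $[-1,1]$, such that $\mathcal{B}_jP_j(x)^2$ approximates $x^{2q-2}$ on $\Delta_j$ and its neighbours to the required accuracy. Equivalently, $P_j(x)$ should approximate
\[
\tfrac{1}{\sqrt{\mathcal{B}_j}}\,x^{q-1}=\Theta\bigl((x/\varphi_j)^{q-1}\bigr),
\]
a negative power that is bounded by a constant on $[\varphi_j/2,1]$.

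I would first try the known QSVT-style approximation of negative powers $x^{-c}$ on $[\delta,1]$, as in the work of Gily\'en et al. That gives degree $O(\delta^{-1}\log(1/\eta))$ with $\delta=\varphi_j/2$. I would combine it with the even/odd symmetrization and a mild rescaling, so that $|P_j|\le1$ holds on all of $[-1,1]$ rather than only on $[\delta,1]$.

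The precision $\eta$ must be polynomial in $\varepsilon/n$. The reason is that the approximation errors, weighted by $p_i$ and summed over all $i$, must total $O(\varepsilon)$; they are multiplied by $\mathcal{B}_j$, which can be as large as $\mathrm{poly}(n/\varepsilon)$. This costs only a logarithmic factor in the degree, consistent with the $\widetilde{O}$ in the statement. Checking the exact normalization and error form demanded by \cond{approx-P-j} is the step I expect to need the most care.
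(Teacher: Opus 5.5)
Your proposal is correct and follows essentially the same route as the paper: the paper also takes $\varphi_j=2^{-j}$ with $2^{-2mq}=\Theta(\varepsilon/n)$ and $B_j=4\cdot 2^{2j(1-q)}$. It also uses the even negative-power polynomials of \lem{poly-approx-negative-powers}, with $\delta_j=\varphi_{j+1}/2$ (the extra factor $1/2$ comes from evaluating $P_j$ at $x/2$) and precision $\Theta(\varepsilon/(mB_{j+1}))$, and then sums the two cost terms as you do. Two small simplifications: your regime split is unnecessary, because $n_j\le n$ alone gives $\mathcal{B}_j\sqrt{n_j}/\varepsilon\lesssim\varphi_m^{2q-2}\sqrt{n}/\varepsilon$ for every $j$ (the paper does this via Cauchy--Schwarz with $\sum_j n_j\le 2n$). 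Also, the hypothesis you did not mention, \cond{bound-P-j}, follows from \cond{approx-P-j} with $C=2$, since $g(x^2)=x^{2q-2}\ge 1$.
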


\thm{232230} demonstrates a quantum speedup over the near-optimal classical estimator~\cite{JVHW15} for $q$-Tsallis entropy with sample complexity $\widetilde{\Theta}\!\left(\frac{n^{1/q}}{\varepsilon^{1/q}}+\frac{n^{2-2q}}{\varepsilon^2}\right)$ for $0<q<1$.
We also present a quantum query lower bound $\widetilde{\Omega}(n^{1/2q})$ (see \thm{lower-bound-q-tsallis} for details), suggesting that there is only room for sub-quadratic speedup on the distribution size $n$.
We note that when $q=1-o(1)$, \thm{232230} achieves optimal dependence on $n$ up to a quasi-polynomial factor.

Moreover, given that an estimate of $q$-Tsallis entropy is always an estimate of $q$-R\'enyi entropy to the same precision when $0 < q < 1$, \thm{232230} also implies an improved quantum query complexity of estimating the $\alpha$-R\'enyi entropy in certain regimes, combined with the results in {\cite[Theorem 1]{wang2024quantum}}. 

\begin{corollary} \label{cor:renyi-intro}
For a constant $0<\alpha<1$, and a probability distribution $p$ of size $n$, the $\alpha$-R\'enyi entropy $\Ren{\alpha}{p} = \frac{1}{1-\alpha} \log \left( \sum_{i=1}^n p_i^\alpha  \right)$ can be estimated to within additive error $\varepsilon$, with quantum query complexity 
\[\widetilde{O}\left(\min\left\{\frac{n^{\frac{1}{2\alpha}}}{\varepsilon^{\frac{1}{2\alpha}+1}}, \frac{n^{\frac{1}{\alpha}-\frac{1}{2}}}{\varepsilon^{\frac{1}{\alpha}}}\right\}\right).\]
\end{corollary}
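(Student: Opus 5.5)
The corollary follows by running two estimators and keeping whichever is cheaper for the given $n$ and $\varepsilon$. The first is the quantum $\alpha$-R\'enyi estimator of {\cite[Theorem 1]{wang2024quantum}}, which for constant $0<\alpha<1$ has query complexity $\widetilde{O}(n^{\frac{1}{2\alpha}}/\varepsilon^{\frac{1}{2\alpha}+1})$. The second is obtained from \thm{232230} (i.e., \thm{1312127}) by converting an additive estimate of the Tsallis entropy into one of the R\'enyi entropy. Since both $n$ and $\varepsilon$ are known in advance, we evaluate both bounds and run the algorithm with the smaller one, which gives the stated minimum.

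For the conversion, write $F = \TsaF{\alpha}{p} = \sum_i p_i^\alpha$. For $0<\alpha<1$ we have $F\ge 1$, because $p_i^\alpha \ge p_i$ for $p_i\in[0,1]$. We also have $\Ren{\alpha}{p} = \frac{1}{1-\alpha}\log F$ and $\Tsa{\alpha}{p} = \frac{F-1}{1-\alpha}$.

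Run the algorithm of \thm{232230} with accuracy $\varepsilon' = \varepsilon/c$ for a suitable constant $c$. This returns $\tilde T$ with $|\tilde T - \Tsa{\alpha}{p}|\le \varepsilon'$, and hence $\tilde F = 1+(1-\alpha)\tilde T$ satisfies $|\tilde F - F|\le (1-\alpha)\varepsilon'$. We then output $\frac{1}{1-\alpha}\log\max\{\tilde F,1\}$. Clipping at $1$ can only reduce the error, since $F\ge 1$.

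The function $x\mapsto \log x$ is $1$-Lipschitz on $[1,\infty)$, so
\[
\left|\frac{1}{1-\alpha}\log\max\{\tilde F,1\} - \Ren{\alpha}{p}\right| \le \frac{1}{1-\alpha}\,(1-\alpha)\varepsilon' = \varepsilon' \le \varepsilon .
\]
This is exactly the paper's remark that a Tsallis estimate is a R\'enyi estimate to the same precision. The success probability carries over unchanged, and the constant rescaling of $\varepsilon$ is absorbed into $\widetilde{O}$. This branch therefore costs $\widetilde{O}(n^{\frac1\alpha-\frac12}/\varepsilon^{\frac1\alpha})$.

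The only step that needs care is the Lipschitz argument. It relies on $F\ge1$, which is where the restriction $0<\alpha<1$ is used. For $\alpha>1$ we would have $F\le 1$, and $F$ can be as small as $n^{1-\alpha}$, so the logarithm would amplify the additive error. Everything else is a direct invocation of the two earlier results.
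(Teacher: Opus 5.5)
Your proposal is correct and follows the paper's argument: the paper notes that for $0<\alpha<1$ an additive Tsallis estimate gives a R\'enyi estimate of the same precision, because $\sum_i p_i^\alpha \ge 1$ and $\log$ is Lipschitz on $[1,\infty)$, and then takes the better of this estimator and the one from \cite[Theorem 1]{wang2024quantum}. One small nit: if $\log$ is base $2$, its Lipschitz constant on $[1,\infty)$ is $1/\ln 2$ rather than $1$, and your rescaling constant $c$ already absorbs this.
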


\cor{renyi-intro} improves the quantum query complexity $\widetilde{O}\!\left(\frac{n^{\frac{1}{2\alpha}}}{\varepsilon^{\frac{1}{2\alpha}+1}}\right)$ in \cite{wang2024quantum} when $\varepsilon\leq O(n^{\frac{1-\alpha}{1-2\alpha}})$ and $0.5< \alpha < 1$.

\subsection{Techniques}

\paragraph{Limitations of prior approaches.} Given access to the distribution $p$ via the oracle $\mathcal{O}_p$, \cite{gilyen2020distributional} constructed a projected unitary encoding of a matrix $A$ with singular value decomposition $A = \sum_{i=1}^n \sqrt{p_i}\ketbra{\tilde{\psi}_i}{\psi_i}$. Conventional algorithms for estimating the functional $\sum_{i=1}^n p_i g(p_i)$ rely on a single polynomial $P$ such that $P^2(\sqrt{p_i})\approx g(p_i)$ over the entire domain. Applying the quantum singular value transformation (QSVT) of $A$ with the polynomial $P$ to the state $\sum_{i=1}^n \sqrt{p_i}\ket{\psi_i}$ yields a state with squared norm $\sum_{i=1}^n p_i P^2(\sqrt{p_i})\approx \sum_{i=1}^n p_i g(p_i)$, which can be estimated via amplitude estimation. This approach often yields suboptimal query complexity for functions with singularities, such as various entropy measures, where the local behavior of $g(x)$ varies significantly.
For example, when estimating the $q$-Tsallis entropy $\Tsa{q}{p}$, the function $g\rbra{x} = x^{q-1}$ is generally not smooth for any non-integer $q$.

To address this limitation, \cite{wang2024quantum} proposed a framework based on variable-time amplitude estimation (VTAE)~\cite{chakraborty2019power}. By approximating $g(p_i)$ with distinct polynomials across different intervals, the complexity depends on the average polynomial degree rather than purely on the maximum one. Recently, \cite{shin2025near} achieved near-optimal complexity for Shannon entropy estimation ($g(x) = -\log(x)$) using a different approach. This algorithm shares the same spirit as VTAE by approximating $-\log(x)$ with different polynomials on different intervals, but avoids the nested amplitude amplifications used in VTAE. Instead, it uses QSVT with threshold polynomials to partition the values $\sqrt{p_i}$ into $m$ intervals, and then applies amplitude estimation to the contribution of each interval separately.

While enabling fine-grained polynomial approximation, both algorithms in \cite{wang2024quantum,shin2025near} suffer from the high control and ancilla overhead inherent to variable-time quantum algorithms~\cite{ambainis2012variable}. Partitioning the values $\sqrt{p_i}$ into $m$ intervals requires applying a sequence of $m$ QSVT-based thresholding steps. Specifically, the $j$-th thresholding step is designed to select those $\sqrt{p_i}$ lying above a threshold such as $2^{-j}$, and the outcome is stored in an ancilla qubit. To ensure that each $\sqrt{p_i}$ is counted exactly once, the $j$-th unitary is applied conditioned on the ancilla qubits from all preceding steps, so that it acts only on the subspace rejected by the previous $j-1$ steps. This creates a deep sequence of multi-qubit controlled operations, imposing a heavy control overhead. Moreover, a state rejected by the thresholding is mapped to a garbage state, which cannot be reused by the subsequent thresholding step. Consequently, a fresh copy of the input state is required for every thresholding step, leading to a large ancilla overhead. 

\paragraph{Our approach: a resource-efficient multi-level framework.} In \algo{multi-level-ae}, we combine two ideas to obtain query-efficient algorithms while reducing control and ancilla overhead. First, we partition the values $\sqrt{p_i}$ into $m$ groups, such that applying only \emph{two} QSVT-based thresholding steps, corresponding to the lower and upper boundaries, suffices to isolate the values in any single group. A key challenge is that due to the discontinuity of the threshold function, only a ``soft'' thresholding can be implemented, leading to an uncertain transition region. This stands in contrast to arithmetic-based methods, such as those used in quantum mean estimation~\cite{Mon15, hamoudi2019quantum, hamoudi2021quantum,cornelissen2022near}, where values stored in quantum registers can be partitioned exactly using coherent comparators. To address this, we interpret this soft thresholding as assigning a probability weight to each value for every group. In \thm{multi-level-ae-1}, we show that, up to a tolerable error, \begin{enumerate}
    \item (\emph{Localization}) the values in each group lie within an interval where the ratio of the maximum to the minimum value is bounded by a constant, and
    \item (\emph{Completeness}) the probability weights of $\sqrt{p_i}$ sum to one over all groups.
\end{enumerate} The first condition enables us to use a polynomial that approximates the target function for $\sqrt{p_i}$ only within a local region, which allows for a significantly lower polynomial degree. The second condition ensures that the sum of the estimates over all groups correctly approximates the functional. Second, we apply the gapped phase estimation with branch marking technique in \cite{low2024quantum} to implement the thresholding non-destructively. This allows the initial state to be preserved rather than collapsing into a garbage state as in prior thresholding implementations. Consequently, the state can be reused for the second thresholding step, reducing the additional ancilla requirement to a constant. 

To apply branch marking, which requires a Hermitian input operator, we construct a projected unitary encoding of a Hermitian matrix $H$ with spectral decomposition 
\[
H = \sum_{i=1}^n \frac{\sqrt{p_i}}{2}\big(\ketbra{\phi_{i,+1}}{\phi_{i,+1}}- \ketbra{\phi_{i,-1}}{\phi_{i,-1}}\big).
\]
We can prepare the initial state \[
\sum_{i=1}^n \sqrt{\frac{p_i}{2}}\big(\ket{\phi_{i,+1}}+\ket{\phi_{i,-1}}\big)
\]
using one query to $\mathcal{O}_p$. For the $j$-th group, we implement two thresholding steps to isolate the target values using constant ancilla qubits. Given the local property from the first condition, we apply the QSVT of $H$ with a polynomial $P_j$ such that $P_j^2(\sqrt{p_i}/2) \approx g(p_i)$ for $\sqrt{p_i}$ in the corresponding region. As QSVT polynomials have definite parity, we have $P_j^2(-\sqrt{p_i}/2) = P_j^2(\sqrt{p_i}/2) \approx g(p_i)$, which ensures that both positive and negative eigenvalues contribute identically to the final estimate.

For clarity, we provide a high-level overview of our approach in \algo{informal-multi-level}.  A complete description of the procedure is deferred to \algo{multi-level-ae} in \sec{main-algo}.
\begin{algorithm}[htb]
\caption{Informal Overview of Multi-Level Estimation (see \algo{multi-level-ae} for the formal procedure)}
\label{algo:informal-multi-level}
\Input{
    Unitary oracle $\mathcal{O}_p$ for distribution $p$; target function $g$; precision $\eps$; 
    decreasing ratio $\rho$, number of levels $m$, and local bounds $\{B_j\}$.
}
\Output{
    Estimate of $\sum_{i=1}^n p_i g(p_i)$.
}
\BlankLine
$S \leftarrow 0$\;
Define thresholds $\varphi_j = \rho^{-j}$ for $j = 0, \dots, m+1$, which partition $[0, 1]$ into intervals $(\varphi_1, \varphi_0], \dots, (\varphi_m, \varphi_{m-1}]$ and a residual interval $[0, \varphi_m]$\;

\For{$j \leftarrow 1$ \KwTo $m$}{
    Apply non-destructive singular value discriminators to coherently isolate the amplitudes $\sqrt{p_i} \in (\varphi_{j+1}, \varphi_{j-1}]$ into a target subspace\;
    
    Apply QSVT with a local polynomial $P_j$ satisfying $P_j^2(\sqrt{p_i}/2) \approx g(p_i)$ for amplitudes within the interval $(\varphi_{j+1}, \varphi_{j-1}]$\;
    
    Perform quantum amplitude estimation to measure the squared norm of the state in the target subspace, obtaining the partial estimate $\tilde{v}_j$\;
    
    $S \leftarrow S + \max\{B_j, B_{j+1}\} \tilde{v}_j$, where $B_j$ is the upper bound of $|g(x^2)|$ for $x\in(\varphi_j, \varphi_{j-1}]$\;
}
\Return{$S$}\;
\end{algorithm}

\subsection{Open questions}
Our paper leaves several natural open questions for future investigation:
\begin{itemize}
\item The bounds we obtain for Tsallis entropy estimation in the regime 
$0<q<1$ are not optimal. Is it possible to derive improved upper or lower bounds on the corresponding quantum query complexity?
For example, can we show better lower bounds using other lower bound techniques such as the polynomial method~\cite{BBC+01}, adversary method~\cite{Amb02}, and compressed oracle method~\cite{Zha19}?
More broadly, can the idea of multi-level estimation be applied to design optimal quantum algorithms for $\ell_{\alpha}$ closeness testing or for estimating other notions of relative entropy between discrete distributions?

\item There are results on the complexity of entropy estimation of quantum states~\cite{AISW20,gilyen2020distributional,GHS21,SH21,wang2024quantum,wang2024new,WZ25b,liu2025on,CW25}.
The technique of~\cite{gilyen2020distributional} can be used to reduce the estimation of quantum entropies to the estimation of entropies of discrete probability distributions~\cite{wang2024quantum}. 
It remains an open question whether our quantum multi-level estimation technique can be applied in this setting, or adapted appropriately, to obtain optimal algorithms for estimating quantum entropies.
A potential difficulty is that the eigenbasis of the quantum state is unknown.

\item Our multi-level amplitude estimation shares several conceptual similarities with variable-time amplitude amplification and estimation algorithms~\cite{ambainis2012variable, chakraborty2019power,AKV23,low2024quantum}. Are there deeper or more formal connections between these approaches? Moreover, given the broad range of applications of variable-time amplitude amplification, it is natural to ask whether our multi-level amplitude estimation framework can be extended to, or applied in, more general settings.

\end{itemize}

\section{Preliminaries}
\subsection{Notations}
Unless otherwise stated, all operator approximations are measured in the spectral norm $\|\cdot\|$, and state approximations are measured in the Euclidean norm $\|\cdot\|_2$. We use $\ket{\mathbf{0}}$ to denote the all-zero state $\ket{0}^{\otimes k}$ on a quantum register, where the number of qubits $k$ is clear from the context.

\subsection{Quantum subroutines}

In this subsection, we review the basic quantum subroutines that serve as building blocks for our algorithms.

\begin{theorem}[Amplitude estimation \cite{BHMT02}]
\label{thm:ae}
    There is a quantum algorithm that takes as input a unitary $U$, an orthogonal projector $\Pi$, and a positive integer $t$. The algorithm outputs an estimate $\tilde{a}$ of $a = \|\Pi U\ket{0}\|^2$, such that \begin{align*}
        |\tilde{a}-a|\le \frac{2\pi\sqrt{a(1-a)}}{t} + \frac{\pi^2}{t^2}
    \end{align*}
    with probability at least $8/\pi^2$, using $2t+1$ calls to (controlled) $U$ and $U^{\dagger}$, and $t$ calls to the reflection $I-2\Pi$.
\end{theorem}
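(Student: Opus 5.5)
The statement immediately above is amplitude estimation (\thm{ae}), a standard result of \cite{BHMT02}. I would reproduce the original Brassard--H{\o}yer--Mosca--Tapp argument. Let $\ket{\psi} = U\ket{0}$ and decompose it as $\ket{\psi} = \sin\theta_a \ket{\psi_1} + \cos\theta_a\ket{\psi_0}$, where $\ket{\psi_1} \propto \Pi\ket{\psi}$, $\ket{\psi_0} \propto (I-\Pi)\ket{\psi}$, and $\sin^2\theta_a = a$ with $\theta_a \in [0,\pi/2]$. Define the Grover iterate
\[
Q = -U(I-2\ketbra{0}{0})U^\dagger (I-2\Pi).
\]
On the two-dimensional invariant subspace $\mathrm{span}\{\ket{\psi_0},\ket{\psi_1}\}$, $Q$ acts as a rotation by angle $2\theta_a$. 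Its eigenvectors there are $\ket{\psi_\pm}$ with eigenvalues $e^{\pm 2\i\theta_a}$, and $\ket{\psi}$ is an equal-weight superposition of $\ket{\psi_+}$ and $\ket{\psi_-}$. This is the standard $2\times 2$ computation.

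Next I would run phase estimation with $t$ time steps on $Q$ with input $\ket{\psi}$. This means preparing $\frac{1}{\sqrt t}\sum_{k=0}^{t-1}\ket{k}\ket{\psi}$, applying $\sum_k \ketbra{k}{k}\otimes Q^k$, then an inverse QFT on $\mathbb{Z}_t$, and measuring $y$. The output is $\tilde a = \sin^2(\pi y/t)$. For the query count, preparing $\ket{\psi}$ costs one call to $U$. The controlled powers $Q^k$ up to $k=t-1$ need at most $t$ applications of $Q$ (via binary powers, or with $t$ a power of two), and each $Q$ uses one $U$, one $U^\dagger$ and one reflection $I-2\Pi$. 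This gives $2t+1$ calls to $U,U^\dagger$ and $t$ reflections, matching the statement.

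The main work, and the step I expect to be the obstacle, is the error analysis of phase estimation for a non-grid phase. Because $\ket{\psi}$ splits evenly between the eigenphases $\pm\theta_a/\pi$, and both branches yield the same value $\sin^2(\pi y/t)$ by symmetry, it suffices to treat a single eigenvector with phase $\omega = \theta_a/\pi$. I would show that the measured $y$ satisfies $|y/t - \omega| \le 1/t$ (modulo 1) with probability at least $8/\pi^2$. The argument writes the outcome distribution as a Fej\'er-type kernel $\frac{\sin^2(t\pi\delta)}{t^2\sin^2(\pi\delta)}$ and sums the two nearest grid points: with $\Delta\in[0,1/t]$ the distance to the lower neighbor, the combined mass is at least $\frac{\sin^2(t\pi\Delta)}{t^2}\big(\frac{1}{\sin^2(\pi\Delta)}+\frac{1}{\sin^2(\pi(1/t-\Delta))}\big)\ge 8/\pi^2$, minimized at $\Delta = 1/(2t)$.

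Finally, I would convert the angle error $|\tilde\theta-\theta_a| \le \pi/t =: \eta$ into an amplitude error. Writing $\sin^2(\theta_a+\eta')-\sin^2\theta_a = \sin(2\theta_a)\sin\eta' + \cos(2\theta_a)\sin^2\eta'$ with $|\eta'|\le\eta$ gives
\[
|\tilde a-a| \le 2\sqrt{a(1-a)}\,\eta+\eta^2 = \frac{2\pi\sqrt{a(1-a)}}{t}+\frac{\pi^2}{t^2}.
\]
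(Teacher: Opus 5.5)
Your proposal is correct and is exactly the Brassard--H{\o}yer--Mosca--Tapp argument (Grover iterate, phase estimation with the $8/\pi^2$ bound on the two nearest grid points, then conversion from angle error to amplitude error); the paper gives no proof of its own and simply cites that work. Two small slips do not affect the conclusion: the identity should read $\sin^2(\theta_a+\eta')-\sin^2\theta_a=\sin(2\theta_a)\sin\eta'\cos\eta'+\cos(2\theta_a)\sin^2\eta'$ (your bound survives since $|\cos\eta'|\le 1$), and for $t$ not a power of two you should build the controlled powers $Q^k$, $k<t$, from $t-1$ applications of $Q$, the $j$-th controlled on $k\ge j$, because binary powers can cost nearly $2t$ applications and break the $2t+1$ count.
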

To estimate $a$ to within a given additive error $\eps$, we propose the following two-stage algorithm that first decides whether $a \le \eps$ and then determines $t$ accordingly. 
\begin{corollary}
    \label{cor:two-stage-ae}
    Let $U$ be a unitary and $\Pi$ be an orthogonal projector. Let $a = \|\Pi U\ket{0}\|^2$. For any precision $\eps \in(0,1]$ and failure probability $\eta \in (0, 1)$, there exists a quantum algorithm that outputs an estimate $\tilde{a}$ satisfying $|\tilde{a}-a| \le \eps$ with probability at least $1-\eta$, using 
    \begin{align*}
        O\left( \left(\frac{\sqrt{a}}{\eps} + \frac{1}{\sqrt{\eps}}\right) \log(\frac{1}{\eta}) \right)
    \end{align*}
    queries to (controlled) $U$, $U^\dagger$, and $I-2\Pi$.
\end{corollary}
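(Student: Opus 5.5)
We prove \cor{two-stage-ae} with a two-stage procedure built on \thm{ae}: a cheap first stage that tests whether $a$ is small relative to $\eps$, followed by a second stage whose number of iterations $t$ is tuned to the scale of $a$. Each stage is boosted to the desired confidence by a median trick over $O(\log(1/\eta))$ independent runs.

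\textbf{Stage one.} Run \thm{ae} with $t_0=\lceil c/\sqrt{\eps}\rceil$ for a suitable constant $c$. Take the median $\hat a$ of $O(\log(1/\eta))$ independent runs. By a Chernoff bound, with probability at least $1-\eta/2$ more than half the runs satisfy the guarantee of \thm{ae}, which has individual success probability $8/\pi^2>1/2$. Hence the median satisfies $|\hat a-a|\le 2\pi\sqrt{a}/t_0+\pi^2/t_0^2\le (2\pi/c)\sqrt{a\eps}+(\pi^2/c^2)\eps$.

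Choosing $c$ large makes this error at most $\tfrac14(a+\eps)$. This uses $2\sqrt{a\eps}\le a+\eps$, i.e.\ AM--GM. It follows that $\hat a$ is a constant-factor proxy for $\max\{a,\eps\}$: if $a\ge\eps$ then $\hat a\in[a/2,3a/2]$, and otherwise $\hat a\le 2\eps$. Set $\bar a:=\max\{\hat a,\eps\}$. Then $\bar a=\Theta(\max\{a,\eps\})$, and in particular $\sqrt{a}\le C\sqrt{\bar a}$ for a constant $C$. If $\eps$ is already comparable to $\bar a$, stage one alone might not give error $\eps$ (the constants need not be below $1$), so we always proceed to stage two.

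\textbf{Stage two.} Run \thm{ae} again with $t_1=\lceil c'(\sqrt{\bar a}/\eps+1/\sqrt{\eps})\rceil$, and take the median of $O(\log(1/\eta))$ runs. On the success event, the error is at most $2\pi\sqrt a/t_1+\pi^2/t_1^2$. The first term is at most $2\pi C\sqrt{\bar a}\,\eps/(c'\sqrt{\bar a})=O(\eps/c')$. The second term is at most $\pi^2\eps/c'^2$. Choosing $c'$ a large enough constant gives error at most $\eps$. A union bound over the two stages gives overall failure probability at most $\eta$.

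\textbf{Cost.} The total cost is $O((t_0+t_1)\log(1/\eta))$ calls to (controlled) $U$, $U^\dagger$ and $I-2\Pi$. Since $\sqrt{\bar a}=O(\sqrt a+\sqrt\eps)$ on the success event of stage one, this equals $O\big((\sqrt a/\eps+1/\sqrt\eps)\log(1/\eta)\big)$.

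One subtlety is that $t_1$ is random. On the failure event of stage one, $\bar a\le 1$ still bounds $t_1$ by $O(1/\eps)$. We therefore state the bound as holding on the success event; alternatively, we can cap $t_1$ using $\hat a\le 1$.

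The main point requiring care is stage one's guarantee that $\bar a$ upper-bounds $a$ up to a constant without overshooting $\max\{a,\eps\}$. This follows from the inequality $2\pi\sqrt{a}/t_0+\pi^2/t_0^2\le \tfrac14(a+\eps)$ for $t_0=\Theta(1/\sqrt\eps)$. The remaining steps are routine median amplification.
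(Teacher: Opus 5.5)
Your proposal is correct and takes essentially the same route as the paper: a coarse amplitude estimation with $t=\Theta(1/\sqrt{\eps})$, followed by a second run with $t=\Theta(\sqrt{\tilde a}/\eps+1/\sqrt{\eps})$, each median-boosted to confidence $1-\eta/2$ and combined by a union bound. The only difference is cosmetic: the paper branches on whether the first estimate is at most $3\eps/4$ and outputs $0$ in that case, justified by a monotonicity argument, whereas you use AM--GM to get a uniform bound and always run stage two with $\bar a=\max\{\hat a,\eps\}$.
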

\begin{proof}
    Applying the amplitude estimation algorithm in \thm{ae} with $t_1 =\pi \sqrt{80/\eps}$, the output $\tilde{a}_1$ satisfies \begin{align}
    \label{eq:error-ae-1}
        |\tilde{a}_1-a|\le 2\sqrt{\frac{\eps}{80}}\sqrt{a}+ \frac{\eps}{80} = \frac{\sqrt{\eps}}{2\sqrt{5}}\sqrt{a}+\frac{\eps}{80},
    \end{align}
    with probability at least $8/\pi^2$. By repeatedly running $O(\log(1/\eta))$ times and taking the median of the outputs, the success probability can be boosted to $1-\eta/2$.
    
    If $\tilde{a}_1 \le 3\eps/4$, \eq{error-ae-1} implies \begin{align*}
        g(a):= a-\frac{3\eps}{4}-\frac{\sqrt{\eps}}{2\sqrt{5}}\sqrt{a}-\frac{\eps}{80} \le 0.
    \end{align*}
    Observe that $g(a)$ is monotonically decreasing for $a\in [0, \eps/80]$ and increasing for $a\in [\eps/80, \infty)$, and \begin{align*}
        g(\eps) = \frac{19\eps}{80}-\frac{\eps}{2\sqrt{5}} = \frac{(19-8\sqrt{5})}{80}\eps > 0.
    \end{align*}
    Consequently, we have $g(a)\ge 0$ for all $a \ge \eps$. Combining this with the condition $g(a) \le 0$, we conclude that $a < \eps$. Then, we output $0$, which is an $\eps$-approximation of $a$. 
    
    If $\tilde{a}_1 > 3\eps/4$, \eq{error-ae-1} implies \begin{align*}
        h(a):=a - \frac{3\eps}{4} + \frac{\sqrt{\eps}}{2\sqrt{5}}\sqrt{a}+\frac{\eps}{80} >0.
    \end{align*} Since $h(a)$ is monotonically increasing for $a \ge 0$ and \begin{align*}
        h\left(\frac{\eps}{2}\right) = -\frac{19\eps}{80}+\frac{\eps}{2\sqrt{10}} = \frac{(4\sqrt{10}-19)}{80}\eps <0,
    \end{align*}
    we have $h(a) <0$ for all $a \le \eps/2$. Combining this with the condition $h(a) > 0$, we conclude that $a > \eps/2$. Then, \eq{error-ae-1} implies \begin{align*}
        |\tilde{a}_1-a|\le \frac{\sqrt{\eps}}{2\sqrt{5}}\sqrt{a}+\frac{\eps}{80} < \frac{1}{\sqrt{10}} a+ \frac{1}{40}a < \frac{1}{2}a.
    \end{align*}
    This implies $0.5a < \tilde{a}_1 < 1.5a$.
    Finally, we run the amplitude estimation algorithm again with success probability boosted to $1-\eta/2$ using \begin{align*}
        t_2 = \max\left\{\frac{4\pi\sqrt{2\tilde{a}_1}}{\eps} , \sqrt{\frac{2}{\eps}}\pi\right\} = O\left(\frac{\sqrt{a}}{\eps} +\frac{1}{\sqrt{\eps}}\right)
    \end{align*} steps. The output $\tilde{a}_2$ satisfies \begin{align*}
        |\tilde{a}_2-a|\le 2\pi \frac{\sqrt{a}}{t_2} + \frac{\pi^2}{t_2^2}\le  \sqrt{\frac{a}{8\tilde{a}_1}} \eps + \frac{\eps}{2} <  \sqrt{\frac{a}{4a}} \eps + \frac{\eps}{2} =\eps. 
    \end{align*}
    By the union bound, the overall success probability is at least $1-\eta$.
\end{proof}

\subsection{Quantum singular value transformation}

A \emph{projected unitary encoding} $(\widetilde{\Pi}, \Pi, U)$~\cite{gilyen2019quantum} of a matrix $A\in \mathbb{C}^{d\times d}$ is a unitary $U$ and two orthogonal projections $\Pi, \widetilde{\Pi}$ such that \begin{align*}
    A = \widetilde{\Pi} U \Pi.
\end{align*} Let \begin{align*}
    A = \sum_{i = 1}^{d} \sigma_i \ketbra{\tilde{\psi_i}}{\psi_i}
\end{align*} be the \emph{singular value decomposition} (SVD) of $A$. Given a scalar function $f$, the singular value transformation of $A$ is defined by applying $f$ to its singular values.  \begin{definition}[Singular value transformation]
    Let $f:\mathbb{R}\to \mathbb{C}$ be an even or odd function and $A\in \mathbb{C}^{d\times d}$ be a matrix with SVD $ A = \sum_{i = 1}^{d} \sigma_i \ketbra{\tilde{\psi_i}}{\psi_i}$. Define the singular value transformation of $A$ corresponding to $f$ as \begin{align*}
        f^{(\mathrm{SV})}(A) := \begin{cases}
            \sum_{i=1}^d f(\sigma_i) \ketbra{\tilde{\psi_i}}{\psi_i} & \text{ for odd } f, \text{ and}\\
            \sum_{i=1}^d f(\sigma_i) \ketbra{\psi_i}{\psi_i}& \text{ for even }f.
        \end{cases}
    \end{align*}
\end{definition}
Given a projected unitary encoding of $A$, we can implement the singular value transformation of $A$ corresponding to a bounded polynomial with definite parity. 
\begin{theorem}[{\cite[Corollary 11]{gilyen2019quantum}}]
\label{thm:qsvt}
    Suppose that $P\in \mathbb{R}[x]$ is a degree-$m$ polynomial satisfying \begin{itemize}
        \item $P$ has parity-$(m \mod 2)$ and 
        \item for all $x\in[-1,1]$: $|P(x)|\le 1$.
    \end{itemize}
    Then there exists a unitary $U^{(\mathrm{SV})}_P$ such that \begin{align}
    P^{(\mathrm{SV})}\big(\widetilde{\Pi} U \Pi\big) \nonumber 
    =  \begin{cases}
    \big(\langle+| \otimes \widetilde{\Pi}\big) U^{(\mathrm{SV})}_P \big(|+\rangle \otimes \Pi\big) & \text { if } m \text { is odd,} \\
    \big(\langle+| \otimes \Pi\big) U^{(\mathrm{SV})}_P \big(|+\rangle \otimes \Pi\big) & \text { if } m \text { is even. }\end{cases}
  \end{align}
    The unitary $U^{(\mathrm{SV})}_P$ can be implemented using $O(m)$ elementary gates, controlled reflections $I-2\Pi, I-2\widetilde{\Pi}$, and $O(m)$ queries to controlled-$U$ and $U^{\dagger}$.
\end{theorem}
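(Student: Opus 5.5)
We would prove \thm{qsvt} in the standard way. First reduce to single-qubit quantum signal processing (QSP) by means of Jordan's lemma. Then obtain the real polynomial $P$ by averaging a complex QSP polynomial with its complex conjugate, and implement that average as a linear combination of unitaries controlled by the $\ket{+}$ ancilla.

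\textbf{Step 1 (Jordan decomposition).} Write $A=\widetilde{\Pi}U\Pi=\sum_i\sigma_i\ketbra{\tilde\psi_i}{\psi_i}$. For each $i$, the spaces $\mathrm{span}\{\ket{\psi_i},U^\dagger\ket{\tilde\psi_i}\}$ and $\mathrm{span}\{\ket{\tilde\psi_i},U\ket{\psi_i}\}$ are each at most $2$-dimensional, and $U$ maps the first onto the second. Restricted to these subspaces, $U$ is (in suitable bases) the rotation
\[
\begin{pmatrix}\sigma_i & \sqrt{1-\sigma_i^2}\\ \sqrt{1-\sigma_i^2} & -\sigma_i\end{pmatrix}.
\]
On the same subspaces, the projected phase operators $e^{\i\phi(2\Pi-I)}$ and $e^{\i\phi(2\widetilde{\Pi}-I)}$ act as $\mathrm{diag}(e^{\i\phi},e^{-\i\phi})$. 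Any vectors outside the span of the singular vectors contribute nothing to the projected block. Consequently, the alternating sequence
\[
U_\Phi = e^{\i\phi_1(2\widetilde{\Pi}-I)}U\,e^{\i\phi_2(2\Pi-I)}U^\dagger\cdots
\]
(ending with $U$ for odd $m$, or with $U^\dagger$ for even $m$) satisfies the following. Its projected block $\widetilde{\Pi}U_\Phi\Pi$ (odd $m$), respectively $\Pi U_\Phi\Pi$ (even $m$), equals $P_\Phi^{(\mathrm{SV})}(A)$. Here $P_\Phi$ is the top-left entry of the corresponding $2\times2$ QSP product. Each phase gate is built from one controlled reflection about $\Pi$ or $\widetilde{\Pi}$ plus a single-qubit $Z$-rotation, so the gate and query counts are $O(m)$.

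\textbf{Step 2 (QSP characterization).} We then invoke the QSP characterization: achievable top-left entries are exactly the pairs of polynomials $(P_{\mathbb{C}},Q)$ with degrees at most $m$ and $m-1$, parities $m$ and $m-1$ modulo $2$, and
\[
|P_{\mathbb{C}}(x)|^2+(1-x^2)|Q(x)|^2=1 \quad \text{on } [-1,1].
\]
Given real $P$ with parity $m \bmod 2$ and $|P|\le1$, we need a complex $P_{\mathbb{C}}$ with $\mathrm{Re}\,P_{\mathbb{C}}=P$ on $[-1,1]$ that admits such a completion. We take $P_{\mathbb{C}}=P+\i S$, where $S$ is a real polynomial of the same parity with $P^2+S^2\le1$. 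Simply choosing $S=0$ is enough. Then $1-P(x)^2\ge0$ is a real polynomial, nonnegative on $[-1,1]$, and we must write it as $(1-x^2)|Q(x)|^2$ with $Q$ of the right parity. This is a Fej\'er--Riesz / Luk\'acs-type sum-of-squares decomposition, obtained by pairing the roots of $1-P^2$ appropriately. The parity constraints require care at $x=\pm1$: after any constant rescaling, $1-P^2$ must vanish there to the needed order. We expect this completion, together with the inductive proof that the phases $\Phi$ exist (peeling off one layer at a time by a degree-reduction argument), to be the main technical obstacle.

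\textbf{Step 3 (taking the real part).} Replacing $\Phi$ by $-\Phi$ yields $P_{\mathbb{C}}^*$, because complex-conjugating all phases conjugates the real QSP matrices. Define
\[
U^{(\mathrm{SV})}_P=\ketbra{0}{0}\otimes U_\Phi+\ketbra{1}{1}\otimes U_{-\Phi}.
\]
This needs only phase rotations controlled by the ancilla, since the $U$ and $U^\dagger$ calls are shared between the two branches. Sandwiching with $\bra{+}\cdot\ket{+}$ gives $\tfrac12\bigl(P_{\mathbb{C}}+P_{\mathbb{C}}^*\bigr)^{(\mathrm{SV})}=P^{(\mathrm{SV})}(A)$, with the correct left and right projectors in the odd and even cases. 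This proves the claimed identity using $O(m)$ queries to controlled-$U$ and $U^\dagger$, the controlled reflections, and elementary gates.
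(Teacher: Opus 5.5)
The paper does not prove this statement; it imports it from \cite[Corollary 11]{gilyen2019quantum}. Your overall architecture is the standard proof given there: Jordan-lemma reduction to QSP, a complex QSP polynomial whose real part is $P$, and the average of $U_\Phi$ and $U_{-\Phi}$ selected by the $\ket{+}$ ancilla. However, Step 2 contains a genuine error: the choice $S=0$ does \emph{not} work. With $P_{\mathbb{C}}=P$ you would need $1-P(x)^2=(1-x^2)\lvert Q(x)\rvert^2$. This forces $1-P^2$ to vanish at $x=\pm1$, i.e.\ $\lvert P(\pm1)\rvert=1$. No amount of care at the endpoints fixes this, because every achievable QSP entry satisfies $\lvert P_\Phi(\pm1)\rvert=1$: at $x=\pm1$ your signal matrix is $\pm\mathrm{diag}(1,-1)$, so the whole product is a diagonal unitary. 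The hypothesis $\lvert P\rvert\le1$ does not give $\lvert P(\pm1)\rvert=1$. Already $P(x)=x/2$ is a counterexample, as are the polynomials the paper actually feeds into this theorem (e.g.\ those of \lem{poly-approx-negative-powers} have $P(1)\approx\delta^c/2<1$). Rescaling is not allowed either, since the statement demands exactly $P^{(\mathrm{SV})}$. Note also that if $S=0$ were admissible, $P_\Phi$ would already be real and your Step 3 would be pointless. The averaging is needed precisely because $\mathrm{Im}\,P_{\mathbb{C}}$ must in general be nonzero, with $S(\pm1)^2=1-P(\pm1)^2$.

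The missing idea is the completion $1-P^2=S^2+(1-x^2)\lvert Q\rvert^2$, with $S$ real of parity $m \bmod 2$ and $Q$ of parity $(m-1)\bmod 2$. Since $P^2$ is even, $a(y):=1-P(\sqrt{y})^2$ is a polynomial in $y$, and it is nonnegative on $[0,1]$ exactly because $\lvert P\rvert\le1$ on $[-1,1]$. The Markov--Luk\'acs theorem then supplies what you need:
\begin{itemize}
\item for $m=2k$, write $a=\beta^2+y(1-y)\gamma^2$ with $\deg\beta\le k$ and $\deg\gamma\le k-1$, and set $S=\beta(x^2)$, $Q=x\gamma(x^2)$;
\item for $m=2k+1$, write $a=y\beta^2+(1-y)\gamma^2$ with $\deg\beta,\deg\gamma\le k$, and set $S=x\beta(x^2)$, $Q=\gamma(x^2)$.
\end{itemize}
These have the required degrees and parities, so $P_{\mathbb{C}}=P+\i S$ is an achievable QSP entry. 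The rest of your argument then goes through as written: existence of the phases by layer peeling, the identity $P_{-\Phi}=P_\Phi^*$ for your real signal matrices, and the $\ket{+}$-controlled average.
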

\subsection{Polynomial approximations}

We recall the following lemmas about polynomial approximations of power functions.

\begin{lemma}[Polynomial Approximations of Negative Power Functions,~\cite{gilyen2019quantum}]%
\label{lem:poly-approx-negative-powers}
    For $\delta, \varepsilon\in \interval[open left]{0}{\frac{1}{2}}$ and $c >0$, let $g(x)\coloneqq \frac{\delta^c}{2} x^{-c}$. Then, there exists an even polynomial $P_{c,\delta, \varepsilon}\in \mathbb{R}[x]$, such that 
    \begin{equation*}
        \begin{aligned}
            \abs{P_{c,\delta,\varepsilon}(x) - g(x)}&\le \varepsilon, \forall x\in \interval{\delta}{1}, \\
               \abs{P_{c,\delta,\varepsilon}(x)}&\le 1, \forall x\in \interval{-1}{1}.
        \end{aligned}
    \end{equation*}
    Moreover, the degree of $P_{c,\delta,\varepsilon}(x)$ is $O((c+1)\log (1/\varepsilon)/\delta)$.
\end{lemma}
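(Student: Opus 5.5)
The plan is to build $P_{c,\delta,\varepsilon}$ in two stages: approximate the \emph{analytic} function $g$ on the region away from the origin, then damp the result near the origin so that it stays bounded on all of $[-1,1]$. First we reduce to an even target. Consider $h(x) = \frac{\delta^c}{2}(x^2)^{-c/2}$, which is even and agrees with $g$ on $[\delta,1]$. It therefore suffices to construct an even polynomial approximating $h$ on $[\delta,1]$ that is bounded by $1$ on $[-1,1]$. Evenness can always be enforced at the end by symmetrizing, $P(x)\mapsto \frac{1}{2}(P(x)+P(-x))$; this preserves both the error bound on $[\delta,1]$ and the bound on $[-1,1]$.

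\textbf{Step 1 (Taylor approximation).} Substitute $y = 1 - x^2 \in [0, 1-\delta^2]$, so that $x^{-c} = (1-y)^{-c/2}$. Expand with the binomial series $\sum_k \binom{-c/2}{k}(-y)^k$. Every coefficient is nonnegative, and the $k$-th term is roughly $k^{c/2-1}y^k$. Truncating at degree $K = O\big((c+1)\log(1/(\varepsilon\delta))/\delta^2\big)$ in $y$ gives error $\le \varepsilon/2$, uniformly for $y\le 1-\delta^2$, after multiplying by $\delta^c/2$. As a polynomial in $x$ this has degree $O(1/\delta^2)$, which is too large.

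To get the claimed $O(1/\delta)$ degree, use Chebyshev approximation of the analytic function on $[\delta,1]$ instead. The function $x^{-c}$ extends analytically to a Bernstein ellipse around $[\delta,1]$ of parameter $1+\Theta(\delta)$, and on that ellipse it is bounded by $\delta^{-c}$ up to constants. Chebyshev truncation then gives error $\delta^{-c}(1+\Theta(\delta))^{-d}$. Multiplying by $\delta^c/2$ removes the $\delta^{-c}$ factor and leaves $(c+1)$-type constants, so degree $d = O((c+1)\log(1/\varepsilon)/\delta)$ suffices. This matches the known route in \cite{gilyen2019quantum}: approximate on $[\delta/2,1]$ or similar by a function $\tilde g$ that is smooth on the whole interval $[-1,1]$.

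\textbf{Step 2 (boundedness).} The polynomial from Step 1 may blow up on $[-\delta,\delta]$. To fix this, multiply by an even polynomial approximation $R$ of the rectangle function: $R\approx 1$ on $[\delta,1]$, $R\approx 0$ on $[0,\delta/2]$, $|R|\le 1$, with degree $O(\log(1/\varepsilon)/\delta)$. Such an $R$ comes from the standard erf-based sign approximation.

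We must also control the size of the Step 1 polynomial near $0$. Work with the target on $[\delta/2,1]$, where $|g|\le 2^c/2$. It is more convenient to follow \cite{gilyen2019quantum}'s Corollary 66 directly, which handles an analytic function bounded on an enlarged neighborhood, i.e.\ a Taylor series in the variable $(x-x_0)$ whose coefficients have bounded $\ell_1$-norm, and which internally performs this rectangle multiplication. Applying it gives a polynomial $\varepsilon$-close to $g$ on $[\delta,1]$ with $|P|\le 1$ on $[-1,1]$. Final normalization uses the fact that $|g|\le 1/2$ on $[\delta,1]$, which leaves room for the $\varepsilon\le 1/2$ error.

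\textbf{Main obstacle.} The hardest part is the uniform bound $|P|\le 1$ on the transition region $[\delta/2,\delta]$ and on $[-1,1]$ overall, while keeping the degree linear in $1/\delta$ with only a $(c+1)$ factor. I would handle it by invoking the ``bounded function on a neighborhood'' corollary of \cite{gilyen2019quantum}. This requires checking that the Taylor coefficients of $\frac{\delta^c}{2}x^{-c}$ around $x_0=1$, evaluated with radius $1-\delta/2$, have $\ell_1$-norm $O(2^c)$ after rescaling. That norm contributes only a $\log$ factor of order $c$ to the degree, which yields the $(c+1)$ dependence.
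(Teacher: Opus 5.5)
\textbf{Gap: the boundedness step fails for large $c$.} The paper gives no proof of this lemma; it cites \cite{gilyen2019quantum}, where the result is derived from exactly the bounded low-weight Taylor-series corollary you end up invoking, centered at $x_0=1$ with $r=1-\delta$. So your final route is the right one. However, your parameters break precisely the step you flag as the main obstacle.

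With enlarged radius $1-\delta/2$ (margin $\delta'=\delta/2$, $B=2^{c-1}$), that corollary bounds $|P|$ on $[-1,1]$ only by $\varepsilon$ plus the supremum of $f$ over the half-margin below $\delta$. That is, $|P|\le\varepsilon+\tfrac12(4/3)^c$, which exceeds $1$ once $c>\log_{4/3}2\approx 2.4$. This is not mere slack in the bound. On the window $[3\delta/4,\delta]$ the output is essentially $f$ times a smooth cutoff that is still close to $1$ just below $\delta$. Meanwhile $f(\delta(1-s))=\tfrac12(1-s)^{-c}\ge\tfrac12 e^{cs}$ is already at least $e^2/2$ at $s=2/c$ for large $c$. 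Your ``final normalization'' remark concerns only $[\delta,1]$, and $g$ has a fixed scale ($g(\delta)=1/2$), so nothing can be rescaled. Your choice does work for $c$ up to about $2$, in particular for $c=1-q<1$, the only case this paper uses. But the lemma is stated for all $c>0$.

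The fix is a margin proportional to $\delta/\max\{1,c\}$, e.g.\ $\delta'=\delta/(2\max\{1,c\})$. Then $B=\tfrac{\delta^c}{2}(\delta-\delta')^{-c}\le 1$, and the supremum of $f$ on the window $[\delta-\delta'/2,\delta]$ is at most $2/3$. Running the corollary with error $\varepsilon'=\min\{\varepsilon/2,1/12\}$ gives degree $O(\tfrac{1}{\delta'}\log\tfrac{B}{\varepsilon'})=O(\tfrac{\max\{1,c\}}{\delta}\log\tfrac{1}{\varepsilon})$. So the $(c+1)$ factor comes from the $1/\delta'$ prefactor, not from $\log B$ as you claim. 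The transition window must have relative width $O(1/c)$ so that $x^{-c}$ grows by at most a constant factor across it.

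\textbf{Two smaller issues.} First, your symmetrization $\tfrac12(P(x)+P(-x))$ preserves the error on $[\delta,1]$ only if $P$ already approximates the even extension on $[-1,-\delta]$. The corollary's output (centered at $x_0=1$) is instead $\varepsilon'$-small on $[-1,\delta-\delta'/2]$, so averaging would give about $g/2$ on $[\delta,1]$. Use $P(x)+P(-x)$ instead, taking real parts of coefficients if needed. For each $x$, one of $\pm x$ lies in $[-1,0]$. Hence the error on $[\delta,1]$ is at most $2\varepsilon'\le\varepsilon$, and $|P(x)+P(-x)|\le 2/3+2\varepsilon'<1$ on $[-1,1]$.

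Second, the Step~1/Step~2 combination you sketch first would not work. The Chebyshev truncation of $x^{-c}$ on $[\delta,1]$ has alternating-sign coefficients and is of size $e^{\Omega(d)}$ near $x=-1$, which is exactly where an even rectangle $R$ is close to $1$. That is why one needs the low-weight Taylor construction: it yields a polynomial bounded by about $B$ on all of $[-1,1]$ before the cutoff is applied.
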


\begin{lemma}[Polynomial Approximations of Positive Power Functions,~{\cite[Lemma 6]{wang2024quantum}}]
\label{lem:poly-approx-positive-powers}
    For any $c > 0$, $\beta \in (0,1)$, $\nu \in (0,\beta)$, and $\eta \in \left(0,\tfrac{1}{2}\right)$, let
$ g(x) = 2^{-c-1}\beta^{-c} x^{c}$.
There exists an efficiently computable even polynomial $S_{c,\nu,\beta,\eta} \in \mathbb{R}[x]$ of degree $
O\!\left(\frac{c+1}{\nu}\log\!\left(\frac{1}{\beta \nu \eta}\right)\right) $,
such that
\begin{equation*}
    \begin{aligned}
\forall x \in [0,\nu] &:\quad |S(x)| \le 2g(x), \\
\forall x \in [\nu,\beta] &:\quad |g(x) - S(x)| \le \eta, \\
\forall x \in [-1,1] &:\quad |S(x)| \le 1 .
\end{aligned}
\end{equation*}
\end{lemma}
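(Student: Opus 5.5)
The plan is to obtain the required relative control near the origin by factoring out an even monomial whose order exceeds $c$, and then to reduce everything else to \lem{poly-approx-negative-powers} together with a standard even window (rectangle) polynomial. Let $2k$ be the smallest even integer with $2k \ge c$, and set $a := 2k - c \in [0,2)$. If $a = 0$ the monomial part below is exact and the same argument applies with $Q \equiv 1/2$ (and $\nu^{-a} = 1$), so assume $a > 0$.

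\textbf{Step 1 (core polynomial).} Apply \lem{poly-approx-negative-powers} with exponent $a$, cutoff $\delta = \nu$ and precision $\eta'$ to be fixed later. This gives an even polynomial $Q$ with $|Q| \le 1$ on $[-1,1]$ and $|Q(x) - \tfrac{\nu^a}{2} x^{-a}| \le \eta'$ on $[\nu,1]$. Its degree is $O((a+1)\log(1/\eta')/\nu)$. Define
\[
T(x) := C\, x^{2k} Q(x), \qquad C := 2^{-c}\beta^{-c}\nu^{-a},
\]
so that on $[\nu,1]$ we have $T(x) \approx C\tfrac{\nu^a}{2}x^{c} = g(x)$. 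Three bounds on $T$ follow.

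\begin{itemize}
\item On $[0,\nu]$, using $|Q| \le 1$ and $(x/\nu)^a \le 1$:
\[
|T(x)| \le C x^{2k} = 2^{-c}\beta^{-c} x^c (x/\nu)^a \le 2g(x).
\]
This is exactly the first requirement, and it is the reason for the monomial factor: a bare polynomial could not achieve relative error down to $0$ when $c$ is non-integer.
\item On $[\nu,\beta]$:
\[
|T - g| \le C\beta^{2k}\eta' = 2^{-c}\beta^{a}\nu^{-a}\eta'.
\]
Choosing $\eta' = \Theta(\eta\,(\nu/\beta)^{a})$ makes this at most $\eta/2$. Since $a < 2$, $\log(1/\eta') = O(\log(1/(\beta\nu\eta)))$.
\item On $[\beta,1]$: $|T(x)| \le C x^{2k}$, which can be as large as about $(2\beta)^{-c}\nu^{-a}$.
\end{itemize}

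\textbf{Step 2 (restoring boundedness).} I expect this to be the main obstacle. Multiply by an even window polynomial $W$ with $|W| \le 1$ on $[-1,1]$, $|W - 1| \le \eta''$ on $[-\beta,\beta]$, and $|W| \le \eta''$ on $[-1,-2\beta] \cup [2\beta,1]$. Such a $W$ is the standard rectangle or threshold construction of \cite{gilyen2019quantum}, of degree $O(\log(1/\eta'')/\beta)$, which is $O(\log(1/\eta'')/\nu)$ since $\nu < \beta$. Take $\eta'' = \Theta(\eta\,\nu^{a}(2\beta)^{c})/2$, which is still polylogarithmic in $1/(\beta\nu\eta)$, and set $S := TW$.

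\begin{itemize}
\item On $[2\beta,1]$: $|S| \le C x^{2k}\eta'' \le 1$.
\item On $[\beta,2\beta]$: $T$ still tracks $g$ up to $O(\eta)$ and $g(x) \le 2^{-c-1}\beta^{-c}(2\beta)^c = 1/2$, so $|S| \le 1/2 + O(\eta) \le 1$.
\item On $[\nu,\beta]$: $|S - g| \le |T - g| + |T|\eta'' \le \eta$.
\item On $[0,\nu]$: $|W| \le 1$ preserves $|S| \le 2g$.
\end{itemize}

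One must also check the negative half-line; it follows from evenness of $S$. The degree is $2k + \deg Q + \deg W = O(\tfrac{c+1}{\nu}\log\tfrac{1}{\beta\nu\eta})$, as claimed.

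The step needing the most care is Step 2. The window must be tuned so that $W$ is uniformly within $\eta''$ of $1$ up to $\beta$ while suppressing the polynomially large factor $C x^{2k}$ beyond $2\beta$. The constants must be balanced so that the transition region $[\beta,2\beta]$ stays below $1$ in absolute value. If a fixed factor-2 transition region turns out too tight, I would widen the window to $[\beta,3\beta]$ and rescale.
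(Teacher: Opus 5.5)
The paper gives no proof of this lemma; it imports it from \cite[Lemma~6]{wang2024quantum}. Your argument is therefore a self-contained substitute rather than a variant of an in-paper proof, and it is correct.

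What your route buys is transparency. It reduces the statement to \lem{poly-approx-negative-powers} plus a standard rectangle polynomial, and it shows where each term of the degree bound comes from:
\begin{itemize}
\item The factor $x^{2k}$ is forced: any even $S$ with $|S|\le 2g$ near $0$ must vanish to order at least $2k$ there.
\item $Q$ costs $O(\log\frac{1}{\nu\eta}/\nu)$.
\item The window must suppress $Cx^{2k}\le (2\beta)^{-c}\nu^{-a}$ at resolution $\beta>\nu$. This is exactly where $c\log\frac1\beta$, and hence the factor $c+1$, enters.
\end{itemize}
The key design choice is approximating $x^{-a}$ on $[\nu,1]$ without rescaling. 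Then $|Q|\le 1$ on all of $[-1,1]$, and the only growth beyond $2\beta$ comes from the explicit monomial, so the window needs only polynomial precision. A rescaled approximant $S_0(x/(2\beta))$ would grow like $\beta^{-\deg S_0}$ outside $[-2\beta,2\beta]$ and cost an extra $\log\frac1\beta$ factor.

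Three details to tidy.

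\textbf{Choice of $\eta'$.} Fix $\eta'=\kappa\eta(\nu/\beta)^a$ with an absolute constant, say $\kappa=1/8$. Since $C(2\beta)^{2k}=(2\beta/\nu)^a$, you get $|T-g|\le 2^a\kappa\eta<\eta/2$ on $[\beta,2\beta]$. Hence $|S|\le 1/2+1/4$ there, and no widening of the window is needed. Do not take the largest $\eta'$ that your $[\nu,\beta]$ bound permits, namely $2^{c-1}\eta(\nu/\beta)^a$. That choice gives only $2^{2k-1}\eta$ on $[\beta,2\beta]$, which can exceed $1/2$.

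\textbf{The case $\nu>1/2$.} \lem{poly-approx-negative-powers} requires $\delta\le 1/2$, so $\delta=\nu$ is not allowed here. Use $\delta=\nu/2$ instead. On $[\delta,\nu]$, obtain $|T|\le 2g$ from the relative error bound $|T-g|\le 2g(x)(x/\delta)^a\eta'\le 8\eta' g(x)$.

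\textbf{When the window is needed.} The window is only needed when $\beta<1/2$. In that case the rectangle polynomial with centre $3\beta/2$ and half-width $\beta/2$ is admissible. For $\beta\ge 1/2$, take $W\equiv 1$. Boundedness on $[\beta,1]$ then follows from $g\le 1/2$ there together with $|T-g|\le C\eta'\le 2^a\kappa\eta$.
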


\section{Multi-Level Amplitude Estimation}

In this section, we describe our multi-level estimation framework for the functional $\sum_{i=1}^n p_i g(p_i)$ of a discrete distribution $p$, given quantum query access via a unitary oracle $\mathcal{O}_p$.

We perform the estimation by manipulating a projected unitary encoding of a matrix $A$, which encodes the distribution probabilities in its singular values. Following the construction in \cite{gilyen2020distributional}, we aim to construct $A$ with the singular value decomposition:
\begin{align*} 
    A = \sum_{i=1}^n \frac{\sqrt{p_i}}{2}\ketbra{\tilde{\psi}_i}{\psi_i} = \sum_{i=1}^n \frac{\sqrt{p_i}}{2}\ketbra{i}{i}\otimes \ketbra{\mathbf{0}}{i} \otimes \ketbra{\mathbf{0}}{\phi_i}\otimes  \ketbra{0}{0}.
\end{align*}
To implement this encoding, we use a system of four registers. Let $R$ be a single-qubit unitary that maps $\ket{0} \mapsto \frac{1}{2} \ket{0} + \frac{\sqrt{3}}{2} \ket{1}$. This projected unitary encoding is realized by the tuple $(\widetilde{\Pi}, \Pi,U)$ defined as
\begin{align} \label{eq:def-be}
    \widetilde{\Pi} = I\otimes \ketbra{\mathbf{0}}{\mathbf{0}} \otimes \ketbra{\mathbf{0}}{\mathbf{0}} \otimes \ketbra{0}{0}, \quad
    U =  I \otimes \mathcal{O}_p^{\dagger} \otimes R^{\dagger},  \quad
    \Pi = \sum_{i=1}^n \ketbra{i}{i}\otimes \ketbra{i}{i} \otimes I \otimes \ketbra{0}{0}.
\end{align}
Let $\mathrm{CNOT}$ denote a bit-wise CNOT gate satisfying $\mathrm{CNOT}\colon \ket{\mathbf{0}}\ket{i}\mapsto \ket{i}\ket{i}$. We prepare the initial state
\begin{align} \label{eq:def-U-I}
     \sum_{i=1}^n \sqrt{p_i} \ket{\psi_i} = \sum_{i=1}^n \sqrt{p_i} \ket{i}\ket{i}\ket{\phi_i}\ket{0}
\end{align}
by applying $\mathcal{O}_p$ to the second and third registers, followed by a bit-wise CNOT gate on the first register controlled by the second. 

Let $\rho > 1$ be a constant. We partition the domain $[0,1]$ into $m+1$ intervals:
\begin{align*}
    \Delta_{m+1}=[0,\varphi_m], \quad \Delta_{j}=(\varphi_j, \varphi_{j-1}] \text{ for } j=1, \ldots, m,
\end{align*} 
where $\varphi_j = \rho^{-j}$ for $j = 0, \ldots, m$.

\subsection{Non-destructive singular value discrimination} Let $U$ be a projected unitary encoding of a matrix $A$ with SVD $A = \sum_{i=1}^d \sigma_i \ketbra{\tilde{\psi}_i}{\psi_i}$. A standard method for singular value discrimination used in \cite{wang2024quantum,shin2025near} utilizes QSVT with a polynomial $P$ that approximates a threshold function. Specifically, to distinguish whether a singular value $\sigma_i$ lies below $\gamma/\rho$ or above $\gamma$ for a threshold $\gamma\in(0,1)$, consider an even polynomial $P$ satisfying $P(x) \approx 0$ on $[0, \gamma/\rho]$ and $P(x) \approx 1$ on $[\gamma,1]$. Applying the associated QSVT unitary to the singular vector $\ket{\psi_i}$ produces the state
\begin{align*} 
P(\sigma_i) \ket{\psi_i} + \sqrt{1-P^2(\sigma_i)}\ket{\psi^{\perp}_i},\nonumber
\end{align*}
where $\ket{\psi^{\perp}_i}$ denotes a garbage state orthogonal to the subspace of interest. To partition $\sigma_i$ into finer intervals, multiple discrimination steps are necessary. Note that if $\sigma_i < \gamma$, the produced state contains a non-negligible component of the garbage state $\ket{\psi_i^{\perp}}$. The main drawback is that this garbage state cannot be reused for subsequent tests, since QSVT leaves the orthogonal subspace undefined. This necessitates preparing a fresh copy of $\ket{\psi_i}$ in an additional ancilla register for every new discrimination step, incurring a substantial space overhead. To overcome this and discriminate singular values using constant ancilla qubits, we apply the gapped phase estimation with branch marking technique proposed in~\cite{low2024quantum}. Since the algorithm only applies to Hermitian matrices, we convert the singular value problem into an eigenvalue problem using a Hermitian dilation. Consider the unitary $U_H = \ketbra{0}{1} \otimes U + \ketbra{1}{0} \otimes U^{\dagger}$ acting on an extended system. With respect to the projector $\Pi_H = \ketbra{0}{0} \otimes \widetilde{\Pi} + \ketbra{1}{1} \otimes \Pi$, the operator $U_H$ encodes the Hermitian matrix
    \begin{align*}
        H = \begin{bmatrix}
            0 & A\\
            A^{\dagger} & 0
        \end{bmatrix},
    \end{align*}
    satisfying $\Pi_H U_H \Pi_H = H$. The spectral decomposition of $H$ takes the form $H = \sum_{i, \nu} \nu \sigma_i \ketbra{\phi_{i,\nu}}{\phi_{i,\nu}}$, where the eigenvectors are given by $\ket{\phi_{i,\nu}} = \frac{1}{\sqrt{2}}(\ket{0}\ket{\psi_i} + \nu\ket{1}\ket{\tilde{\psi}_i})$ for $\nu \in \{+1, -1\}$.
\begin{theorem}
\label{thm:nd-svd}
    Let $(\widetilde{\Pi}, \Pi, U)$ be a projected unitary encoding of $A$ with spectral norm $\|A\|\le 1/2$. Let $A = \sum_{i=1}^d \sigma_i \ketbra{\tilde{\psi}_i}{\psi_i}$ be the singular value decomposition of $A$. Let $\ket{\phi_{i,\nu}}$ be the eigenvector of the Hermitian dilation $H$ of $A$. For any precision $\eps_1, \eps_2 \in (0,1]$, threshold $\gamma\in (0, 1/2)$, and constant gap ratio $\rho>1$, there exists a unitary $D$ that maps \begin{align*}  \ket{++} \ket{0}\ket{\phi_{i,\nu}}\mapsto \ket{++}\ket{\xi_{i,\nu}}\ket{\phi_{i,\nu}}
    \end{align*}
    with precision $\eps_1$  for all $i\in [d]$ and $\nu\in \{-1,1\}$, where $\ket{\xi_{i,\nu}}$ satisfies \begin{align*}
        \begin{cases}
        \|\ket{\xi_{i,\nu}}-\nu\ket{0}\|\le \eps_2, &  \text{ if }\sigma_i\in [\gamma, 1),\\
         \|\ket{\xi_{i,\nu}}-\i \ket{1}\|\le \eps_2 , & \text{ if }\sigma_i \in [0, \frac{\gamma}{\rho}]. \\
        \end{cases} 
    \end{align*}
    Implementing $D$ requires $O(\log(1/\eps_1)+\log(1/\eps_2)/\gamma)$ queries to controlled $U$ and $U^{\dagger}$.
\end{theorem}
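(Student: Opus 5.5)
We build $D$ in the spirit of the gapped phase estimation with branch marking of \cite{low2024quantum}. We apply it to the Hermitian dilation $H$ encoded by $(\Pi_H,\Pi_H,U_H)$; note $\|H\|=\|A\|\le 1/2$. The construction has three stages: an eigenvalue-sign-preserving threshold transform of $H$, a non-destructive marking step that writes the outcome into a one-qubit register, and an uncompute step that returns the eigenvector register unchanged. The two $\ket{+}$ registers in $\ket{++}$ serve as the QSVT/LCU ancillas. They are needed because, once the threshold polynomial is combined with its complementary part, the LCU ancillas end back in $\ket{++}$ exactly on eigenvectors.

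\textbf{Stage 1 (threshold polynomial).} We choose an odd polynomial $S$ of degree $O(\log(1/\eps_2)/\gamma)$ with $|S|\le 1$ on $[-1,1]$ that approximates $\mathrm{sgn}(x)$ to error $\eps_2$ on $|x|\in[\gamma,1/2]$ and is $\le\eps_2$ in absolute value on $|x|\le\gamma/\rho$. Such an $S$ exists since it is an erf-type approximation of a window function with constant relative gap $\rho$. By \thm{qsvt}, together with its eigenvalue version (QET) for Hermitian encodings, we obtain a unitary $W$ that acts on each eigenvector $\ket{\phi_{i,\nu}}$ by a $2\times2$ rotation on an ancilla with $\langle +|W|+\rangle = S(\nu\sigma_i)=\nu S(\sigma_i)$. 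Taking the real and imaginary parts of QSP with one more $\ket{+}$ ancilla (LCU of $W$ and $W^\dagger$), we realize on $\ket{\phi_{i,\nu}}$ the phase
\[
e^{\i\theta(\nu\sigma_i)},\qquad \cos\theta=S(\nu\sigma_i),\quad \sin\theta=\sqrt{1-S^2}.
\]
Up to error $\eps_2$, this phase equals $\nu$ in the large-singular-value regime and $\i$ in the small-singular-value regime.

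\textbf{Stage 2 (branch marking).} We use the unitary $V$ that imprints $e^{\i\theta}$ on eigenvectors. Because $V$ acts diagonally on $\ket{\phi_{i,\nu}}$ (every QSVT/QET ancilla returns to $\ket{++}$ exactly when postselection succeeds), we apply controlled-$V$ with control $\ket{0}\mapsto\ket{1}$ on the output qubit, i.e.\ the map $\ket{1}\mapsto e^{\i\theta}\ket{1}$, sandwiched by Hadamard-type gates. The phase kickback then leaves the eigenvector untouched and places $e^{\i\theta}$ on the marker qubit. Finally we rotate the marker so that its state is $\xi=\cos\theta\ket0+\i\sin\theta\ket1$ up to a fixed single-qubit unitary. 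This gives $\ket{\xi_{i,\nu}}\approx\nu\ket0$ when $\sigma_i\ge\gamma$ (where $S\approx\nu$) and $\approx\i\ket1$ when $\sigma_i\le\gamma/\rho$ (where $S\approx0$, $\sin\theta\approx1$).

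\textbf{Stage 3 (cost and errors).} The query cost is $O(\deg S)=O(\log(1/\eps_2)/\gamma)$ for the threshold, plus $O(\log(1/\eps_1))$ for the amplification/robust-oblivious step that forces the $\ket{++}$ ancillas back into place with error $\eps_1$. The error $\eps_1$ comes from the imperfect LCU success amplitude and is handled by fixed-point oblivious amplitude amplification of constant initial success probability, which costs $\log(1/\eps_1)$ rounds of a constant-degree block.

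\textbf{Main obstacle.} The crux is ensuring the map is exactly of product form $\ket{++}\ket{\xi}\ket{\phi_{i,\nu}}$. In plain QSVT the ancilla would leak garbage into a subspace orthogonal to $\ket{++}$. The fix is to use a complementary pair of polynomials $(S,\sqrt{1-S^2})$ so that the combined $2\times2$ block on each eigenspace is unitary, and then to use oblivious amplitude amplification to deterministically return the ancillas. I would first try to cite the gapped phase estimation theorem of \cite{low2024quantum} directly, verifying that its branch-marked output on eigenvalue $\nu\sigma_i$ yields the stated phases $\nu$ and $\i$.
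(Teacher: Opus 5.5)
Your construction has a genuine gap, located exactly at the point you flag as the crux, and the proposed fix does not close it.

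Under QSVT/QET with $(\Pi_H,\Pi_H,U_H)$, the vector $\ket{\phi_{i,\nu}}$ is not an eigenvector of anything you can phase-kick from. It is rotated inside a two-dimensional invariant subspace of the \emph{system}, spanned by $\ket{\phi_{i,\nu}}$ and a vector outside $\mathrm{im}\,\Pi_H$. Inside that subspace it is the equal superposition $\frac{1}{\sqrt2}(\ket{\phi^{+}_{i,\nu}}+\ket{\phi^{-}_{i,\nu}})$ of two eigenvectors of the walk $W=(2\Pi_H-I)U_H$. These have \emph{conjugate} eigenphases $e^{\pm\i\theta_{i,\nu}}$, with $\cos\theta_{i,\nu}=\nu\sigma_i$. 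So there is no single phase $e^{\i\theta}$ to kick back: a controlled application entangles your marker with the branch, and the ancillas cannot be returned to $\ket{++}$.

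Your repairs do not work as written:
\begin{itemize}
\item An LCU of $W$ and $W^\dagger$ produces the real and imaginary parts of the QSP polynomial, and both share the parity of $S$. But $\sqrt{1-S^2}$ is even, and in general not a low-degree polynomial.
\item ``When postselection succeeds'' is not available for a unitary $D$.
\item Fixed-point oblivious amplification would take $\log(1/\eps_1)$ rounds of the \emph{full} block, each containing the degree-$O(\log(1/\eps_2)/\gamma)$ transform. That gives a product of the two costs, not the additive bound.
\end{itemize}
Even granting a unitary that imprints $e^{\i\theta}$ on $\ket{\phi_{i,\nu}}$, Stage~2 leaves the marker in $\frac{1}{\sqrt2}(\ket0+e^{\i\theta}\ket1)$ (or $e^{\i\theta}\ket1$). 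No fixed single-qubit rotation turns this into $\cos\theta\ket0+\i\sin\theta\ket1$: it would have to send the orthogonal pair $\ket{+},\ket{-}$ (at $\theta=0,\pi$) to the parallel pair $\ket0,-\ket0$. In a kickback scheme, producing $\cos\theta$ and $\i\sin\theta$ requires interfering $e^{\i\theta}$ with $e^{-\i\theta}$, which is exactly what the two branches supply.

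The paper's proof is the route you mention only in your last sentence. What is missing is the sandwich around GPE, not just a citation of GPE, which acts on branch-marked walk eigenvectors $\ket{\pm}\ket{0}\ket{\phi^{\pm}_{i,\nu}}$, not on eigenvectors of $H$. The steps are:
\begin{enumerate}
\item Since $U_H^\dagger=U_H$, Hermitian qubitization yields the two branches above.
\item $V_{\mathrm{BM}}$ writes the branch label into register $\mathsf{b}$ at cost $O(\log(1/\eps_1))$, with no $1/\gamma$, because $\|H\|\le 1/2$ keeps the phases $\pm\theta_{i,\nu}$ a constant distance apart.
\item $V_{\mathrm{GPE}}$ runs on the branch-marked state at cost $O(\log(1/\eps_2)/\gamma)$.
\item $V_{\mathrm{BM}}^\dagger$ removes the mark.
\end{enumerate}
The uncomputation succeeds because $\ket{\xi_{i,\nu}}$ is the same on both branches, so the state re-factors as $\ket{++}\ket{\xi_{i,\nu}}\ket{\phi_{i,\nu}}$. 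This is why the two $\ket{+}$ registers are branch-marking registers rather than QSVT/LCU ancillas, and why $\eps_1$ enters additively.

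Your LCU idea can be salvaged differently. Take an odd $\Sigma\approx\sin(\frac{\pi}{2}T)$ and an even $C\approx\cos(\frac{\pi}{2}T)$ for a smooth threshold $T$, and block-encode $G=I\otimes\Sigma(H)+\i X\otimes C(H)$ by LCU. Since $G^\dagger G=I\otimes(\Sigma^2+C^2)\approx I$, one round of robust oblivious amplification suffices. Then $G\ket0\ket{\phi_{i,\nu}}=(\Sigma(\nu\sigma_i)\ket0+\i C(\nu\sigma_i)\ket1)\ket{\phi_{i,\nu}}$ has the right limits. However, the leakage is then controlled by the polynomial accuracy, so $\log(1/\eps_1)$ also picks up a $1/\gamma$ factor and you do not recover the stated bound. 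This would still suffice for \algo{multi-level-ae}, where $\eps_1=\Theta(\eps_2)$.
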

\begin{proof}
    Since $U_H$ is Hermitian ($U_H^{\dagger} = U_H$), by the Hermitian qubitization (see Corollary~19 of \cite{low2024quantum}),  within the invariant subspace spanned by the image of $\Pi_H$ and $U_H\Pi_H$, the associated quantum walk operator $W = (2\Pi_H-I) U_H$ admits spectral decomposition 
    \begin{align*}
        W = \sum_{i,\nu} \big(e^{\i \theta_{i,\nu}} \ketbra{\phi_{i,\nu}^{+}}{\phi_{i,\nu}^{+}}+ e^{-\i \theta_{i,\nu}}\ketbra{\phi_{i,\nu}^{-}}{\phi_{i,\nu}^{-}}\big)
    \end{align*}
    where $\cos(\theta_{i,\nu}) = \lambda_{i,\nu} = \nu \sigma_i$ and $\ket{\phi_{i,\nu}} = \frac{1}{\sqrt{2}}(\ket{\phi_{i,\nu}^{+}}+\ket{\phi_{i,\nu}^{-}})$. Each eigenvector $\ket{\phi_{i,\nu}}$ of $H$ is associated with two eigenvectors of $W$, which is called the two branches of $\ket{\phi_{i,\nu}}$. According to~\cite[Proposition 22]{low2024quantum}, there exists a unitary $V_{\mathrm{BM}}$ that implements the branch marking operation. It maps\begin{align*}\ket{+}\ket{+}\ket{\phi_{i,\nu}^{\pm}} \mapsto \ket{+}\ket{\pm}\ket{\phi_{i,\nu}^{\pm}} , \quad \ket{+}\ket{-}\ket{\phi_{i,\nu}^{\pm}} \mapsto \ket{+}\ket{\mp}\ket{\phi_{i,\nu}^{\pm}}\end{align*}with precision $\eps_1/2$, for any $i\in[d]$ and $\nu\in\{+1, -1\}$, using $O(\log(1/\eps_1))$ queries to $U_H$.
    By \cite[Proposition 23]{low2024quantum}, there exists a unitary $V_{\mathrm{GPE}}$ that performs the gapped phase estimation for branch marked eigenvectors using $O(\log(1/\eps_2)/\gamma)$ queries to $U_H$: \begin{align*}
        V_{\mathrm{GPE}}\colon \ket{\pm}\ket{0}\ket{\phi_{i,\nu}^{\pm}}\mapsto\ket{\pm}\ket{\xi_{i,\nu}}\ket{\phi_{i,\nu}^{\pm}}, \quad \begin{cases}
        \|\ket{\xi_{i,\nu}}-\nu\ket{0}\|\le \eps_2, &  \text{ if }\sigma_i\in [\gamma, 1),\\
         \|\ket{\xi_{i,\nu}}-\i \ket{1}\|\le \eps_2 , & \text{ if }\sigma_i \in [0, \frac{\gamma}{\rho}]. \\
        \end{cases} 
    \end{align*}
     Applying $V_{\mathrm{BM}}$ to registers $(\mathsf{a},\mathsf{b},\mathsf{d})$ of $\ket{+}_{\mathsf{a}}\ket{+}_{\mathsf{b}}\ket{0}_{\mathsf{c}}\ket{\phi_{i,\nu}}_{\mathsf{d}}$ yields an $\eps_1/2$-approximation of \begin{align*}
        \ket{+}_{\mathsf{a}}\frac{1}{\sqrt{2}}(\ket{+}_{\mathsf{b}}\ket{0}_{\mathsf{c}}\ket{\phi_{i,\nu}^{+}}_{\mathsf{d}}+\ket{-}_{\mathsf{b}}\ket{0}_{\mathsf{c}}\ket{\phi_{i,\nu}^{-}}_{\mathsf{d}}).
    \end{align*} Then, applying $V_{\mathrm{GPE}}$ to registers $(\mathsf{b},\mathsf{c},\mathsf{d})$ yields an $\eps_1/2$-approximation of \begin{align*}
        \ket{+}_{\mathsf{a}}\frac{1}{\sqrt{2}}(\ket{+}_{\mathsf{b}}\ket{\xi_{i,\nu}}_{\mathsf{c}}\ket{\phi_{i,\nu}^{+}}_{\mathsf{d}}+\ket{-}_{\mathsf{b}}\ket{\xi_{i,\nu}}_{\mathsf{c}}\ket{\phi_{i,\nu}^{-}}_{\mathsf{d}}).
\end{align*}Finally, we apply $V_{\mathrm{BM}}^{\dagger}$ to registers $(\mathsf{a},\mathsf{b},\mathsf{d})$, which uncomputes the branch mark, yielding an $\eps_1$-approximation of  \begin{align*}
        \ket{+}_{\mathsf{a}}\frac{1}{\sqrt{2}}(\ket{+}_{\mathsf{b}}\ket{\xi_{i,\nu}}_{\mathsf{c}}\ket{\phi_{i,\nu}^{+}}_{\mathsf{d}}+\ket{+}_{\mathsf{b}}\ket{\xi_{i,\nu}}_{\mathsf{c}}\ket{\phi_{i,\nu}^{-}}_{\mathsf{d}}) = \ket{+}_{\mathsf{a}}\ket{+}_{\mathsf{b}}\ket{\xi_{i,\nu}}_{\mathsf{c}}\ket{\phi_{i,\nu}}_{\mathsf{d}}.
    \end{align*} The whole unitary $(I\otimes V_{\mathrm{BM}}^{\dagger})V_{\mathrm{GPE}}(I\otimes V_{\mathrm{BM}})$ implements the map $\ket{++}\ket{0}\ket{\phi_{i,\nu}}\mapsto \ket{++}\ket{\xi_{i,\nu}}\ket{\phi_{i,\nu}}$  using \begin{align*}
        O\left(\log\left(\frac{1}{\eps_1}\right)+\frac{1}{\gamma}\log\left(\frac{1}{\eps_2}\right)\right) 
    \end{align*}
    queries to $U_H$, and each query to $U_H$ uses one query to controlled $U$ and $U^{\dagger}$. This completes the proof. 
\end{proof}

\subsection{Main algorithm and its analysis}
\label{sec:main-algo}
The pseudocode of the algorithm is presented in \algo{multi-level-ae}.  For each $j\in[m]$, we isolate singular values in the $j$-th loop via two non-destructive singular value discriminators $D_{j-1}$ and $D_{j}$. 

In the following theorem, we show that this process effectively partitions the $i$-th probability of the distribution into $m$ groups.  Specifically, each $\sqrt{p_i}$ is assigned a probability weight $|\beta_{i,j}|^2$ for the $j$-th group, satisfying two key properties. Suppose $\sqrt{p_i}\in \Delta_{j_i}$, and then we have 
\begin{enumerate}
    \item \emph{(Localization of $\beta_{i,j}$)} $|\beta_{i,j}|^2\approx 0$ for any $j \notin \{j_i-1, j_i\}$,
    \item \emph{(Completeness of $\beta_{i,j}$)} $|\beta_{i,j_i-1}|^2+|\beta_{i,j_i}|^2\approx 1$.
\end{enumerate} 
We then apply a QSVT with polynomial $P_j$ to the isolated singular values, followed by amplitude estimation to estimate the squared norm of the output state. Taking the sum over all groups yields an estimate of:
\begin{align}
\label{eq:estimate-sum}
    \sum_{j=1}^m\sum_{i=1}^{n} p_i |\beta_{i,j}|^2 P_j^2\left(\frac{\sqrt{p_i}}{2}\right).
\end{align}
Due to the localization of $\beta_{i,j}$, the polynomial $P_j^2(\sqrt{p_i}/2)$ only needs to approximate $g(p_i)$ for $\sqrt{p_i} \in \Delta_j\cup \Delta_{j+1}$. The completeness property then ensures that \eq{estimate-sum} approximates the target functional $\sum_{i=1}^n p_i g(p_i)$.

\begin{algorithm}[htb]
\caption{Multi-Level Estimation of Functionals}
\label{algo:multi-level-ae}
\Input{
    Unitary oracle $\mathcal{O}_p$ for distribution $(p_i)_{i\in[n]}$; target function $g$; precision $\eps$\;
    Number of intervals $m$, decreasing ratio $\rho$, and upper bounds $\{B_j\}_{j=0}^{m+2}$\;
    Polynomials $\{P_j\}_{j=1}^m$ with the same definite parity\;
}
\Registers{
    Ancilla qubits $\mathsf{a}, \mathsf{b}, \mathsf{c}, \mathsf{d}$, and system $\mathsf{s} = (\mathsf{s}_1,\mathsf{s}_2)$\;
}
\Output{
    Estimate of $\sum_{i=1}^n p_i g(p_i)$.
}

\BlankLine
$S \leftarrow 0$\;
Let $(\widetilde{\Pi}, \Pi, U)$ be the projected unitary encoding of $\sum_{i=1}^n {\sqrt{p_i}}/{2}\ketbra{\tilde{\psi}_i}{\psi_i}$ defined in \eq{def-be} and $(\Pi_H, \Pi_H, U_H)$ be the projected unitary encoding of its Hermitian dilation\;
Let $U_I$ be the state preparation unitary defined in \eq{def-U-I}\;
For any $j\in [m]$, let $D_j$ be the singular value discriminator $D$ in \thm{nd-svd} with threshold $\rho^{-j}/2$, gap ratio $\rho$, and precision $\eps_1 = \min\{1, \eps/(24m\max\{B_j, B_{j+1}, B_{j+2}\})\}$, $\eps_2 =  \min\{1, \eps/(6m\max\{B_j, B_{j+1}, B_{j+2}\})\}$; Let $D_0 = I$\;

\For{$j \leftarrow 1$ \KwTo $m$}{
    Let $W_j$ be the QSVT unitary $U_{P_j}^{(\mathrm{SV})}$ in \thm{qsvt} for the polynomial $P_j$ and the projected unitary encoding $(\Pi_H, \Pi_H, U_H)$ \label{lin:W-j}\;
    
    Let $V_j$ be the sequence: apply $D_{j-1}$ on $(\mathsf{a}, \mathsf{b},\mathsf{c},  \mathsf{s})$, then $D_j$ on $(\mathsf{a},  \mathsf{b}, \mathsf{d},\mathsf{s})$, and finally $W_j$ on $(\mathsf{a}, \mathsf{s})$\;
    
    Let $\ket{\Psi_{\mathrm{in}}} = \ket{00}_{\mathsf{c}, \mathsf{d}}\ket{++}_{\mathsf{a}, \mathsf{b}} \otimes \ket{0}_{\mathsf{s}_1}(U_I \ket{\mathbf{0}}_{\mathsf{s}_2})$ be the initial state\;
    Let  $\Pi_{j} = \begin{cases} 
    \ketbra{+}{+}_{\mathsf{a}} \otimes I_{\mathsf{b},\mathsf{c}} \otimes\ketbra{0}{0}_{\mathsf{d}}\otimes (\Pi_H)_\mathsf{s} & \text{if } j=1 \\
     \ketbra{+}{+}_{\mathsf{a}} \otimes I_{\mathsf{b}}\otimes \ketbra{10}{10}_{\mathsf{c}, \mathsf{d}} \otimes (\Pi_H)_\mathsf{s} & \text{if } j\ge 2
\end{cases}$\;
    
    Estimate the squared norm $\|\Pi_{j} V_j \ket{\Psi_{\mathrm{in}}} \|^2 $ to additive error $\min\{1, \eps/(6m\max\{B_j, B_{j+1}\})\}$ via amplitude estimation (\cor{two-stage-ae}) with $\eta = 1/(3m)$, denoting the output by $\tilde{v}_j$ \label{lin:v-j}\;
    
    $S \leftarrow S + \max\{B_j, B_{j+1}\}  \tilde{v}_j$\;
}
\Return{$S$}\;
\end{algorithm}

\begin{theorem}\label{thm:multi-level-ae-1}
    Let $\eps \in (0,1)$ be a precision parameter, $m$ be the number of intervals, and $g\colon[0,1]\to\mathbb{R}$ be the target function. 
    Given a constant $\rho >1$, define the interval endpoints $\varphi_j = \rho^{-j}$ for $j=0, \ldots, m+1$. 
    Let $B_1, \ldots, B_{m+1}\ge 0$ be upper bounds satisfying $B_j \ge 2\max_{x\in (\varphi_j ,\varphi_{j-1}]} |g(x^2)|$, with convention $B_0=B_{m+2}= 0$. 
    For each $j \in [m]$, define the local maximum bound $\mathcal{B}_j := \max\{B_{j-1}, B_j, B_{j+1}, B_{j+2}\}$.
    Suppose there exist a constant $C > 0$ and polynomials $P_1, \ldots, P_m\in \mathbb{R}[x]$ of the same definite parity satisfying the following four conditions.

   \begin{enumerate}[label=\textbf{Condition \arabic*.}, ref=\arabic*, leftmargin=*]
    \item \label{cond:bound-P-j-1} For all $j \in [m]$ and $x\in[-1,1]$, $|P_j(x)|\le 1$.
    
    \item \label{cond:approx-P-j} For all $j \in [m]$ and $x \in (\varphi_{j+1}, \varphi_{j-1}]$, $\left|\max\{B_j, B_{j+1}\}P_j^2(x/2)-g(x^2)\right| \le \varepsilon/12m$.

    \item \label{cond:bound-P-j} For any $x\in (\varphi_{m+1},\varphi_m]$, $ B_m P_m^2(x/2) \le C|g(x^2)|$.

    \item \label{cond:tail-bound} It holds that
\end{enumerate}
\vspace{-1.3mm}
    \[
        \sum_{i\in [n]: \sqrt{p_i}\in [0,\varphi_m]} p_i|g(p_i)|\le \frac{\eps}{3(1+C)}.
    \]
    Furthermore, assume the degrees satisfy $\deg(P_j) = O(\alpha_j(\eps, m)\log(m\mathcal{B}_j/\eps)/\varphi_{j+1})$ for some functions $\alpha_j(\eps,m)\ge 1$. Given purified quantum query access to any probability distribution $(p_i)_{i\in[n]}$, there exists a quantum algorithm estimating $\sum_{i=1}^n p_i g(p_i)$ to within additive error $\eps$ with probability at least $2/3$ using 
    \begin{align*}
        O\bigg( \sum_{j=1}^m\log(m)\alpha_j(\eps,m)\log\left(\frac{m\mathcal{B}_j}{\eps}\right)\bigg(\frac{m\mathcal{B}_j\sqrt{n_j}}{\eps} + \frac{1}{\varphi_{j+1}}\sqrt{\frac{m\mathcal{B}_j}{\eps}}\bigg) \bigg)
    \end{align*} 
    queries, where $n_j = |\{i\in [n]:\sqrt{p_i}\in (\varphi_{j+1}, \varphi_{j-1}]\}|$ denotes the number of indices in the local neighborhood of the $j$-th interval. The algorithm uses four additional qubits beyond the register required for the projected unitary encoding of the distribution $p$. 
\end{theorem}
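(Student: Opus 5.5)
We analyze \algo{multi-level-ae} directly, in three parts: what each loop iteration's amplitude $\|\Pi_j V_j\ket{\Psi_{\mathrm{in}}}\|^2$ computes, why the weighted sum approximates $\sum_i p_i g(p_i)$, and the query and qubit counts.

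\textbf{Step 1: the amplitude of one iteration.}
Write the initial system state as
\[
\sum_i \sqrt{p_i}\ket{\psi_i} = \sum_{i,\nu}\sqrt{p_i/2}\,\ket{\phi_{i,\nu}}
\]
after embedding into the dilation. This holds because $\ket{0}\ket{\psi_i}=\frac{1}{\sqrt 2}(\ket{\phi_{i,+1}}+\ket{\phi_{i,-1}})$, and $\mathsf{s}_1=\ket{0}$ selects the $\Pi$-block.

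By \thm{nd-svd}, $D_{j-1}$ and $D_j$ act as $\ket{++}\ket{0}\ket{\phi_{i,\nu}}\mapsto \ket{++}\ket{\xi^{(j-1)}_{i,\nu}}\ket{\phi_{i,\nu}}$ up to error $\eps_1$, on registers $\mathsf{c}$ and $\mathsf{d}$ respectively. Since the eigenvector register is preserved, the ideal output is
\[
\sum_{i,\nu}\sqrt{p_i/2}\,\ket{++}\ket{\xi^{(j-1)}_{i,\nu}}_{\mathsf c}\ket{\xi^{(j)}_{i,\nu}}_{\mathsf d}\ket{\phi_{i,\nu}}.
\]
$W_j$ then applies $P_j(H)$ under $\bra{+}_{\mathsf a}\Pi_H$, and $P_j(\nu\sqrt{p_i}/2)^2=P_j(\sqrt{p_i}/2)^2$ by definite parity.

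Define $\beta_{i,j}=\bra{1}\xi^{(j-1)}_{i,\nu}\rangle\bra{0}\xi^{(j)}_{i,\nu}\rangle$ (for $j=1$ only the $\mathsf d$ factor). The sign $\nu$ disappears in $|\beta_{i,j}|^2$. Distinct $(i,\nu)$ are orthogonal, so the ideal squared norm is
\[
v_j=\sum_i p_i|\beta_{i,j}|^2P_j^2(\sqrt{p_i}/2).
\]
The $\eps_1$-errors of the two discriminators perturb the amplitude by $O(\eps_1)$, hence the squared norm by $O(\eps_1)$. With the stated choice of $\eps_1$, after multiplying by $\max\{B_j,B_{j+1}\}$ and summing over $m$ levels, this costs at most $\eps/12$.

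\textbf{Step 2: localization and completeness (the main obstacle).}
$D_j$ has threshold $\rho^{-j}/2$ applied to $\sigma_i=\sqrt{p_i}/2$. So $|\bra{0}\xi^{(j)}\rangle|^2\ge 1-\eps_2^2$-ish when $\sqrt{p_i}\ge\varphi_j$, and $\le\eps_2^2$ when $\sqrt{p_i}\le\varphi_{j+1}$. Only for $\sqrt{p_i}\in\Delta_{j+1}$ is the weight uncontrolled.

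Fix $\sqrt{p_i}\in\Delta_{k}$ and set $a_j=|\bra{0}\xi^{(j)}\rangle|^2$.
\begin{itemize}
\item For $j\le k-2$, $a_j\approx 1$. Hence $|\bra{1}\xi^{(j-1)}\rangle|^2\approx 0$, which kills $\beta_{i,j}$ for $2\le j\le k-2$ and also for $j=k-1$ unless $k-1=1$. More carefully: $\beta_{i,j}$ requires $\mathsf c$ "below $\varphi_{j-1}$" and $\mathsf d$ "above $\varphi_{j+1}$".
\item For $j\ge k+1$, $a_{j-1}\approx1$ forces $|\bra{1}\xi^{(j-1)}\rangle|\approx 0$.
\end{itemize}
So only $j\in\{k-1,k\}$ survive, with $|\beta_{i,k-1}|^2+|\beta_{i,k}|^2=(1-a_{k-1})a_{k-1}'$-type expressions. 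The required identity is that $a_k + (1-a_k)=1$, exploiting that $\xi^{(k-1)}$ and $\xi^{(k)}$ are determined by the same thresholds.

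I expect the delicate point to be exactly this completeness: the transition-region weight of $D_{k}$, namely $|\bra{0}\xi^{(k)}\rangle|^2$ versus $|\bra{1}\xi^{(k)}\rangle|^2$, must split between groups $k$ and $k+1$, with errors $O(\eps_2)$ per level. I would bound the sum of errors by $O(m\eps_2)$ weighted by $\mathcal B_j$. This explains why $\eps_2$ involves $B_{j+2}$: a value in $\Delta_{j+2}$ may leak into group $j$ with weight $O(\eps_2^2)$ and contributes $g$ bounded by $B_{j+2}$.

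\textbf{Step 3: summing the levels.}
Then
\[
\sum_j\max\{B_j,B_{j+1}\}v_j\approx\sum_i p_i\sum_j|\beta_{i,j}|^2 g(p_i),
\]
using \cond{approx-P-j} on the support $(\varphi_{j+1},\varphi_{j-1}]$, with error $\eps/12m$ per level, hence $\le\eps/12$ total. Completeness turns this into $\sum_i p_ig(p_i)$ for $\sqrt{p_i}>\varphi_m$.

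For $\sqrt{p_i}\le\varphi_m$, values in $\Delta_{m+1}$ appear only in group $m$. By \cond{bound-P-j} their contribution is at most $C\,p_i|g(p_i)|$, so \cond{tail-bound} bounds both this term and the missing true mass by $(1+C)\eps/(3(1+C))=\eps/3$. Values below $\varphi_{m+1}$ are suppressed by $D_m$, up to $\eps_2$. The amplitude estimation errors contribute $\sum_j \max\{B_j,B_{j+1}\}\cdot\eps/(6m\max\{B_j,B_{j+1}\})\le\eps/6$. A union bound over $m$ runs with $\eta=1/(3m)$ gives success probability $2/3$, and the total error is $\le\eps$.

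\textbf{Step 4: complexity.}
By Step 1, $v_j\le\sum_{i:\sqrt{p_i}\in(\varphi_{j+1},\varphi_{j-1}]}p_i+$leakage. Each such term satisfies $p_iP_j^2(\sqrt{p_i}/2)\le p_i\cdot\mathcal{B}_j/\max\{B_j,B_{j+1}\}$, so more simply $\sqrt{v_j}\lesssim\sqrt{n_j}\varphi_{j-1}$. Hence $\sqrt{v_j}/\eps'\lesssim \sqrt{n_j}\,m\mathcal{B}_j/\eps$, where the factor $\varphi_{j-1}$ is absorbed into the constant-ratio bound, possibly losing only constants.

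\cor{two-stage-ae} with $\eps'=\eps/(6m\max\{B_j,B_{j+1}\})$ and $\eta=1/(3m)$ uses $O((\sqrt{v_j}/\eps'+1/\sqrt{\eps'})\log m)$ calls to $V_j$. Each call costs
\[
O\big(\deg P_j+\log(1/\eps_1)+\log(1/\eps_2)/\varphi_{j+1}\big)=O\big(\alpha_j\log(m\mathcal B_j/\eps)/\varphi_{j+1}\big)
\]
queries. In the $\sqrt{v_j}$ term, the $1/\varphi_{j+1}$ combines with $\sqrt{v_j}\le\sqrt{n_j}\varphi_{j-1}$ to a constant ($\rho^2$), yielding the stated bound.

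Registers $\mathsf a,\mathsf b,\mathsf c,\mathsf d$ are the four extra qubits. The dilation qubit is counted within the encoding register, as is the single QSVT ancilla, which is identified with $\mathsf a$ via the $\ket{+}$ in \thm{qsvt}.
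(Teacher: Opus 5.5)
Your plan follows the paper's route: analyze \algo{multi-level-ae}, identify $\|\Pi_jV_j\ket{\Psi_{\mathrm{in}}}\|^2\approx\sum_i p_i|\beta_{i,j}|^2P_j^2(\sqrt{p_i}/2)$, prove localization and completeness of the $\beta_{i,j}$, sum over levels, and count queries via \cor{two-stage-ae}, where $\varphi_{j-1}/\varphi_{j+1}=\rho^2$ cancels. Steps 1 and 4 are essentially right. The error analysis in Step 3, however, has a real hole.

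\textbf{The gap in Step 3.} You route through the intermediate quantity $\sum_i p_i\sum_j|\beta_{i,j}|^2 g(p_i)$. Condition 2 only compares $\max\{B_j,B_{j+1}\}P_j^2(\sqrt{p_i}/2)$ with $g(p_i)$ on the support $(\varphi_{j+1},\varphi_{j-1}]$. For off-support pairs the term $|\beta_{i,j}|^2|g(p_i)|$ is not small:
\begin{itemize}
\item \thm{nd-svd} only gives $|\beta_{i,j}|^2\lesssim\eps_2(D_j)^2$ for every $\sqrt{p_i}\le\varphi_{j+1}$, and $\eps_2(D_j)$ is calibrated to $\max\{B_j,B_{j+1},B_{j+2}\}$.
\item Meanwhile $|g(p_i)|$ for $\sqrt{p_i}\in\Delta_k$ with $k\gg j$ is only bounded by $B_k/2$.
\item For example, with $g(x)=x^{q-1}$, $q<1$, and the parameters of \prop{1312130}, the leak of all small $p_i$ into group $1$ may contribute up to $\eps_2(D_1)^2\sum_i p_i^q=\Theta(\eps^2n^{1-q}/m^2)$. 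This is not $O(\eps)$ once $n^{1-q}\gg m^2/\eps$.
\end{itemize}
The same uncontrolled term reappears when you say completeness turns the sum into $\sum_ip_ig(p_i)$. Completeness controls only $|\beta_{i,k-1}|^2+|\beta_{i,k}|^2$, not $\sum_j|\beta_{i,j}|^2$. The two errors cancel end to end, but bounding the steps separately, as you propose, cannot close.

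\textbf{The fix.} This is what the paper does: never compare leak terms with $g$. Fix $i$ with $\sqrt{p_i}\in\Delta_{j_i}$.
\begin{itemize}
\item For $j\notin\{j_i-1,j_i\}$, bound the algorithm-side term directly: $\max\{B_j,B_{j+1}\}|\beta_{i,j}|^2P_j^2(\sqrt{p_i}/2)\le\eps/(3m)$, by localization and $|P_j|\le 1$.
\item Apply Condition 2 and completeness only to the two neighbouring groups. Use $|g(p_i)|\le B_{j_i}/2$ to absorb the completeness defect $\eps/(3mB_{j_i})$.
\end{itemize}
Your explanation of why $B_{j+2}$ enters $\eps_2$ reflects the same misconception. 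Leaks into group $j$ come from every $\Delta_k$ with $k\ge j+2$, not just $\Delta_{j+2}$. The actual reason is that $D_j$ is also the $\mathsf c$-discriminator of group $j+1$, whose weight in $S$ is $\max\{B_{j+1},B_{j+2}\}$.

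\textbf{Step 2 has the thresholds reversed in its first bullet.} For $\sqrt{p_i}\in\Delta_k$ and $j\le k-2$ we have $\sqrt{p_i}\le\varphi_{j+1}$. So $D_j$ outputs approximately $\i\ket{1}$ and $a_j\approx 0$, not $a_j\approx 1$. Those groups are killed by the $\mathsf d$-factor. By contrast, $a_j\approx 1$ holds for $j\ge k$, and this kills $j\ge k+1$ through the $\mathsf c$-factor. The uncontrolled weight is therefore $a_{k-1}$, not $a_k$. Completeness reads $(1-a_{k-2})a_{k-1}+(1-a_{k-1})a_k\in[1-O(\eps_2^2),1]$, per sign $\nu$. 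This uses the fact that the same unitary $D_{k-1}$, hence the same $\xi^{(k-1)}$, appears in both loops $k-1$ and $k$. Your ``more carefully'' sentence has the right picture, but the bullet as written contradicts your own conclusion.

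Finally, \thm{nd-svd} does not guarantee that $|\beta_{i,j,\nu}|$ is independent of $\nu$. Average over $\nu$ as the paper does; the per-$\nu$ argument goes through unchanged.
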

A few remarks on the technical conditions are in order to clarify their meaning:
\begin{itemize}[leftmargin=*]
    \item \textbf{\cond{bound-P-j-1} (QSVT validity):} This ensures that the polynomials are bounded and valid for implementation via QSVT.
    \item \textbf{\cond{approx-P-j} (Local approximation):} This guarantees that the scaled polynomial approximates the target function with high accuracy within the $j$-th interval and its transition boundaries.
    \item \textbf{\cond{bound-P-j} and \cond{tail-bound} (Well-bounded tail):} These conditions handle the residual interval $[0, \varphi_m]$ closest to the singularity at the origin. In this region, we do not require an accurate polynomial approximation. Instead, \cond{tail-bound} ensures the true functional contribution from these  small probabilities is sufficiently small, while \cond{bound-P-j} restricts the magnitude of the polynomial estimate. Together, they guarantee via the triangle inequality that the total estimation error from this tail region is well-bounded.
\end{itemize}
\begin{proof}[Proof of \thm{multi-level-ae-1}]
    We use \algo{multi-level-ae} to estimate $\sum_{i=1}^n p_i g(p_i)$.  
    By \thm{nd-svd}, for any $j\in [m]$ and $i\in [n]$, the discriminator $D_j$ maps \begin{align*}
        \ket{++}\ket{0}\ket{0}\ket{\psi_i} = \ket{++}\ket{0}\frac{1}{\sqrt{2}}(\ket{\phi_{i,+1}}+\ket{\phi_{i,-1}})\mapsto\ket{++}\frac{1}{\sqrt{2}}(\ket{\xi_{i,+1}}\ket{\phi_{i,+1}}+\ket{\xi_{i,-1}}\ket{\phi_{i,-1}}),
    \end{align*} with accuracy $\eps/(24m\max\{B_j, B_{j+1}, B_{j+2}\})$, where \begin{align}
        \|\ket{\xi_{i,j,\pm 1}}-\i \ket{1}\|&\le \min\left\{1, \frac{\eps}{6m\max\{B_j, B_{j+1}, B_{j+2}\}}\right\} , && \text{ for }\sqrt{p_i} \le \varphi_{j+1}\text{,} \label{eq:cond-xi-s}\\
        \|\ket{\xi_{i,j,\pm 1}}-\pm\ket{0}\|&\le \min\left\{1, \frac{\eps}{6m\max\{B_j, B_{j+1}, B_{j+2}\}}\right\}, &&  \text{ for }\sqrt{p_i} \ge \varphi_{j}\label{eq:cond-xi-l}.
    \end{align} For consistency, we define $\ket{\xi_{i,0,\pm 1}}=\pm \ket{0}$ since $D_0 = I$. Define the target flag state for interval $j$ as $$\ket{F_j} = \begin{cases} \ket{00} & \text{if } j=1, \\ \ket{10} & \text{if } j \ge 2. \end{cases}$$ Then, for any $j\in[m]$ and $i\in[n]$, applying $D_{j-1}$ to registers $\mathsf{a}, \mathsf{b},\mathsf{c}, \mathsf{s}$ and $D_j$ to registers $\mathsf{a},\mathsf{b}, \mathsf{d}, \mathsf{s}$ implements the map \begin{align}
    \ket{++}_{\mathsf{a},\mathsf{b}}\ket{0}_{\mathsf{c}} \ket{0}_{\mathsf{d}}\ket{0}_{\mathsf{s}_1}\ket{\psi_i}_{\mathsf{s}_2}\mapsto &\ket{++}_{\mathsf{a},\mathsf{b}}\frac{1}{\sqrt{2}}(\ket{\xi_{i,j-1,+1}}_{\mathsf{c}} \ket{\xi_{i,j,+1}}_{\mathsf{d}}\ket{\phi_{i,+1}}_{\mathsf{s}}+ \ket{\xi_{i,j,-1}}_{\mathsf{c}} \ket{\xi_{i,j,-1}}_{\mathsf{d}}\ket{\phi_{i,-1}}_{\mathsf{s}})\nonumber\\
    =:&\ket{++}_{\mathsf{a},\mathsf{b}}\frac{1}{\sqrt{2}}(\beta_{i,j,+1}\ket{F_j}_{\mathsf{c},\mathsf{d}}\ket{\phi_{i,+1}}_{\mathsf{s}}+\beta_{i,j,-1}\ket{F_j}_{\mathsf{c},\mathsf{d}}\ket{\phi_{i,-1}}_{\mathsf{s}}+\ket{\perp}_{\mathsf{c}, \mathsf{d}, \mathsf{s}})\label{eq:state-separate}
    \end{align}
    with accuracy $\eps/(12m\max\{B_j, B_{j+1}\})$, where $\ket{\perp}_{\mathsf{c}, \mathsf{d}, \mathsf{s}}$ is the component orthogonal to $\ket{0}_{\mathsf{d}}$ for $j=1$ and orthogonal to $\ket{10}_{\mathsf{c,d}}$ for $j \ge 2$. We define $\beta_{i,0,\pm1}=0$. The amplitude $\beta_{i,j,\pm1}$ satisfies \begin{align*}
        |\beta_{i,j,\pm1}|^2 = \begin{cases}
        |\braket{0}{\xi_{i,j-1, \pm1}}|^2 |\braket{0}{\xi_{i,j, \pm1}}|^2  & \text{if }j=1, \\
        |\braket{1}{\xi_{i,j-1, \pm1}}|^2 |\braket{0}{\xi_{i,j, \pm1}}|^2 & \text{if } j \ge 2.
        \end{cases}
    \end{align*} 
    For any $j\in[m]$, the unitary $W_j$ in \lin{W-j} maps \begin{align}
    \label{eq:eff-W-j}
        \ket{+} \ket{\phi_{i,\pm 1}} \mapsto P_j\left(\pm\frac{\sqrt{p_i}}{2}\right) \ket{+} \ket{\phi_{i,\pm 1}} +  \ket{\gamma_{i,j,\pm1}},
    \end{align}
    where $\ket{\gamma_{i,j,\pm1}}$ satisfies $(\ketbra{+}{+}\otimes \Pi_H)\ket{\gamma_{i,j,\pm1}}=0$. 
    Combining \eq{state-separate} and \eq{eff-W-j}, the final state in the $j$-th loop $\ket{\Psi_j}:=V_j\ket{\Psi_{\mathrm{in}}}$ is $\eps/(12m\max\{B_j, B_{j+1}\})$-close to   
    \begin{align*}
        \sum_{i=1}^n \left(\sqrt{\frac{p_i}{2}}\ket{++}_{\mathsf{a},\mathsf{b}}\ket{F_j}_{\mathsf{c},\mathsf{d}}\left(\beta_{i,j,+1}P_j\left(\frac{\sqrt{p_i}}{2}\right)\ket{\phi_{i,+1}}_{\mathsf{s}}+\beta_{i,j,-1}P_j\left(-\frac{\sqrt{p_i}}{2}\right)\ket{\phi_{i,-1}}_{\mathsf{s}}\right)+\ket{\eta_{i,j}}_{\mathsf{a},\mathsf{b},\mathsf{c},\mathsf{d},\mathsf{s}}\right).
    \end{align*}
    Here, $\ket{\eta_{i,j}}$ is an unnormalized state satisfying $\Pi_j\ket{\eta_{i,j}}=0$ with \begin{align*}
        \Pi_{j} := \begin{cases} 
    \ketbra{+}{+}_{\mathsf{a}} \otimes I_{\mathsf{b},\mathsf{c}} \otimes\ketbra{0}{0}_{\mathsf{d}}\otimes (\Pi_H)_\mathsf{s} & \text{if } j=1 \\
     \ketbra{+}{+}_{\mathsf{a}} \otimes I_{\mathsf{b}}\otimes \ketbra{10}{10}_{\mathsf{c}, \mathsf{d}} \otimes  (\Pi_H)_\mathsf{s} & \text{if } j\ge 2.
     \end{cases}
    \end{align*} Define the effective amplitude $\beta_{i,j}$ via its squared magnitude \begin{align*}
        |\beta_{i,j}|^2:= \frac{1}{2}(|\beta_{i,j,+1}|^2+ |\beta_{i,j,-1}|^2).
    \end{align*}Then, the squared norm $\|\Pi_j V_j \ket{\Psi_{\mathrm{in}}}\|^2$ is $\eps/(6m\max\{B_j, B_{j+1}\})$-approximate to \begin{align*}
       \sum_{i=1}^{n} p_i \frac{1}{2}\left(|\beta_{i,j,+1}|^2 P_j^2\left(\frac{\sqrt{p_i}}{2}\right) +|\beta_{i,j,-1}|^2 P_j^2\left(-\frac{\sqrt{p_i}}{2}\right)\right) & = \sum_{i=1}^{n} p_i |\beta_{i,j}|^2 P_j^2\left(\frac{\sqrt{p_i}}{2}\right),
    \end{align*} since $P_j(x)$ has definite parity. By \cor{two-stage-ae}, the output $\tilde{v}_j$ in \lin{v-j} is an estimate of $\|\Pi_j V_j \ket{\Psi_{\mathrm{in}}}\|^2$ with error at most $\eps/(6m\max\{B_j, B_{j+1}\})$ and success probability at least $1-1/3m$. By the union bound, the output $S$ of \algo{multi-level-ae} satisfies \begin{align}
    \label{eq:ae-error}
        \bigg| S-\sum_{j=1}^m \sum_{i=1}^{n}\max\{B_j, B_{j+1}\} p_i |\beta_{i,j}|^2 P_j^2\left(\frac{\sqrt{p_i}}{2}\right)\bigg| \le  \sum_{j=1}^m \frac{\eps}{3m} \le \frac{\eps}{3}
    \end{align} 
    with probability at least $2/3$.
    Then it suffices to show that $\sum_{j=1}^m \sum_{i=1}^{n}\max\{B_j, B_{j+1}\} p_i |\beta_{i,j}|^2 P_j^2(\sqrt{p_i}/2)$ is a $2\eps/3$-approximation of $\sum_{i=1}^n p_i g(p_i)$. We first prove the two properties of $\beta_{i,j}$.
    
    \paragraph{Localization of $\beta_{i,j}$.} For a fixed $j$ with $2 \le j \le m$, and any $\sqrt{p_i}$ in $[0,\varphi_{j+1}]$ and $(\varphi_{j-1}, 1]$, \eq{cond-xi-s} and \eq{cond-xi-l} imply  \begin{align*}
        \|\ket{\xi_{i,j,\pm1}}-\i \ket{1}\|&\le \min\left\{1, \frac{\eps}{3m\max\{B_j,B_{j+1}\}}\right\}\quad \text{and}\\
        \|\ket{\xi_{i,j-1,\pm1}}-\pm\ket{0}\|&\le \min\left\{1, \frac{\eps}{3m\max\{B_j,B_{j+1}\}}\right\}
    \end{align*} respectively. Consequently, the amplitude $\beta_{i,j}$ satisfies 
    \begin{align}
        \label{eq:beta-cond-1-j>1}
        |\beta_{i,j}|^2 &= \frac{1}{2}\sum_{\nu\in \{-1,+1\}}\abs*{\braket{1}{\xi_{i,j-1,\nu}}}^2 \cdot \abs*{\braket{0}{\xi_{i,j,\nu}}}^2 \nonumber\\
        &\le \left(\min\left\{1, \frac{\eps}{3m\max\{B_j,B_{j+1}\}}\right\}\right)^4 \le\frac{\eps}{3m\max \{B_j,B_{j+1}\}},\quad \text{for } 2 \le j \le m.
    \end{align} For $j = 1$,  any $\sqrt{p_i} \in[0,\varphi_{j+1}]\cup (\varphi_{j-1}, 1] = [0,\varphi_{2}]$, by \eq{cond-xi-s} and the definition of $\ket{\xi_{i,0,\pm1}}$, we have \begin{align*}
        \|\ket{\xi_{i,j,\pm1}}-\i \ket{1}\|\le \min\left\{1, \frac{\eps}{3m\max\{B_j,B_{j+1}\}}\right\},\quad \ket{\xi_{i,j-1,\pm1}} = \pm\ket{0}. 
    \end{align*} Consequently, the amplitude $\beta_{i,j}$ satisfies \begin{align}
        \label{eq:beta-cond-1-j=1}
        |\beta_{i,j}|^2 &= \frac{1}{2}\sum_{\nu\in \{-1,+1\}}\abs*{\braket{0}{\xi_{i,j-1,\nu}}}^2 \cdot \abs{\braket{0}{\xi_{i,j,\nu}}}^2 \nonumber\\
        &\le \left(\min\left\{1, \frac{\eps}{3m\max\{B_j,B_{j+1}\}}\right\}\right)^2 \le \frac{\eps}{3m\max\{B_j,B_{j+1}\}}, \quad \text{for } j = 1.
    \end{align} In summary, combining \eq{beta-cond-1-j>1} and \eq{beta-cond-1-j=1}, we conclude that for any $j \in [m]$ and $\sqrt{p_i} \in [0,\varphi_{j+1}] \cup (\varphi_{j-1}, 1]$, the amplitude $\beta_{i,j}$ is bounded by
    \begin{align*}
        |\beta_{i,j}|^2 \le \frac{\eps}{3m\max \{B_j,B_{j+1}\}},
    \end{align*}
    which we call \emph{the localization bound of $\beta_{i,j}$}.
    \paragraph{Completeness of $\beta_{i,j}$.}
    For a fixed $j$ with $3 \le j \le m$ and any $\sqrt{p_i} \in [\varphi_{j}, \varphi_{j-1}]$, \eq{cond-xi-s} and \eq{cond-xi-l} imply  $\|\ket{\xi_{i,j-2,\pm 1}}-\i \ket{1}\|\le \eps/(6m B_j)$ and $\|\ket{\xi_{i,j,\pm 1}}-\pm\ket{0}\|\le \eps/(6mB_j)$. Then the amplitude $\beta_{i,j-1}$ and $\beta_{i,j}$ satisfies 
    \begin{align}
        |\beta_{i,j-1}|^2 + |\beta_{i,j}|^2 &=\frac{1}{2}\sum_{\nu\in \{-1,+1\}}\big(\abs*{\braket{1}{\xi_{i,j-2,\nu}}}^2 \cdot \abs*{\braket{0}{\xi_{i,j-1,\nu}}}^2 + \abs*{\braket{1}{\xi_{i,j-1,\nu}}}^2\cdot \abs{\braket{0}{\xi_{i,j,\nu}}}^2 \big)\nonumber \\
        &\ge \left(1- \frac{\eps}{6mB_j}\right)^2\frac{1}{2}\sum_{\nu\in \{-1,+1\}} (\abs*{\braket{0}{\xi_{i,j-1,\nu}}}^2+  \abs*{\braket{1}{\xi_{i,j-1,\nu}}}^2) \nonumber \\
        &=\left(1- \frac{\eps}{6mB_j}\right)^2 \ge 1-\frac{\eps}{3mB_j}.\label{eq:beta-cond-2-j>2}
    \end{align}
    For $j = 2$ and any $\sqrt{p_i} \in [\varphi_{j}, \varphi_{j-1}]$, by the definition of $\ket{\xi_{i,0,\pm 1}}$ and \eq{cond-xi-l}, we have $\ket{\xi_{i,j-2,\pm 1}} = \pm \ket{0}$ and $\|\ket{\xi_{i,j,\pm 1}}-\pm \ket{0}\|\le \eps/(6mB_j)$. Then the amplitude $\beta_{i,j-1}$ and $\beta_{i,j}$ satisfy \begin{align}
        |\beta_{i,j-1}|^2 + |\beta_{i,j}|^2 &=\frac{1}{2}\sum_{\nu\in \{-1,+1\}}\big(\abs*{\braket{0}{\xi_{i,j-2,\nu}}}^2 \cdot \abs*{\braket{0}{\xi_{i,j-1,\nu}}}^2 + \abs*{\braket{1}{\xi_{i,j-1,\nu}}}^2 \cdot \abs*{\braket{0}{\xi_{i,j,\nu}}}^2 \big)\nonumber \\
        &\ge \left(1- \frac{\eps}{6mB_j}\right)^2\frac{1}{2}\sum_{\nu\in \{-1,+1\}} (\abs*{\braket{0}{\xi_{i,j-1,\nu}}}^2+  \abs*{\braket{1}{\xi_{i,j-1,\nu}}}^2) \nonumber \\
        &=\left(1- \frac{\eps}{6mB_j}\right)^2 \ge 1-\frac{\eps}{3mB_j}.\label{eq:beta-cond-2-j>1}
    \end{align} 
    For $j = 1$ and any $\sqrt{p_i} \in [\varphi_{j}, \varphi_{j-1}]=[\varphi_{1}, \varphi_{0}]$, we have $\beta_{i,0,\pm1}=0$ and $\ket{\xi_{i,0,\pm 1}}=\pm\ket{0}$ by definition, and $\|\ket{\xi_{i,1,\pm 1}}-\pm \ket{0}\|\le \eps/(6mB_1)$. Consequently, the amplitude $\beta_{i,1}$ satisfies \begin{align}
        |\beta_{i,0}|^2 +|\beta_{i,1}|^2 &=|\beta_{i,1}|^2 \nonumber\\
        &=  \frac{1}{2}\sum_{\nu\in \{-1,+1\}} \abs*{\braket{0}{\xi_{i,0,\nu}}}^2 \cdot \abs*{\braket{0}{\xi_{i,1,\nu}}}^2 \nonumber\\
        &=  \frac{1}{2}\sum_{\nu\in \{-1,+1\}}\abs*{\braket{0}{\xi_{i,1,\nu}}}^2 \ge  \left(1- \frac{\eps}{6mB_1}\right)^2 \ge 1-\frac{\eps}{3mB_1}. \label{eq:beta-cond-2-j=1}
    \end{align}
    In summary, combining \eq{beta-cond-2-j>2}, \eq{beta-cond-2-j>1}, and \eq{beta-cond-2-j=1}, we conclude that for any $j \in [m]$ and $\sqrt{p_i} \in [\varphi_{j}, \varphi_{j-1}]$, the amplitude $\beta_{i,j-1}$ and $\beta_{i,j}$ satisfy 
    \begin{align}
        \label{eq:beta-cond-2}
        |\beta_{i,j-1}|^2 + |\beta_{i,j}|^2 \ge 1-\frac{\eps}{3mB_j}.
    \end{align} For the corresponding upper bound, bounding the outer projection probabilities by $1$ directly yields
    \begin{align*}
        |\beta_{i,j-1}|^2 + |\beta_{i,j}|^2 &\le \frac{1}{2}\sum_{\nu\in \{-1,+1\}}\Big( 1\cdot \abs*{\braket{0}{\xi_{i,j-1,\nu}}}^2 + \abs*{\braket{1}{\xi_{i,j-1,\nu}}}^2 \cdot 1\Big) \le 1.
    \end{align*}
    Combining this with~\eq{beta-cond-2}, we obtain the final completeness condition:
    \begin{align*}
         \left| |\beta_{i,j-1}|^2 + |\beta_{i,j}|^2 - 1 \right| \le \frac{\eps}{3mB_j},
    \end{align*}
    which we call the \emph{completeness of $\beta_{i,j}$}. 
\paragraph{Error analysis.}
    For any $i\in [n]$, let $j_i\in[m+1]$ be the index such that $\sqrt{p_i}\in \Delta_{j_i}$. If $j_i \in [m]$, for any $j\in [m]$ satisfying $j\le j_i-2$ or $j \ge j_i+1$, we have $\sqrt{p_i} \in (\varphi_{j_i}, \varphi_{j_i-1}] \subseteq [0,\varphi_{j+1}]\cup (\varphi_{j-1}, 1]$, and hence \begin{align}
        \label{eq:beta-cond-3}
        |\beta_{i,j}|^2 \le \frac{\eps} {3m \max\{ B_j, B_{j+1}\}} 
    \end{align} by the localization bound of $\beta_{i,j}$.
    Then we have \begin{align}
        &\Big|\max\{B_{j_i-1}, B_{j_i}\}|\beta_{i, j_i-1}|^2 P_{j_i-1}^2\left(\frac{\sqrt{p_i}}{2}\right)+\max\{B_{j_i}, B_{j_i+1}\}|\beta_{i, j_i}|^2 P_{j_i}^2\left(\frac{\sqrt{p_i}}{2}\right)- g(p_i)\Big|\nonumber \\
        \le ~ & \begin{multlined}[t]
            |\beta_{i, j_i-1}|^2 |\max\{B_{j_i-1}, B_{j_i}\}P_{j_i-1}^2\left(\frac{\sqrt{p_i}}{2}\right)-g(p_i)|+|\beta_{i, j_i}|^2 |\max\{B_{j_i}, B_{j_i+1}\}P_{j_i}^2\left(\frac{\sqrt{p_i}}{2}\right)- g(p_i)|\\ +\big||\beta_{i, j_i-1}|^2+|\beta_{i, j_i}|^2 -1\big||g(p_i)| 
        \end{multlined} \nonumber\\
        \le ~&\frac{\eps}{12m}(|\beta_{i, j_i-1}|^2+|\beta_{i, j_i}|^2)+ \frac{\eps}{3mB_{j_i}}|g(p_i)| \nonumber\\
        \le ~&  \frac{\eps}{6m}+ \frac{\eps}{6m}=\frac{\eps}{3m}.\label{eq:concentrate-beta}
    \end{align} 
    The second inequality follows from \cond{approx-P-j} and the completeness of $\beta_{i,j}$. The third inequality follows from the definition of $B_j$.  Then, we have \begin{align}
        &\bigg|\sum_{j=1}^m\max\{B_{j_i}, B_{j_i+1}\}|\beta_{i, j}|^2 P_j^2\left(\frac{\sqrt{p_i}}{2}\right)- g(p_i)\bigg| \nonumber\\
        \le~& \begin{multlined}[t]
            \sum_{j\in[m]:j\neq j_i-1, j_i}\max\{B_{j}, B_{j+1}\}|\beta_{i, j}|^2 P_j^2\left(\frac{\sqrt{p_i}}{2}\right)\\+ \Big|\max\{B_{j_{i}-1}, B_{j_i}\}|\beta_{i, j_i-1}|^2 P_{j_i-1}^2\left(\frac{\sqrt{p_i}}{2}\right)+\max\{B_{j_i}, B_{j_i+1}\}|\beta_{i, j_i}|^2 P_{j_i}^2\left(\frac{\sqrt{p_i}}{2}\right)- g(p_i)\Big| 
        \end{multlined}\nonumber\\
        \le~&\sum_{j\in[m]:j\neq j_i-1, j_i}\frac{\eps}{3m}+ \frac{\eps}{3m}\nonumber\\
        \le ~& \frac{\eps}{3}, \label{eq:approx-f-<=m}
    \end{align}
    where the third line follows from \cond{bound-P-j-1}, \eq{beta-cond-3}, and \eq{concentrate-beta}.
    If $j_i = m+1$, which means $\sqrt{p_i} \in [0, \varphi_m]$, then for any $j\le m-1$, we have $\sqrt{p_i} \in [0, \varphi_{j+1}]$, and hence \begin{align}
    \label{eq:beta-bound-4}
         |\beta_{i,j}|^2 \le\frac{\eps} {3m \max\{ B_j, B_{j+1}\}} .
    \end{align} by the localization bound of $\beta_{i,j}$. Therefore, if $\sqrt{p_i}\in (\varphi_{m+1}, \varphi_m]$, we have 
    \begin{align}
        & \bigg|\sum_{j=1}^m\max\{B_j, B_{j+1}\}|\beta_{i, j}|^2 P_j^2\left(\frac{\sqrt{p_i}}{2}\right)- g(p_i)\bigg| \nonumber \\
        \le {} & \sum_{j=1}^{m-1} \max\{B_j, B_{j+1}\}|\beta_{i, j}|^2 +B_m|\beta_{i, m}|^2P_m^2\left(\frac{\sqrt{p_i}}{2}\right)+|g(p_i)|\nonumber\\
        \le {} & \sum_{j=1}^{m-1}\frac{\eps}{3m}+ C|\beta_{i, m}|^2|g(p_i)|+|g(p_i)|\nonumber\\
        \le {} & \frac{\eps}{3} + (1+C)|g(p_i)|,\label{eq:approx-f-=m+1}
    \end{align}
    where the first line follows from \cond{bound-P-j-1}, and the second line follows from \cond{bound-P-j} and \eq{beta-bound-4}. If $\sqrt{p_i} \in [0, \varphi_{m+1}]$, by the localization bound of $\beta_{i,j}$, we have $|\beta_{i,m}|^2 \le \eps/(3m B_m)$, and hence $B_m|\beta_{i, m}|^2P_m^2(\sqrt{p_i}/2) \le \eps/(3m)$ by \cond{bound-P-j-1}. Then, analogously to \eq{approx-f-=m+1}, we have \begin{align}
         \bigg|\sum_{j=1}^m\max\{B_j, B_{j+1}\}|\beta_{i, j}|^2 P_j^2\left(\frac{\sqrt{p_i}}{2}\right)- g(p_i)\bigg| 
        &\le \sum_{j=1}^m \frac{\eps}{3m}+|g(p_i)|= \frac{\eps}{3} + |g(p_i)|.\label{eq:approx-f-=m+2}
    \end{align} Combining \eq{ae-error}, \eq{approx-f-<=m}, \eq{approx-f-=m+1}, and \eq{approx-f-=m+2}, the output $S$ satisfies 
    \begin{align}
        \Big|S-\sum_{i=1}^m p_i g(p_i)\Big|&\le  \sum_{i=1}^n p_i \Big|\sum_{j=1}^m\max\{B_j, B_{j+1}\}|\beta_{i, j}|^2 P_j^2\left(\frac{\sqrt{p_i}}{2}\right)- g(p_i)\Big|+\frac{\eps}{3}\nonumber\\
        &\le  (1+C)\sum_{i\in [n]: \sqrt{p_i}\in \Delta_{m+1}} p_i|g(p_i)|+\sum_{i=1}^np_i \frac{\eps}{3} +\frac{\eps}{3}\nonumber\\
        &\le \eps, \nonumber
    \end{align}
    where the last line follows from \cond{tail-bound}.
\paragraph{Complexity analysis.} 
By \thm{nd-svd}, each $D_j$ uses $O(\varphi_j^{-1}\log(m\mathcal{B}_j/\eps))$ queries (simplifying the log term to $\mathcal{B}_j$ is safe). 
By \thm{qsvt}, each $W_j$ uses $\deg(P_j) = O(\varphi_{j+1}^{-1}\alpha(\eps,m)\log(m\mathcal{B}_j/\eps))$ queries. 
Thus, each $V_j$ uses $O(\varphi_{j+1}^{-1}\alpha(\eps,m)\log(m\mathcal{B}_j/\eps))$ queries.
The squared norm $\|\Pi_j V_j \ket{\Psi_{\mathrm{in}}}\|^2$ in the $j$-th loop satisfies 
\begin{align*}
    \|\Pi_j V_j \ket{\Psi_{\mathrm{in}}}\|^2 & \le \sum_{i=1}^n p_i |\beta_{i, j}|^2 P_j^2\left(\frac{\sqrt{p_i}}{2}\right) + \frac{\eps}{6m\max\{B_j, B_{j+1}\}} \\
    & \le \sum_{i\in S_j} p_i  + \sum_{i\notin S_j} p_i  \frac{\eps}{3m\max\{B_j, B_{j+1}\}} +\frac{\eps}{6m\max\{B_j, B_{j+1}\}}\\
    &\le  \varphi_{j-1}^2 |S_j|+\frac{\eps}{2m\max\{B_j, B_{j+1}\}},
\end{align*}
where $S_j = \{i\in [n]:\sqrt{p_i}\in (\varphi_{j+1}, \varphi_{j-1}]\}$, and the second line follows from $|\beta_{i,j}|\le 1$, the localization bound of $\beta_{i,j}$, and \cond{bound-P-j-1}. 
Denoting the additive error by $\delta_j = \eps/(m\max\{B_j, B_{j+1}\})$, by \cor{two-stage-ae}, the query complexity of the $j$-th loop is 
\begin{align*}
    &O\bigg( \frac{\alpha(\eps,m)\log\left(\frac{m\mathcal{B}_j}{\eps}\right)}{\varphi_{j+1}}\frac{\|\Pi_j V_j \ket{\Psi_{\mathrm{in}}}\| + \sqrt{\delta_j}}{\delta_j}\log(m)\bigg)  \\
    = {} & O\bigg( \frac{\log(m)\alpha(\eps,m)\log\left(\frac{m\mathcal{B}_j}{\eps}\right)}{\varphi_{j+1}}\frac{\sqrt{\varphi_{j-1}^2 |S_j|+\delta_j} + \sqrt{\delta_j}}{\delta_j}\bigg) \\
    = {} & O\bigg( \frac{\log(m)\alpha(\eps,m)\log\left(\frac{m\mathcal{B}_j}{\eps}\right)}{\varphi_{j+1}} \bigg( \frac{\varphi_{j-1}\sqrt{|S_j|}}{\delta_j} + \frac{1}{\sqrt{\delta_j}} \bigg) \bigg) \\
    ={}& O\bigg( \log(m)\alpha(\eps,m)\log\left(\frac{m\mathcal{B}_j}{\eps}\right) \bigg( \frac{\varphi_{j-1}}{\varphi_{j+1}}\frac{m\max\{B_j, B_{j+1}\}\sqrt{|S_j|}}{\eps} + \frac{1}{\varphi_{j+1}}\sqrt{\frac{m\max\{B_j, B_{j+1}\}}{\eps}} \bigg) \bigg)\\
    ={}& O\bigg( \log(m)\alpha(\eps,m)\log\left(\frac{m\mathcal{B}_j}{\eps}\right)\bigg(\frac{m\mathcal{B}_j\sqrt{|S_j|}}{\eps} + \frac{1}{\varphi_{j+1}}\sqrt{\frac{m\mathcal{B}_j}{\eps}}\bigg)\bigg)
\end{align*}
where the last equality holds because $\varphi_{j-1}/\varphi_{j+1} = \rho^2$ is a constant, and we bound $\max\{B_j, B_{j+1}\} \le \mathcal{B}_j$.
Recall that we define $n_j =|S_j|$, and taking the sum over $j\in[m]$ completes the proof.
\end{proof}

\subsection{An example: Shannon entropy estimation}
For $g(x) = -\log(x)$, the functional $\sum_{i=1}^n p_i g(p_i) = \mathrm{H}(p)$ corresponds to the Shannon entropy. Applying \thm{multi-level-ae-1} yields the following result. 
\begin{corollary}\label{cor:Shannon-entropy}
Given an $n$-dimensional probability distribution $p$ and its unitary oracle $\mathcal{O}_p$. There is a quantum algorithm that computes an estimate of the Shannon entropy $\mathrm{H}(p)$ to within additive error $\varepsilon$, using 
\begin{align*}
    O\left(\frac{\sqrt{n}}{\eps}\log^{4.5}\left(\frac{n}{\eps}\right)\right)
\end{align*}
queries to $\mathcal{O}_p$.
\end{corollary}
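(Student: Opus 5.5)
We apply \thm{multi-level-ae-1} with $g(x)=-\log x$, so that $g(x^2) = 2\log(1/x)$. We take $\rho=2$ and $m=\lceil \log_2(n/\eps)\rceil + O(1)$, so that $\varphi_m^2 = O(\eps^2/n^2)$.

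\textbf{Choice of bounds.} On $\Delta_j=(2^{-j},2^{-j+1}]$ we have $|g(x^2)|\le 2j\log 2$. We therefore set $B_j = 4j\log 2 + O(1)$, which gives $B_j = O(j)$ and $\mathcal{B}_j = O(m) = O(\log(n/\eps))$.

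\textbf{Tail (\cond{tail-bound}).} Since $x\log(1/x)$ is increasing on small $x$, each index with $\sqrt{p_i}\le\varphi_m$ contributes at most $\varphi_m^2\log(1/\varphi_m^2)$. Summing over at most $n$ such indices gives
\[
n\,\varphi_m^2 \log(1/\varphi_m^2) = O\bigl((\eps^2/n)\log(n/\eps)\bigr) \le \frac{\eps}{3(1+C)}
\]
for a suitable constant in $m$.

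\textbf{Polynomials (\cond{bound-P-j-1}--\cond{bound-P-j}).} We need even polynomials $P_j$ with $|P_j|\le1$ on $[-1,1]$ and
\[
\max\{B_j,B_{j+1}\}\,P_j^2(y) \approx 2\log\frac{1}{2y} \quad \text{on } y\in(\varphi_{j+1}/2,\varphi_{j-1}/2],
\]
to accuracy $\eps/(12m)$.

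\begin{itemize}
\item Set $h_j(y)=\sqrt{2\log(1/(2y))/\max\{B_j,B_{j+1}\}}$. By the choice of $B_j$, this lies in $[c,1/\sqrt2]$ for a constant $c>0$ on the window. It is analytic on a neighbourhood of the window whose size is proportional to $\varphi_{j+1}$, because the log singularity sits at $0$, at distance $\Theta(\varphi_{j+1})$.
\item Rescale the window to $[-1,1]$ and take a Chebyshev/Taylor approximation of $h_j$ to accuracy $\delta=\eps/(\mathrm{poly}(m))$. This needs degree $O(\log(1/\delta))$ in the rescaled variable, i.e.\ $O(\log(m/\eps)/\varphi_{j+1})$ in $y$.
\item Multiply by an even rectangle/window polynomial of degree $O(\log(1/\delta)/\varphi_{j+1})$ that is $\approx 1$ on the window and small outside it. Then symmetrize to make the polynomial even, and bound it by $1$ via the standard bounded-extension lemma of \cite{gilyen2019quantum}.
\end{itemize}

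This gives $\alpha_j=O(1)$ and $\deg P_j=O(\log(m\mathcal{B}_j/\eps)/\varphi_{j+1})$. \cond{bound-P-j} then follows with $C=O(1)$: on $(\varphi_{m+1},\varphi_m]$, the quantity $B_mP_m^2$ is either close to $g(x^2)$ (the window covers this range) or bounded by $B_m = O(m)$, while $|g(x^2)|\ge 2m\log 2$ there.

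\textbf{Complexity.} Plug into the bound of \thm{multi-level-ae-1} with $\mathcal{B}_j=O(\log(n/\eps))$ and $m=O(\log(n/\eps))$:

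\begin{itemize}
\item \emph{First term.} Cauchy--Schwarz gives $\sum_j\sqrt{n_j}\le\sqrt{m\sum_j n_j}\le\sqrt{2mn}$. So this term totals
\[
O\Bigl(\log m\cdot \log(n/\eps)\cdot \frac{m\cdot m\sqrt{mn}}{\eps}\Bigr).
\]
Here $\log(n/\eps)\cdot m^{2.5}$ already gives $\log^{3.5}$, with an extra factor from $\log m$. To reach exactly $\log^{4.5}$ we bound $\log m\le\log(n/\eps)$.
\item \emph{Second term.} It is $\sum_j \varphi_{j+1}^{-1}\sqrt{m\mathcal{B}_j/\eps}$, a geometric sum dominated by $\varphi_{m+1}^{-1}\sqrt{m^2/\eps}=O\bigl((n/\eps)\cdot m/\sqrt{\eps}\bigr)$. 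This is too large.
\end{itemize}

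\textbf{Main obstacle.} The second term is the hard step. To fix it, choose $m$ so that $\varphi_m^2\approx \eps/(n\log(n/\eps))$ instead of $\eps^2/n^2$, i.e.\ $\varphi_m\approx\sqrt{\eps/n}$ up to logs. The tail bound still holds: $n\cdot\varphi_m^2\log(1/\varphi_m^2)$ becomes $\eps\cdot(\log n/\log(n/\eps))$ times constants, so we pick $\varphi_m^2 = \eps/(c\,n\log(n/\eps))$. Then
\[
\varphi_{m+1}^{-1}\sqrt{m^2/\eps}=O\bigl(\sqrt{n\log(n/\eps)/\eps}\cdot m/\sqrt{\eps}\bigr)=O\bigl(\sqrt n\,\log^{1.5}(n/\eps)/\eps\bigr),
\]
which is dominated by the first term after the extra logs. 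Tracking all logarithmic factors under this calibration of $m$ and $B_j$ should give the stated $O(\sqrt n\log^{4.5}(n/\eps)/\eps)$.
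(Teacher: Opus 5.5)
Your parameters end up the same as the paper's: $\rho=2$, $B_j=\Theta(j)$ so $\mathcal{B}_j=O(\log(n/\eps))$, Cauchy--Schwarz on $\sum_j\sqrt{n_j}$ (your $\sqrt{2mn}$, which accounts for the overlapping neighbourhoods, is the more careful version), and, after your recalibration, $2^{-2m}=\Theta(\eps/(nm))$. That is exactly the paper's choice $2^m/\sqrt m=\Theta(\sqrt{n/\eps})$. Your check of \cond{bound-P-j} from $|P_m|\le 1$ and $B_m\le C\min_{x\in(\varphi_{m+1},\varphi_m]}|g(x^2)|$ is fine. You depart from the paper only by constructing the $P_j$ yourself, via analytic approximation on the window plus the bounded-extension result of \cite{gilyen2019quantum}, instead of citing \cite[Lemma~3.3]{wang2024new}. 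For $j\ge 2$ that construction works with $\alpha_j=O(1)$.

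The gap is at $j=1$. You claim that $h_j(y)=\sqrt{2\log(1/(2y))/B'}$, with $B'=\max\{B_j,B_{j+1}\}$, is bounded below by a constant and analytic near the window because ``the log singularity sits at $0$''. But $h_j$ has a second singularity: the square-root branch point where $\log(1/(2y))=0$, i.e.\ $y=1/2$. For $j=1$ the window is $(1/8,1/2]$, so $h_1$ vanishes at its right endpoint. This corresponds to $\sqrt{p_i}=1$, where $g(p_i)=0$.

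No low-degree polynomial can repair this. Suppose $Q=P_1^2$ with $|P_1|\le 1$ satisfies \cond{approx-P-j} with accuracy $\delta=\eps/(12mB')$. Then $Q(1/2-t)\ge 4t/B'-\delta$, $Q(1/2)\le\delta$, and $Q(1/2+t)\ge 0$. Taking $t=B'\delta$, the second difference gives $Q''(\xi)\ge 1/(B'^2\delta)=\Omega(m/\eps)$ for some $\xi$ near $1/2$. Bernstein's inequality bounds $|Q''|$ by $O(\deg(P_1)^2)$ on $[-0.9,0.9]$, so $\deg(P_1)=\Omega(\sqrt{m/\eps})$. If some $p_i=\Theta(1)$, the $j=1$ loop needs $\Theta(m/\eps)$ amplitude-estimation rounds, each costing $\Omega(\sqrt{m/\eps})$ queries. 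That is roughly $\eps^{-3/2}$ in total, which breaks the claimed bound whenever $n\ll 1/\eps$.

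The fix is cheap. Since $\sum_i p_i=1$, run the framework on $g(x)=\log(4/x)$, which estimates $\mathrm{H}(p)+\log 4$, and subtract $\log 4$. Then $g(x^2)=2\log(1/y)$ with $y=x/2$ is at least $2\log 2$ on every window. Its branch point moves to $y=1$, a constant distance from the $j=1$ window. The rest of your argument (tail, \cond{bound-P-j}, complexity) goes through with only the constants in $B_j$ changed. Whatever polynomial family is used, including the one the paper cites, has to handle this zero of $g(x^2)$ at $x=1$.
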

\begin{proof}
    Let $\rho = 2$, which implies $\varphi_j = 2^{-j}$. The fourth condition in \thm{multi-level-ae-1} is satisfied by choosing $m$ such that $2^m/\sqrt{m} = \Theta(\sqrt{n/\eps})$, which ensures that the tail sum is bounded by
    \begin{align*}
        \sum_{i\in[n]:\sqrt{p_i}\in [0,\varphi_m]} p_i \log\left(\frac{1}{p_i}\right) = O(n m2^{-2m} ) = O(\eps).
    \end{align*}
    Following the polynomial approximation of $\sqrt{\log(1/x)}$ in \cite[Lemma~3.3]{wang2024new}, there exist polynomials $P_j$ with bound $\mathcal{B}_j = O(j)$ and degree $\deg(P_j) = O(2^j \log(1/\eps))$ satisfying the first three conditions in \thm{multi-level-ae-1}. Then, the query complexity is \begin{align*}
    &O\bigg( \sum_{j=1}^m\log(m)\log\left(\frac{m\mathcal{B}_j}{\eps}\right)\bigg(\frac{m\mathcal{B}_j\sqrt{n_j}}{\eps} + \frac{1}{\varphi_{j+1}}\sqrt{\frac{m\mathcal{B}_j}{\eps}}\bigg) \bigg)\\
     ={} & O\bigg( \sum_{j=1}^m\log^3\left(\frac{n}{\eps}\right)\bigg(\frac{\log\left(\frac{n}{\eps}\right)\sqrt{n_j}}{\eps} + 2^j\frac{1}{\sqrt{{\eps}}}\bigg) \bigg) \\
    ={} & O\bigg( \log^3\left(\frac{n}{\eps}\right)\bigg(\frac{\log\left(\frac{n}{\eps}\right)\sqrt{mn}}{\eps} + 2^m\frac{1}{\sqrt{{\eps}}}\bigg) \bigg)  \\
    ={} &O\Big(\log^{4.5}\left(\frac{n}{\eps}\right)\frac{\sqrt{n}}{\eps}\Big),
\end{align*}
where the third line follows from the Cauchy-Schwarz inequality: 
\[
\sum_{j=1}^m \sqrt{n_j} \le \sqrt{\sum_{j=1}^m n_j}\sqrt{\sum_{j=1}^m 1} = \sqrt{nm}.
\]
\end{proof}

\section{Tsallis Entropy Estimation}

\subsection{The case of \texorpdfstring{$0<q<1$}{0<q<1}}

In the following, we show that there exists a suitable choice of parameters for estimating the functional $\sum_j p_j^q$ that satisfies the requirements of multi-level amplitude estimation when $0< q< 1$. 
\begin{proposition}\label{prop:1312130}
For $0<q<1$, let $\rho = 2$, $C = 2^{3-q}$, $m = \ceil{\frac{1}{2q} \log_{2}\rbra{\frac{3\rbra{1+2^{3-q}}n}{\varepsilon}}}$. Let $\varphi_j = \rho^{-j}$, $B_j = 4 \rho^{2j\rbra{1-q}}$, $\varepsilon_j = \frac{\varepsilon}{96\rho^{2\rbra{j+1}\rbra{1-q}}m}$, $\delta_j = 2^{-1} \rho^{-j-1}$ for $j = 0, 1, \dots, m+1$. 
Set $g(x) = x^{q-1}$.
Let $P_j\rbra{x} \coloneqq P_{1-q, \delta_j, \varepsilon_j}$ be the polynomials specified in \lem{poly-approx-negative-powers} such that
\begin{equation}\label{eq:1242050}
\begin{split}
    \abs{ P_j\rbra{x} - \frac{\delta_j^{1-q}}{2}x^{q-1} } \leq \varepsilon_j\quad \forall x \in \interval{\delta_j}{1},  \\
    \abs{ P_j\rbra{x} } \leq 1\quad \forall x \in \interval{-1}{1}, 
\end{split}
\end{equation}
with $\deg\rbra{P_j} = O\rbra{\frac{1}{\delta_j}\log\rbra{\frac{1}{\varepsilon_j}}}$. 
Then, we have the following properties.
\begin{enumerate}
    \item $B_j \ge 2\max_{x\in (\varphi_j ,\varphi_{j-1}]} |g(x^2)|$ for $j\in [m+1]$, 
    \item For all $j\in [m]$ and $x\in [-1,1]$,
    \[|P_j(x)|\leq 1.\]
    \item For all $j\in [m]$ and $x \in (\varphi_{j+1}, \varphi_{j-1}]$, 
    \[\abs{\max\cbra{B_j, B_{j+1}}P_j^2\left(\frac{x}{2}\right)-g(x^2)} \le \frac{\eps}{12m}.\]
    \item For $x \in (\varphi_{m+1},\varphi_m]$, 
    \[B_m P_m^2\rbra*{\frac{x}{2}} \leq C \abs{g\rbra{x^2}}.\]
    \item The functional satisfies 
    \[\sum_{i\in [n]: \sqrt{p_i}\in [0, \varphi_m]} p_i|g(p_i)|\le \frac{\eps}{3(1+C)}.\]
\end{enumerate}
\end{proposition}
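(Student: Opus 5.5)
The plan is to check the five properties one at a time. Each is a short calculation from the choice $g(x)=x^{q-1}$, so that $g(x^2)=x^{2(q-1)}$, together with \eq{1242050}. Throughout I write $h_j(y) := \frac{\delta_j^{1-q}}{2}y^{q-1} = \frac12\rbra*{\frac{\delta_j}{y}}^{1-q}$ for the target of $P_j$. Since $0<q<1$, the function $x\mapsto x^{2(q-1)}$ is decreasing on $(0,1]$, and this monotonicity drives every bound.

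\textbf{Properties 1, 2 and 5.} For property 1, the maximum of $g(x^2)$ over $(\varphi_j,\varphi_{j-1}]$ is attained at the left end and is at most $\varphi_j^{2(q-1)} = 2^{2j(1-q)}$. This is exactly $B_j/4$, so $B_j \ge 2\max|g(x^2)|$ with room to spare. Property 2 is the second line of \eq{1242050}, i.e.\ \lem{poly-approx-negative-powers}. For property 5, every index with $\sqrt{p_i}\le\varphi_m$ has $p_i\le 4^{-m}$, so $p_i|g(p_i)| = p_i^q \le 2^{-2mq}$. The choice $m=\ceil{\frac{1}{2q}\log_2(3(1+C)n/\eps)}$ gives $2^{-2mq}\le \eps/(3(1+C)n)$. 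Summing over at most $n$ indices gives the bound.

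\textbf{Property 3.} Let $x\in(\varphi_{j+1},\varphi_{j-1}]$. Then $x/2\in(2^{-j-2},2^{-j-1}\cdot 2]$, and its lower end is exactly $\delta_j=2^{-j-2}$. Hence $x/2\in[\delta_j,1]$ and the approximation in \eq{1242050} applies.

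Because $B_j$ increases in $j$, $\max\{B_j,B_{j+1}\}=B_{j+1}=4\cdot 2^{2(j+1)(1-q)}$. A direct computation gives
\begin{align*}
B_{j+1}\,h_j(x/2)^2 = 4\cdot 2^{2(j+1)(1-q)}\cdot \tfrac14 (2\delta_j)^{2(1-q)} x^{2(q-1)} = x^{2(q-1)} = g(x^2),
\end{align*}
since $2\delta_j = 2^{-(j+1)}$. The error term is then
\begin{align*}
B_{j+1}\abs{P_j^2-h_j^2} \le B_{j+1}\,\eps_j\,(\abs{P_j}+h_j).
\end{align*}
Here $h_j(x/2)\le 1/2$ because $x/2\ge\delta_j$, and $\abs{P_j}\le 1$. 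The definition of $\eps_j$ cancels the factor $2^{2(j+1)(1-q)}$, leaving at most $4\cdot\frac{3}{2}\cdot\frac{\eps}{96m}\le \frac{\eps}{12m}$.

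\textbf{Property 4.} Let $x\in(\varphi_{m+1},\varphi_m]$. Then $x/2\in(\delta_m, 2\delta_m]$, so again $\abs{P_m(x/2)}\le h_m(x/2)+\eps_m$. On this range $\delta_m/(x/2)\ge 1/2$, so $h_m(x/2)\ge 2^{q-2}$, and $\eps_m$ is far smaller than this. Therefore $P_m^2\le (2h_m)^2 = 4h_m^2$.

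Repeating the computation from property 3 with $B_m$ in place of $B_{m+1}$ gives $B_m h_m^2 = 2^{-2(1-q)}g(x^2)$. Combining the two bounds,
\begin{align*}
B_mP_m^2(x/2)\le 4\cdot 2^{-2(1-q)}g(x^2)\le 2^{3-q}\abs{g(x^2)} = C\abs{g(x^2)}.
\end{align*}

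\textbf{Expected obstacle.} The only delicate point is the interval bookkeeping. One must confirm that $x/2$ stays within $[\delta_j,1]$ on the whole doubled interval $(\varphi_{j+1},\varphi_{j-1}]$, and that the factor $2^{2(j+1)(1-q)}$ in $\max\{B_j,B_{j+1}\}$ cancels exactly against the normalization $\delta_j^{1-q}$ and against $\eps_j$. The rest is arithmetic with constants.
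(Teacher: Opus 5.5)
Your proposal is correct and follows the paper's proof essentially step by step. Property 3 rests on the same identity $B_{j+1}h_j(x/2)^2=g(x^2)$ and the factorization $|P_j^2-h_j^2|\le\varepsilon_j(|P_j|+h_j)$; you bound the second factor by $3/2$ where the paper uses $2$. Property 5 uses the same tail bound $p_i^q\le 2^{-2mq}$. In property 4 you absorb $\varepsilon_m$ via $\varepsilon_m\le h_m(x/2)$, whereas the paper bounds that term using $x^{q-1}\ge 2^{m(1-q)}$; this is only a cosmetic difference.
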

\begin{proof}
We can verify the first and second properties directly from the definition.

For the third property,
\begin{align}
\left|\max\{B_j,B_{j+1}\}P_j^2\left(\frac{x}{2}\right)-g(x^2)\right| ={} & \left|4\cdot 2^{2(j+1)(1-q)}P^2_j\left(\frac{x}{2}\right)-x^{2(q-1)}\right| \nonumber \\
={} & 4\cdot 2^{2(j+1)(1-q)}\left|P^2_j\left(\frac{x}{2}\right)- \frac{2^{-2(j+2)(1-q)}}{2^{2}}\left(\frac{x}{2}\right)^{2(q-1)}\right| \nonumber \\
={} &4\cdot 2^{2(j+1)(1-q)}\left|P^2_j\left(\frac{x}{2}\right)- \frac{\delta_j^{2(1-q)}}{4}\left(\frac{x}{2}\right)^{2(q-1)}\right| \nonumber \\
\leq{}  &4\cdot 2^{2(j+1)(1-q)} \cdot \varepsilon_j \cdot 2 \label{eq:1242051} \\
= {} & 8\cdot 2^{2(j+1)(1-q)} \cdot \frac{\varepsilon}{96\cdot 2^{2(j+1)(1-q)}m} \nonumber \\
={} & \frac{\varepsilon}{12m}  \nonumber
\end{align}
where \eq{1242051} is due to \eq{1242050}.

For the fourth property,
\begin{align}
B_m P_m^2\left(\frac{x}{2}\right)&= 4\cdot 2^{2m(1-q)} P_m^2\left(\frac{x}{2}\right) \nonumber \\
&\leq 4\cdot 2^{2m(1-q)} \cdot \left(\frac{\delta_m^{1-q}}{2^q}x^{q-1}+\varepsilon_m\right)^2 \nonumber \\
&=4\cdot 2^{2m(1-q)}\cdot \left(\frac{2^{-(m+2)(1-q)}}{2^q}x^{q-1}+\frac{\varepsilon}{96 \cdot 2^{2(m+1)(1-q)}m}\right)^2 \nonumber \\
&\leq \left(\frac{1}{2^{1-q}}x^{q-1}+\frac{\varepsilon}{48 \cdot 2^{2(m+1)(1-q)}m}x^{q-1}\right)^2\label{eq:1251407}\\
& \leq (x^{q-1}+x^{q-1})^2\nonumber \\
&\leq C |g(x^2)|.\nonumber 
\end{align}
where \eq{1251407} is due to $x\leq \varphi_m$. 

For the fifth property,
\begin{align}
\sum_{i\in [n]: \sqrt{p_i}\in [0, \varphi_m]} p_i|g(p_i)|&=\sum_{i\in [n]: \sqrt{p_i}\in [0, \varphi_m]} p_i^q \leq n \varphi_m^{2q} =n2^{-2mq} \nonumber \\
&\leq n \cdot \frac{\varepsilon}{3(1+2^{3-q})n} \nonumber \\
&=\frac{\varepsilon}{3(1+C)}.\nonumber 
\end{align}
\end{proof}

Then, we can prove the following result.
\begin{theorem}\label{thm:1312127}
    For a constant $q \in (0,1)$, and a probability distribution $p$ of size $n$. 
    There is a quantum algorithm that, given 
    a unitary oracle $\mathcal{O}_p$ to $p$, computes an estimate of $\TsaF{q}{p}$ to within additive error $\varepsilon$, using 
    \[O\left(\frac{n^{\frac{2-q}{2q}}}{\varepsilon^{\frac{1}{q}}}\log^2\!\left(\frac{n}{\varepsilon}\right)\log\!\left(\log\!\left(\frac{n}{\varepsilon}\right)\right)\right)\]
    queries to $\mathcal{O}_p$.
\end{theorem}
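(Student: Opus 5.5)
The plan is to apply \thm{multi-level-ae-1} with the parameter choices of \prop{1312130}. There $g(x)=x^{q-1}$, so $\sum_i p_i g(p_i)=\TsaF{q}{p}$, and \prop{1312130} verifies all four conditions. Properties 1–2 give the definition of $B_j$ and \cond{bound-P-j-1}. Property 3 gives \cond{approx-P-j}. Property 4 gives \cond{bound-P-j} with $C=2^{3-q}$. Property 5 gives \cond{tail-bound}. It then remains to match the degree hypothesis and evaluate the resulting complexity sum.

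\textbf{Degree hypothesis.} With $\rho=2$ we have $\delta_j=2^{-1}\varphi_{j+1}$, so $\deg(P_j)=O(\varphi_{j+1}^{-1}\log(1/\eps_j))$. Since $\eps_j=\eps/(96\cdot 2^{2(j+1)(1-q)}m)$, we get $\log(1/\eps_j)=O(\log(m B_j/\eps))\le O(\log(m\mathcal{B}_j/\eps))$. Hence we may take $\alpha_j\equiv 1$.

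\textbf{Size of the parameters.} Since $\mathcal{B}_j=\max\{B_{j-1},\dots,B_{j+2}\}=O(2^{2j(1-q)})$, each factor $\log(m\mathcal{B}_j/\eps)$ is $O(\log(n/\eps))$. Also $m=\Theta(\frac1q\log(n/\eps))$, so $\log m=O(\log\log(n/\eps))$. Therefore each summand in \thm{multi-level-ae-1} carries the prefactor $O(\log\log(n/\eps)\log(n/\eps))$.

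\textbf{The two sums.} The first sum is $\sum_j m\mathcal{B}_j\sqrt{n_j}/\eps$. Bounding $\sqrt{n_j}\le\sqrt n$ crudely would lose a factor, so I would use the geometric growth of $\mathcal{B}_j$ instead:
\[
\sum_{j=1}^m \mathcal{B}_j\sqrt{n_j}\le \sqrt{n}\sum_{j\le m}O\bigl(2^{2j(1-q)}\bigr)=O\bigl(\sqrt n\,2^{2m(1-q)}\bigr).
\]
Since $2^{2mq}=\Theta(n/\eps)$, we have $2^{2m(1-q)}=\Theta((n/\eps)^{(1-q)/q})$. The first sum is then
\[
O\!\left(\frac{m\sqrt n}{\eps}\Bigl(\frac n\eps\Bigr)^{\frac1q-1}\right)=O\!\left(m\,\frac{n^{\frac1q-\frac12}}{\eps^{\frac1q}}\right),
\]
and $\frac1q-\frac12=\frac{2-q}{2q}$, as required.

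The second sum is $\sum_j 2^{j+1}\sqrt{m\mathcal{B}_j/\eps}$, which is dominated by $j=m$:
\[
O\!\left(2^{m}\cdot 2^{m(1-q)}\sqrt{m/\eps}\right)=O\!\left(2^{m(2-q)}\sqrt{m/\eps}\right).
\]
Substituting $2^{m}=\Theta((n/\eps)^{1/(2q)})$ gives $O\bigl(\sqrt m\,(n/\eps)^{\frac{2-q}{2q}}\eps^{-1/2}\bigr)=O\bigl(\sqrt m\, n^{\frac{2-q}{2q}}/\eps^{\frac1q}\bigr)$. This is no larger than the first term.

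\textbf{Main obstacle.} The delicate part is the bookkeeping of the factors of $m$. The prefactor is $\log m\cdot\log(n/\eps)$, and the first sum contributes a further factor $m=O(\log(n/\eps))$. Together these give $\log^2(n/\eps)\log\log(n/\eps)$, matching the claim. The key point is that the geometric sum over $\mathcal{B}_j$ must be used rather than the Cauchy–Schwarz bound used for Shannon entropy, since the latter would cost an extra $\sqrt m$. Finally, the success probability $2/3$ from \thm{multi-level-ae-1} suffices for the stated result.
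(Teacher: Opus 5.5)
Your proof is correct and follows the paper's route: you invoke \thm{multi-level-ae-1} with the parameters of \prop{1312130} (taking $\alpha_j\equiv 1$), then sum the two geometric series using $2^{2mq}=\Theta(n/\eps)$. The only difference is cosmetic. The paper bounds $\sum_j\mathcal{B}_j\sqrt{n_j}$ by the weighted Cauchy--Schwarz inequality $\sqrt{\sum_j\mathcal{B}_j^2}\,\sqrt{\sum_j n_j}\le O(2^{2m(1-q)})\sqrt{2n}$, which loses no $\sqrt{m}$ (contrary to your closing remark, which only applies to the unweighted Shannon-style version), and your termwise bound $n_j\le n$ gives the same $O(\sqrt{n}\,2^{2m(1-q)})$.
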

\begin{proof}
In \prop{1312130}, we showed that the chosen parameters satisfy the constraints required by \thm{multi-level-ae-1}. Thus, we can obtain an estimate of $\TsaF{q}{p}=\sum_{i=1}^n p_i^{q}$ with query complexity: 
\begin{align}
& \sum_{j=1}^m {\log\rbra{m}\log\left(\frac{m\mathcal{B}_j}{\eps}\right)}\left(\frac{m\mathcal{B}_j\sqrt{|\{i\in [n]:\sqrt{p_i}\in (\varphi_{j+1}, \varphi_{j-1}]\}|}}{\eps} + \frac{1}{\varphi_{j+1}}\sqrt{\frac{m\mathcal{B}_j}{\varepsilon}} \right)\nonumber \\
\leq {}  & \log(m)\log\left(\frac{m\mathcal{B}_m}{\varepsilon}\right) \sum_{j=1}^m \left(\frac{m\mathcal{B}_j\sqrt{|\{i\in [n]:\sqrt{p_i}\in (\varphi_{j+1}, \varphi_{j-1}]\}|}}{\varepsilon}+\frac{1}{\varphi_{j+1}}\sqrt{\frac{m\mathcal{B}_j}{\varepsilon}}\right) \nonumber \\
\leq {} &\log(m)\log\left(\frac{m\mathcal{B}_m}{\varepsilon}\right)\left(\frac{m}{\varepsilon}\sqrt{\sum_{j=1}^m \mathcal{B}_j^2} \sqrt{\sum_{j=1}^m|\{i\in [n]:\sqrt{p_i}\in (\varphi_{j+1}, \varphi_{j-1}]\}|}+\sum_{j=1}^m\frac{1}{\varphi_{j+1}}\sqrt{\frac{m\mathcal{B}_j}{\varepsilon}}\right)\nonumber \\
\leq {} &\log(m)\log\left(\frac{m\mathcal{B}_m}{\varepsilon}\right)\left(\frac{m}{\varepsilon}\sqrt{\sum_{j=1}^m 16\cdot 2^{4(j+2)(1-q)}} \sqrt{2n}+\sqrt{\frac{m}{\varepsilon}}\sum_{j=1}^m 2^{j+1}\cdot 2\cdot 2^{(j+2)(1-q)} \right)\nonumber\\
= {} &O\left(\log(m)\log\left(\frac{m\mathcal{B}_m}{\varepsilon}\right) \left(\frac{m}{\varepsilon}\cdot 2^{2m(1-q)} \cdot\sqrt{n} + \sqrt{\frac{m}{\varepsilon}}\cdot 2^{m(2-q)}\right)\right) \nonumber\\
={} &O\left(\log(m)\log\left(\frac{m\mathcal{B}_m}{\varepsilon}\right) \left(\frac{m}{\varepsilon}\cdot \left(\frac{3(1+2^{3-q})n}{\varepsilon}\right)^{\frac{1-q}{q}} \cdot\sqrt{n} + \sqrt{\frac{m}{\varepsilon}}\cdot \left(\frac{3(1+2^{3-q})n}{\varepsilon}\right)^{\frac{2-q}{2q}}\right)\right) \nonumber\\
={} &O\left(\log(m)\log\left(\frac{m\mathcal{B}_m}{\varepsilon}\right) \left(\frac{m}{\varepsilon} \cdot \left(\frac{n}{\varepsilon}\right)^{\frac{1-q}{q}}\cdot \sqrt{n}+\sqrt{\frac{m}{\varepsilon}}\cdot \left(\frac{n}{\varepsilon}\right)^{\frac{2-q}{2q}}\right)\right) \nonumber \\
={} &O\left(\log(m)\log(m\mathcal{B}_m/\varepsilon) \left(m\frac{n^{\frac{2-q}{2q}}}{\varepsilon^{\frac{1}{q}}}+\sqrt{m}\cdot \frac{n^{\frac{2-q}{2q}}}{\varepsilon^{\frac{1}{q}}}\right)\right)\nonumber\\
={} & O\left(\frac{n^{\frac{2-q}{2q}}}{\varepsilon^{\frac{1}{q}}}\log^2\!\left(\frac{n}{\varepsilon}\right)\log\!\left(\log\!\left(\frac{n}{\varepsilon}\right)\right)\right).\nonumber
\end{align}
Note that $\Tsa{q}{p} = \frac{1}{1-q}(\TsaF{q}{p}-1)$, 
the complexity is the same for obtaining an $\varepsilon$-estimate of $\Tsa{q}{p}$.
\end{proof}

\subsection{The case of \texorpdfstring{$q>1$}{q>1}}

In the following, we show that there exists a suitable choice of parameters for estimating the functional $\sum_j p_j^q$ for $q>1$ that satisfies the requirements of multi-level amplitude estimation.

\begin{proposition}\label{prop:tsallis-q-greater-poly-approx}
    Let $q > 1$ be some constant, $\varepsilon \in\interval[open]{0}{1/2}$, $m = \ceil{\frac{\log_2 (9/\varepsilon)}{2(q-1)}}$, $g(x)\coloneqq x^{q-1}$, $\varphi_j = 2^{-j}$ for $j = 0, \ldots, m+1$, $\varphi_{m+2} =0$,  $B_j = 2^{2-2(j-2)(q-1)}$ for $j = 1, \ldots , m+1$, and $\varepsilon_j = \frac{\varepsilon}{24m B_j}$ for $j=1, \ldots, m-1$ and $\varepsilon_m = \min\cbra{\frac{\varepsilon}{24m B_m}, \frac{\varepsilon}{ 2^{1+(m+2)(q-1)}B_m}}$. Let $P_j(x) \coloneqq S_{q-1, \varphi_{j+1}/2, \varphi_{j-1}/2, \varepsilon_j}$ for $j\in [m]$, where $S$ is the polynomial specified in~\lem{poly-approx-positive-powers}. Then, it holds that
       \begin{enumerate}
    \item  For all $j \in [m]$ and $x\in[-1,1]$,
    \begin{equation*}
         |P_j(x)|\le 1. 
    \end{equation*}
    \item For all $j \in [m+1]$, 
    \begin{equation*}
    B_j \ge 2\max_{x\in (\varphi_j ,\varphi_{j-1}]} |g(x^2)|.
    \end{equation*}
    \item  For all $j \in [m]$ and $x \in (\varphi_{j+1}, \varphi_{j-1}]$,
    \begin{equation*}
        \left|\max\{B_j, B_{j+1}\}P_j^2\left(\frac{x}{2}\right)-g(x^2)\right| \le \frac{\eps}{12m}. 
    \end{equation*}
    \item For any $x\in (\varphi_{m+1},\varphi_m]$,
    \begin{equation*}
         B_m P_m^2\left(\frac{x}{2}\right) \le 2|g(x^2)|. 
    \end{equation*}
    \item The functional satisfies
    \begin{equation*}
        \sum_{i\in [n]: \sqrt{p_i}\in [0,\varphi_m]} p_i|g(p_i)|\le \frac{\eps}{9}.
    \end{equation*}
    \item $P_1, \ldots, P_m\in \mathbb{R}[x]$ are of the same parity with $\deg(P_j) = O(\log(mB_j/\eps)\log (1/\varepsilon)/\varphi_{j+1})$ for $j = 1, 2, \ldots, m-1$, and $\deg(P_m) = O(\log(mB_m/\eps)\log^2 (1/\varepsilon)/\varphi_{m+1})$.
\end{enumerate}
\end{proposition}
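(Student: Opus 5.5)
Each property reduces to an explicit computation with the parameters, using \lem{poly-approx-positive-powers} with $c=q-1$, $\nu=\varphi_{j+1}/2$, $\beta=\varphi_{j-1}/2$, $\eta=\varepsilon_j$. Here the lemma's target function is
\[
h_j(y)=2^{-q}(\varphi_{j-1}/2)^{-(q-1)}y^{q-1}.
\]
I will verify the six properties in the order 1, 2, 6, 3, 4, 5, since each later one uses the explicit form of $P_j$.

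\textbf{Properties 1, 2 and 6.} Property 1 is the third clause of \lem{poly-approx-positive-powers}. For property 2, $|g(x^2)|=x^{2(q-1)}\le\varphi_{j-1}^{2(q-1)}=2^{-2(j-1)(q-1)}$ on $\Delta_j$. This is at most $B_j/2=2^{1-2(j-2)(q-1)}$, since the ratio is $2^{1+2(q-1)}\ge1$. For property 6, all $S$ polynomials are even, so they share a parity. The lemma gives $\deg P_j=O\big(\tfrac{q}{\varphi_{j+1}}\log\tfrac{1}{\beta\nu\varepsilon_j}\big)$. Because $\beta\nu\approx 4^{-j}\ge\mathrm{poly}(\varepsilon)$ for $j\le m=O(\log(1/\varepsilon))$, the logarithm is $O(\log(mB_j/\varepsilon)+\log(1/\varepsilon))$. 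This is within the stated bound. For $j=m$, the extra $2^{(m+2)(q-1)}=O(\varepsilon^{-1/2})$ factor in $1/\varepsilon_m$ only adds $O(\log(1/\varepsilon))$ inside the log, which is absorbed by the stated $\log^2$.

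\textbf{Property 3.} For $x\in(\varphi_{j+1},\varphi_{j-1}]$ we have $x/2\in[\nu,\beta]$, so $P_j(x/2)=h_j(x/2)+e$ with $|e|\le\varepsilon_j$. The key identity to check is
\[
\max\{B_j,B_{j+1}\}\,h_j(x/2)^2=x^{2(q-1)}.
\]
Since $B_j$ is decreasing in $j$, the maximum is $B_j$, and
\[
h_j(x/2)^2=2^{-2q}(\varphi_{j-1}/2)^{-2(q-1)}(x/2)^{2(q-1)}=2^{-2q}\,2^{2(j-1)(q-1)}x^{2(q-1)}.
\]
Multiplying by $B_j=2^{2-2(j-2)(q-1)}$ gives $2^{2-2q+2(q-1)}x^{2(q-1)}=x^{2(q-1)}$, so the identity holds. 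Then
\[
|B_jP_j^2-g|=B_j|e|\,|2h_j+e|\le 3B_j\varepsilon_j,
\]
using $h_j\le1/2$ on $[\nu,\beta]$ (from $(y/\beta)^{q-1}\le1$ and $2^{-q}\le1/2$). This gives $3B_j\varepsilon_j\le\varepsilon/(8m)$. That is slightly too large against $\varepsilon/12m$, so I would sharpen the bound: $|2h_j+e|\le 2^{1-q}+\varepsilon_j\le 2$ when $q>1$, which gives $2B_j\varepsilon_j\le\varepsilon/(12m)$ as required.

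\textbf{Properties 4 and 5.} For property 4, take $x\in(\varphi_{m+1},\varphi_m]$, so $x/2\in(\varphi_{m+2}\cdot 2,\ \varphi_{m+1}]$. This still lies in $[\nu_m,\beta_m]$ with $\nu_m=\varphi_{m+1}/2$, so the computation above applies. It gives
\[
B_mP_m^2\le\big(\sqrt{g(x^2)}+\sqrt{B_m}\,\varepsilon_m\big)^2.
\]
The second choice in $\varepsilon_m$ makes $\sqrt{B_m}\,\varepsilon_m$ small relative to $\sqrt{g(x^2)}\ge x^{q-1}\ge 2^{-(m+1)(q-1)}$, so the total is at most $2g(x^2)$. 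This is the main bookkeeping obstacle: I expect to need $B_m\varepsilon_m\le 2^{-(m+2)(q-1)}$-type bounds together with $B_m\le O(1)\cdot 2^{-2m(q-1)}$, and to fiddle with constants. For property 5,
\[
\sum_{i:\sqrt{p_i}\le\varphi_m}p_i^q\le\varphi_m^{2(q-1)}\sum_i p_i\le 2^{-2m(q-1)}\le\varepsilon/9,
\]
by the choice of $m$.
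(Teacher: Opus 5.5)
Your arguments for properties 1, 2, 3, 5 and 6 match the paper's proof: the same identity $B_j h_j(x/2)^2=x^{2(q-1)}$, the same bound $|P_j+h_j|\le 2$ giving $2B_j\varepsilon_j\le\varepsilon/(12m)$, and the same tail bound $\varphi_m^{2(q-1)}\le\varepsilon/9$. Two small repairs are needed there:
\begin{itemize}
\item You should check $\varepsilon_j<1/2$ so that \lem{poly-approx-positive-powers} applies.
\item In property 4 the interval for $x/2$ should read $(\varphi_{m+2},\varphi_{m+1}]$.
\end{itemize}

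\textbf{The gap is property 4.} You never prove it, and the mechanism you name does not work on the whole parameter range. Your reduction is the right one: it suffices that $\sqrt{B_m}\,\varepsilon_m\le(\sqrt2-1)\,2^{-(m+1)(q-1)}$.

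The second branch of the minimum gives exactly
\[
\sqrt{B_m}\cdot\frac{\varepsilon}{2^{1+(m+2)(q-1)}B_m}=\varepsilon\,2^{-2-4(q-1)},
\]
which is of order $\varepsilon$. The target $2^{-(m+1)(q-1)}>\tfrac{\sqrt{\varepsilon}}{3}\,4^{-(q-1)}$ is only of order $\sqrt{\varepsilon}$. So this branch suffices only when $\sqrt{\varepsilon}\le\tfrac{4(\sqrt2-1)}{3}\,4^{q-1}$. That holds for $q\ge 1.5$, since $\varepsilon<1/2$.

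It genuinely fails for $q$ near $1$ and $\varepsilon$ near $1/2$. For $q=1.01$, $\varepsilon=0.49$ (so $m=210$), the branch gives about $0.119$, against $(\sqrt2-1)2^{-2.11}\approx 0.096$. Adjusting constants cannot fix this; you need a case split. For $1<q<1.5$, use the first branch:
\[
\sqrt{B_m}\cdot\frac{\varepsilon}{24mB_m}=\frac{\varepsilon}{48m}\,2^{(m-2)(q-1)}.
\]
The requirement then becomes $\varepsilon\,2^{(2m-1)(q-1)}\le 48(\sqrt2-1)\,m$. This holds because $\varepsilon\,2^{2m(q-1)}<9\cdot 4^{q-1}$, so the left side is below $9\cdot 2^{q-1}<9\sqrt2<13<48(\sqrt2-1)\le 48(\sqrt2-1)m$.

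\textbf{Comparison with the paper's route for property 4.} The paper argues additively:
\begin{itemize}
\item it uses $|P_m+g_m|\le 2$ to get $B_mP_m^2\le g(x^2)+2B_m\varepsilon_m$;
\item it then claims $2B_m\varepsilon_m\le 2^{-1-(m+2)(q-1)}\le g(x^2)$.
\end{itemize}
This does not close as written. On $(\varphi_{m+1},\varphi_m]$ one only has $g(x^2)>2^{-2(m+1)(q-1)}$. That is smaller than $2^{-1-(m+2)(q-1)}$, because $m(q-1)\ge\tfrac12\log_2 18>1$.

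The crude factor $2$ in $|P_m+g_m|\le 2$ also loses too much for large $q$. For $q=3$, $\varepsilon=0.4$ (so $m=2$), $2B_m\varepsilon_m\approx 1.6\cdot 10^{-3}$, while $\inf g(x^2)\approx 2.4\cdot10^{-4}$. Your square-root bound $B_mP_m^2\le(\sqrt{g(x^2)}+\sqrt{B_m}\varepsilon_m)^2$ avoids this loss; in the same example it gives $\sqrt{B_m}\varepsilon_m\approx 3.9\cdot10^{-4}\le(\sqrt2-1)/64$. Keep your route, but complete it with the two-case check above.
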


\begin{proof}
    We first show the parameters satisfy the requirement of~\lem{poly-approx-positive-powers}. Specifically, noting that $B_j > B_{j+1}$, we have
    \[
    \varepsilon_j \le \frac{\varepsilon}{24 m B_m}\le \frac{9}{24\log \rbra*{9/\varepsilon}}\le \frac{1}{2}.
    \]

    For (1), this is direct by~\lem{poly-approx-positive-powers}.

    For (2), note that $2\max_{x\in (\varphi_j ,\varphi_{j-1}]} |g(x^2)| = 2 \varphi_{j-1}^{2(q-1)} = 2^{1-2(q-1)(j-1)}$. We have 
    \begin{equation*}
    B_j = 2^{2-2(j-2)(q-1)}\ge 2^{1-2(q-1)(j-1)}.
    \end{equation*}

    For (3), noting that $B_j > B_{j+1}$, we have $\forall x\in \interval{\frac{\varphi_{j+1}}{2}}{\frac{\varphi_{j-1}}{2}}$, 
    \begin{equation*}
        \abs{P_j(x)- g_j(x)}\le \varepsilon_j, 
    \end{equation*}
    where $g_j(x)\coloneqq\frac{x^{q-1}}{2 \varphi_{j-1}^{q-1}} = \frac{x^{q-1}}{2^{1-(q-1)(j-1)}}$.
    For $x\in \interval{\varphi_{j+1}/2}{\varphi_{j-1}/2}$, $\abs{g(x)}\le 1$. Therefore, for $x \in (\varphi_{j+1}, \varphi_{j-1}]$, we have
    \begin{equation*}
        \abs{P_j^2\rbra*{\frac{x}{2}}-g_j^2\rbra*{\frac{x}{2}}} \le \abs{P_j\rbra*{\frac{x}{2}}+g_j\rbra*{\frac{x}{2}}} \abs{P_j\rbra*{\frac{x}{2}}-g_j\rbra*{\frac{x}{2}}} \le 2\varepsilon_j.
    \end{equation*}
    Since $B_j g_j^2\rbra{x/2} = 2^{2-2(j-2)(q-1)} \cdot \frac{x^{2(q-1)}}{2^{2(q-1)}2^{2-2(q-1)(j-1)}} = x^{2(q-1)} = g(x^2) $, we know
    \begin{equation*}
        \abs{\max\cbra{B_j,B_{j+1}}P_j^2\rbra*{\frac{x}{2}}-g(x^2)} = 
        \abs{B_jP_j^2\rbra*{\frac{x}{2}}-B_jg^2\left(\frac{x}{2}\right)} \le 2B_j\varepsilon_j \le \frac{\varepsilon}{12m}.
    \end{equation*}

    For (4), noting that for $x\in \interval[open left]{\varphi_{m+1}/2}{\varphi_{m-1}/2}$, we have 
    \begin{equation*}
        \abs{P_m^2\rbra*{\frac{x}{2}}-g_m^2\rbra*{\frac{x}{2}}} \le 2\varepsilon_m.
    \end{equation*}
    This gives
    \begin{equation*}
        \abs{P_m^2\rbra*{\frac{x}{2}}} \le g_m^2\rbra*{\frac{x}{2}} +   2\varepsilon_m.
    \end{equation*}
    Using $B_m g^2_m(x/2) = g(x^2)$, we only need to bound $2B_m \varepsilon_m$. Note that for $x\in \interval[open left]{\varphi_{m+1}/2}{\varphi_{m-1}/2}$, $g(x^2)\ge g(\varphi_{m+1}/2)^2 = 2^{-(m+2)(q-1)}$. By the choice of $\varepsilon_m$, it holds that $2B_m \varepsilon_m\le \frac{1}{2^{1+(m+2)(q-1)}} \le g(x^2)$.
    Therefore, 
    \begin{equation*}
       B_mP_m^2\rbra*{\frac{x}{2}} \le B_mg_m^2\rbra*{\frac{x}{2}} +   2B_m\varepsilon_m \le 2g(x^2).
    \end{equation*}

    For (5), 
    \begin{equation*}
          \sum_{i\in [n]: \sqrt{p_i}\in [0,\varphi_m]} p_i|g(p_i)|\le f(\varphi_m^2) \rbra*{\sum_{i=1}^n p_i } \le \varphi_m^{2(q-1)}\le \frac{\varepsilon}{9}.
    \end{equation*}

    For (6), by~\lem{poly-approx-positive-powers}, we have
    \begin{equation*}
        \deg(P_j) = O\rbra*{\frac{1}{\varphi_{j+1}}\log \rbra*{\frac{1}{\varphi_j \varphi_{j+1}\varepsilon_j}}} = O\rbra*{\frac{j}{\varphi_{j+1}}\log \rbra*{\frac{mB_j}{\varepsilon}}} = O\rbra*{\frac{\log \rbra*{\frac{1}{\varepsilon}}}{\varphi_{j+1}}\log \rbra*{\frac{mB_j}{\varepsilon}}}.
    \end{equation*}
   Also, we have that $\log\rbra{1/\varepsilon_m} = O\rbra*{m\log\rbra{B_m/\varepsilon}} = O\rbra*{\log\rbra{1/\varepsilon}\log\rbra{mB_m/\varepsilon}}$.

    Combining the above, we have finished our proof.
\end{proof}

Applying~\thm{multi-level-ae-1}, we have the following theorem.
\begin{theorem}\label{thm:Tsallis-entropy-q-greater-than-one}
    For a constant $q > 1$, and a probability distribution $p$ of size $n$. 
    There is a quantum algorithm that, given
    a purified quantum query access $\mathcal{O}_p$ to $p$, computes an estimate of $\Tsa{q}{p}$ to within additive error $\varepsilon$, using 
\begin{equation*}
    \begin{cases}
      O\bigg( \frac{1}{\varepsilon}\log^{4}\rbra*{\frac{1}{\varepsilon}} \log\rbra*{\log\rbra*{\frac{1}{\varepsilon}}}  \bigg), & \text{ if } q > 1.5 \\
           O\bigg( \frac{1}{\varepsilon}\log^{5}\rbra*{\frac{1}{\varepsilon}} \log\rbra*{\log\rbra*{\frac{1}{\varepsilon}}}  \bigg), & \text{ if } q = 1.5  \\
               O\bigg( \frac{1}{\varepsilon^{\frac{1}{2(q-1)}}}\log^{4}\rbra*{\frac{1}{\varepsilon}} \log\rbra*{\log\rbra*{\frac{1}{\varepsilon}}}  \bigg) & \text{ if } 1< q < 1.5
\end{cases}
\end{equation*}
    queries to $\mathcal{O}_p$.
\end{theorem}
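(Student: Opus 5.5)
The proof will combine \prop{tsallis-q-greater-poly-approx} with \thm{multi-level-ae-1}, exactly as \thm{1312127} was derived from \prop{1312130}. \prop{tsallis-q-greater-poly-approx} supplies the four conditions with $C=2$ (its item (4)), and the tail bound $\eps/9 = \eps/(3(1+C))$ (item (5)). It also gives the degree bounds (item (6)), so we may take $\alpha_j(\eps,m)=O(\log(1/\eps))$ for $j<m$ and $\alpha_m=O(\log^2(1/\eps))$. The algorithm therefore outputs an $\eps$-estimate of $\sum_i p_i g(p_i)=\TsaF{q}{p}$. Running it with precision $(q-1)\eps$, a constant rescaling, gives an $\eps$-estimate of $\Tsa{q}{p}=\frac{1}{1-q}(\TsaF{q}{p}-1)$.

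What remains is to evaluate the complexity bound of \thm{multi-level-ae-1}. First, $m=\Theta(\log(1/\eps))$, so $\log m=O(\log\log(1/\eps))$. Next, $B_j=\Theta(2^{-2j(q-1)})$ is decreasing, hence $\mathcal{B}_j=O(B_{j-1})=O(2^{-2j(q-1)})$. This gives $\log(m\mathcal{B}_j/\eps)=O(\log(1/\eps))$.

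The sum then splits into two terms. For the first term, Cauchy--Schwarz gives
\[
\sum_j \mathcal{B}_j\sqrt{n_j}\le \sqrt{\textstyle\sum_j \mathcal{B}_j^2 n_j}\,\sqrt{m}.
\]
Here I would use $n_j\le \varphi_{j+1}^{-2}\sum_{i\in S_j}p_i$, which holds because $p_i>\varphi_{j+1}^2$ on $S_j$. This yields $\mathcal{B}_j^2 n_j=O(2^{-4j(q-1)}2^{2j}\sum_{i\in S_j}p_i)$. The factor $2^{j(2-4(q-1))}=2^{j(6-4q)}$ is bounded by $O(1)$ when $q\ge 1.5$. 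In that case, using $\sum_j\sum_{S_j}p_i\le 2$, the first term is $O(m\sqrt{m}/\eps)$.

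This step is the delicate point. I need to check whether the $\sqrt{n_j}$ term is handled better by a per-$j$ bound, namely $\mathcal{B}_j\sqrt{n_j}\le \mathcal{B}_j\varphi_{j+1}^{-1}=O(2^{j(1-2(q-1))})=O(2^{j(3-2q)})$. That bound is $O(1)$ per level for $q\ge1.5$, giving $O(m)$ in total. For $1<q<1.5$ it sums to $O(2^{m(3-2q)})=O(\eps^{-\frac{3-2q}{2(q-1)}})$, and I must compare this against the target. Since $\frac{m}{\eps}\cdot 2^{m(3-2q)}=\eps^{-1-\frac{3-2q}{2(q-1)}}=\eps^{-\frac{1}{2(q-1)}}$, this exactly matches the claimed rate. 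Since the prefactor $m\mathcal{B}_j/\eps$ is counted once, the first term is $O(\frac{m}{\eps}\max\{m,2^{m(3-2q)}\})$ up to the log factors.

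The second term is $\sum_j 2^{j}\sqrt{m 2^{-2j(q-1)}/\eps}=\sqrt{m/\eps}\sum_j 2^{j(2-q)}$. This sum is dominated by $j=m$, giving $O(\sqrt{m/\eps}\,\eps^{-\frac{2-q}{2(q-1)}})$, which equals $O(\sqrt{m}\,\eps^{-\frac{1}{2(q-1)}})$. For $q\ge1.5$ this is at most $O(\sqrt m/\eps)$, and for $q<1.5$ it matches the dominant rate.

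Finally I multiply by the prefactor $\log m\cdot\alpha_j\cdot\log(m\mathcal{B}_j/\eps)$. This is $O(\log\log(1/\eps)\log^2(1/\eps))$ for $j<m$, and one extra $\log(1/\eps)$ for the single level $j=m$. The case split then follows. For $q>1.5$ the geometric sums converge, giving $\frac{1}{\eps}\log^4(1/\eps)\log\log(1/\eps)$ once the $m$ factors are counted. At $q=1.5$ the per-level terms are constant, so summing over $m$ levels costs one extra $\log$, giving $\log^5$. For $1<q<1.5$ the sums are geometric and dominated by the top levels, giving $\eps^{-1/(2(q-1))}\log^4(1/\eps)\log\log(1/\eps)$.

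The main obstacle is the bookkeeping of logarithmic powers, in particular accounting for the extra $\log(1/\eps)$ at level $m$ and the boundary case $q=1.5$. I would handle this by bounding the $j=m$ level separately and treating the remaining levels with geometric sums.
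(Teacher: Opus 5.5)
Your proposal is correct and follows the paper's proof: plug \prop{tsallis-q-greater-poly-approx} into \thm{multi-level-ae-1}, bound each level by $\mathcal{B}_j\sqrt{n_j}\le \mathcal{B}_j/\varphi_{j+1}=O(2^{(3-2q)j})$, and split the resulting geometric sum at $q=1.5$. Your $n_j\le\varphi_{j+1}^{-2}$ is the precise form of the paper's $n_j\varphi_{j-1}^2\le 1$, which only holds up to a constant, and your $(q-1)\eps$ rescaling is a detail the paper glosses over. The one real difference is the second term: the paper absorbs it into the first via $\sqrt{m\mathcal{B}_j/\eps}=O(m\mathcal{B}_j/\eps)$, while you sum $2^{j(2-q)}$ directly, and that sum is dominated by $j=m$ only for $q<2$; for $q\ge 2$ it is at most $m$, so your conclusion that this term is $O(\sqrt{m}/\eps)$ for $q\ge 1.5$ still holds.
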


\begin{proof}
In~\prop{tsallis-q-greater-poly-approx}, we showed that the chosen parameters satisfy the constraints required by~\thm{multi-level-ae-1}.
Therefore,  we can obtain an estimate of $\TsaF{q}{p}=\sum_{i=1}^n p_i^{q}$ with query complexity:
\begin{equation*}
    \begin{aligned}
               & O\bigg( \log^{2}\rbra*{\frac{1}{\varepsilon}}\sum_{j=1}^m\log(m)\log\rbra*{\frac{m\mathcal{B}_j}{\eps}}\bigg(\frac{m\mathcal{B}_j\sqrt{n_j}}{\eps} + \frac{1}{\varphi_{j+1}}\sqrt{\frac{m\mathcal{B}_j}{\eps}}\bigg) \bigg) \\
               &= O\bigg( \log^{2}\rbra*{\frac{1}{\varepsilon}} \log(m)\log\rbra*{\frac{m\mathcal{B}_j}{\eps}} \sum_{j=1}^m\bigg(\frac{m\mathcal{B}_j\sqrt{n_j}}{\eps} + \frac{1}{\varphi_{j+1}}\sqrt{\frac{m\mathcal{B}_j}{\eps}}\bigg) \bigg) \\
                &= O\bigg( \log^{3}\rbra*{\frac{1}{\varepsilon}} \log\rbra*{\log\rbra*{\frac{1}{\varepsilon}}} \bigg(\frac{m}{\eps} \sum_{j=1}^m  \mathcal{B}_j\sqrt{n_j}+ \sum_{j=1}^m\frac{1}{\varphi_{j+1}}\sqrt{\frac{m\mathcal{B}_j}{\eps}}\bigg) \bigg). 
    \end{aligned}
\end{equation*}
Note that $n_j \varphi_{j-1}^2 \le 1$, we have 
\begin{equation*}
    \frac{m}{\varepsilon}\sum_{j=1}^m  \mathcal{B}_j\sqrt{n_j} = O\rbra*{ \sum_{j=1}^m   \frac{m}{\varepsilon}\frac{B_j}{\varphi_j}}=  O\rbra*{ \frac{m}{\varepsilon} \sum_{j=1}^m 2^{(3-2q)j}}.
\end{equation*}
Also, for our choice of $B_j$, we have 
$\frac{mB_j}{\varepsilon}\ge \frac{mB_m}{\varepsilon}\ge \frac{\log\rbra{9/\varepsilon}}{9} \ge \Omega(1)$, giving 
\begin{equation*}
     \sum_{j=1}^m\frac{1}{\varphi_{j+1}}\sqrt{\frac{m\mathcal{B}_j}{\eps}} = O\rbra*{\sum_{j=1}^m\frac{mB_j}{\varepsilon \varphi_j }}.
\end{equation*}

For $q >1.5$, 
\begin{equation*}
   O\rbra*{\sum_{j=1}^m\frac{mB_j}{\varepsilon \varphi_j }} =    O\rbra*{ \frac{m}{\varepsilon} \sum_{j=1}^m 2^{(3-2q)j}} =  O\rbra*{ \frac{m}{\varepsilon} }, 
\end{equation*}
and the final complexity is 
\begin{equation*}
    O\bigg( \frac{1}{\varepsilon}\log^{4}\rbra*{\frac{1}{\varepsilon}} \log\rbra*{\log\rbra*{\frac{1}{\varepsilon}}}  \bigg).
\end{equation*}

For $q =1.5$, 
\begin{equation*}
     O\rbra*{\sum_{j=1}^m\frac{mB_j}{\varepsilon \varphi_j }}=  O\rbra*{ \frac{m}{\varepsilon} \sum_{j=1}^m 2^{(3-2q)j}} =  O\rbra*{ \frac{m^2}{\varepsilon} }, 
\end{equation*}
and the final complexity is 
\begin{equation*}
    O\bigg( \frac{1}{\varepsilon}\log^{5}\rbra*{\frac{1}{\varepsilon}} \log\rbra*{\log\rbra*{\frac{1}{\varepsilon}}}  \bigg).
\end{equation*}

For $1< q < 1.5$,
\begin{equation*}
    O\rbra*{\sum_{j=1}^m\frac{mB_j}{\varepsilon \varphi_j }}=  O\rbra*{ \frac{m}{\varepsilon} \sum_{j=1}^m 2^{(3-2q)j}} =  O\rbra*{ \frac{m2^{(3-2q)m}}{\varepsilon} }, 
\end{equation*}
and the final complexity is 
\begin{equation*}
    O\bigg( \frac{1}{\varepsilon^{\frac{1}{2(q-1)}}}\log^{4}\rbra*{\frac{1}{\varepsilon}} \log\rbra*{\log\rbra*{\frac{1}{\varepsilon}}}  \bigg).
\end{equation*}
Note that $\Tsa{q}{p} = \frac{1}{1-q}(\TsaF{q}{p}-1)$, 
the complexity is the same for obtaining an $\varepsilon$-estimate of $\Tsa{q}{p}$.
\end{proof}

\section{Lower Bounds}
In this section, we show quantum query lower bounds for estimating Tsallis entropies given purified quantum access to the probability distributions.

\begin{theorem}\label{thm:lower-bound-q-tsallis}
    Any quantum query algorithm for estimating the $q$-Tsallis entropy of an unknown probability distribution of size $n$ to within additive error $\varepsilon$ requires query complexity:
    \begin{itemize}
        \item $\Omega\left(\frac{n^{\frac{1}{2q}}}{\varepsilon^{\frac{1}{2q}}\sqrt{\log(n/\varepsilon)}}\right)$ for $0 < q < 0.5$ and $\varepsilon \geq \Omega(n^{\frac{2q^2-2q+1}{1-2q}}\log^{\frac{q}{1-2q}}(n))$,
        \item $\Omega\left(\frac{n^{\frac{1}{2q}}}{\varepsilon^{\frac{1}{2q}}\sqrt{\log(n/\varepsilon)}}+\frac{n^{1-q}}{\varepsilon}\right)$ for $0.5 \leq q < 1$,
        \item $\Omega\left(\frac{1}{\varepsilon^{\frac{1}{2(q-1)}}\sqrt{\log(1/\varepsilon)}}\right)$ for $1 < q < 1.5$,
        \item $\Omega\left(\frac{1}{\varepsilon}\right)$ for $q \geq 1.5$.
    \end{itemize}
\end{theorem}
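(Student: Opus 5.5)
Every bound in the statement will come from a reduction to a distinguishing problem with a known quantum query lower bound under purified access. We exhibit two families of distributions whose $q$-Tsallis entropies differ by more than $2\varepsilon$, so that any $\varepsilon$-estimator distinguishes them. Two sources of hardness will be used.

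\emph{(i) Amplitude-type (Heisenberg) bound.} For distributions that differ only by an $\varepsilon$-size perturbation of a fixed-size mass, the purified states $\mathcal{O}_p\ket{\mathbf 0}$ and $\mathcal{O}_{p'}\ket{\mathbf 0}$ have fidelity $1-\Theta(\delta^2)$ when the Hellinger distance is $\delta$. The standard hybrid argument then requires $\Omega(1/\delta)$ queries, since amplitude estimation is optimal.

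\emph{(ii) Support-size / collision-type bound.} We use the quantum lower bound for distinguishing a distribution uniform on $k$ elements from one uniform on $k'$ elements, or for detecting a small number of light elements. Under purified access this is known to cost $\widetilde\Omega(\sqrt{k})$-type complexity (via the reduction to the collision/support-estimation problem or the symmetric-polynomial method). The $\sqrt{\log}$ factors in the statement indicate that we will invoke such a known bound, e.g.\ the support-size lower bound of \cite{BKT20}, which carries a $\sqrt{\log}$ loss.

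For $q\ge 1.5$, and the $n^{1-q}/\varepsilon$ term when $0.5\le q<1$, we use (i). Take $p$ with $p_1=1/2\pm\delta$ and the remaining mass spread evenly; for $q<1$ we instead perturb the whole uniform-like part on $n$ elements. Then $\TsaF{q}{\cdot}$ changes by $\Theta(\delta)$ for $q\ge1.5$. For $0.5\le q<1$, perturbing between uniform on $n$ and a $\delta$-tilted version moves $\sum p_i^q$ by about $n^{1-q}\delta$. We choose $\delta=\Theta(\varepsilon)$, respectively $\delta=\Theta(\varepsilon/n^{1-q})$, and the Hellinger distance $\Theta(\delta)$ gives the bounds $\Omega(1/\varepsilon)$ and $\Omega(n^{1-q}/\varepsilon)$.

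For $1<q<1.5$ we use (ii) with light elements. Mixture A places mass $1-w$ on one element and spreads $w$ uniformly over $k$ elements. Mixture B is the same but spreads $w$ over $k/2$ elements. The Tsallis functional differs by $\Theta(w^q k^{1-q})$. We set this equal to $\varepsilon$ and balance against the cost of telling the two apart. Since the light part is seen only with amplitude $\sqrt w$, distinguishing needs roughly $\sqrt{k}/\sqrt{w}\cdot$(support-type hardness); we will compose the support-size lower bound with amplitude dilution. Choosing elements of mass $w/k\approx \varepsilon^{1/(q-1)}$ and $k$ constant-to-$w$ appropriately should give $\Omega(1/\sqrt{w/k})=\Omega(\varepsilon^{-1/(2(q-1))})$, up to the $\sqrt{\log}$ factor.

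For $0<q<1$, the $n^{1/(2q)}/\varepsilon^{1/(2q)}$ term comes from the analogous construction in which light elements dominate. Uniform mass spread over $k\le n$ elements of size $\approx \varepsilon^{1/q}/n^{1/q}$ contributes $\Theta(\varepsilon)$ to $\sum p_i^q$, and detecting such elements costs $\Omega(\sqrt{1/(\text{mass per element})})=\Omega((n/\varepsilon)^{1/(2q)})$. For $q<0.5$ we must keep the construction consistent with total mass at most $1$ and $k\le n$; this is what produces the constraint on $\varepsilon$ in the statement.

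The main obstacle is making the composition in (ii) rigorous: a lower bound simultaneously reflecting diluted amplitude and support-size hardness, with only a $\sqrt{\log}$ loss. I would first try to embed a known support-estimation or collision lower bound, e.g.\ via the polynomial method for symmetric distributions, into the light component and argue that the heavy element does not help the algorithm. The parameter bookkeeping, including the range condition on $\varepsilon$ for $q<0.5$, is routine afterwards.
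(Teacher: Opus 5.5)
\textbf{The gap is in part (ii).} This part must deliver the $\varepsilon^{-\frac{1}{2(q-1)}}$ bound for $1<q<1.5$ and the $(n/\varepsilon)^{\frac{1}{2q}}$ term for $q<1$. Your instance pair (a heavy element plus light mass $w$ spread uniformly over $k$ versus $k/2$ points) is not hard enough, so no rigorous ``composition with amplitude dilution'' can exist for it. Here is an attack under purified access:
\begin{enumerate}
\item Collect $t=k^{1/3}$ light samples, either by measuring or by amplitude amplification away from the heavy index at cost $O(1/\sqrt{w})$ each.
\item Using the query-free reflection about the classically known set $T$ of sampled indices, run amplitude estimation on $\sum_{i\in T}p_i$. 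This equals $|T|w/k$ under one hypothesis and $2|T|w/k$ under the other, so it costs $O(\sqrt{k/(|T|w)})$.
\end{enumerate}
The total is $O(k^{1/3}/\sqrt{w})$. Under your constraint $w(w/k)^{q-1}\gtrsim\varepsilon$, this is at most $O(\varepsilon^{-\frac{1}{3(q-1)}})$, polynomially below the target. The analogous $q<1$ pair ($k=\Theta(n)$ light points of mass $\mu\approx(\varepsilon/n)^{1/q}$, with a single heavy element) falls to the same attack with $O(n^{-1/6}\mu^{-1/2})$ queries, again below $\mu^{-1/2}=(n/\varepsilon)^{\frac{1}{2q}}$.

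So the heuristic ``detecting elements of mass $\mu$ costs $\Omega(\mu^{-1/2})$'' is not a valid lower-bound principle. Hardness needs two hypotheses whose low-order statistics (light mass, collision-type moments) coincide, i.e.\ moment-matching priors. That is exactly what underlies the classical minimax rates $\Theta(n^2/(S\log S)^{2q})$ and $\Theta(1/(S\log S)^{2(q-1)})$ of \cite{JVHW15}. The support-size bounds you allude to also rest on such dual-polynomial instances, not uniform-versus-uniform ones. Invoking them would mean building entirely different instances and still transferring them to the Tsallis functional.

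\textbf{What the paper does instead.} It bypasses instance design entirely. It applies quantum sample-to-query lifting (a classical sample lower bound $\Omega(S)$ implies an $\Omega(\sqrt{S})$ query lower bound under purified access) to the rates of \cite{JVHW15,JVHW17}, which gives all four bullets at once. The $\sqrt{\log}$ factors come from the $S\log S$ in those rates, not from \cite{BKT20}. The range condition on $\varepsilon$ for $q<0.5$ is the validity regime $n\gtrsim S^{1-\frac{1}{2q}}\log S$ of \cite{JVHW15}, not a total-mass constraint.

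\textbf{Part (i).} This is a legitimate, more elementary alternative for the $q\ge 1.5$ bound and the $n^{1-q}/\varepsilon$ term, but it contains a slip. The uniform distribution maximizes $\sum_i p_i^q$ for $q<1$, so a $\delta$-tilt around it moves the functional only by $\Theta(n^{1-q}\delta^2)$, giving merely $\Omega(\sqrt{n^{1-q}/\varepsilon})$. Use a non-stationary base point instead, e.g.\ half the points of mass $\frac{1}{2n}$ and half of mass $\frac{3}{2n}$, and shift total mass $\delta$ between the groups. The Hellinger distance stays $\Theta(\delta)$ while the functional moves by $\Theta(n^{1-q}\delta)$, and the hybrid argument then gives $\Omega(n^{1-q}/\varepsilon)$. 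Likewise, avoid the symmetric case $n=2$, $p_1=\frac12$ when $q\ge 1.5$, where the $\pm\delta$ pair has identical entropies.
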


\begin{proof}
    Our quantum query complexity lower bounds are obtained by the quantum sample-to-query lifting method \cite{WZ25a,WZ25b,TWZ25,CWZ25}. 
    Specifically, as mentioned in \cite[Section 2.1]{CWZ25}, if testing a property of an unknown probability distribution requires sample complexity $\Omega(S)$, then this problem requires quantum query complexity $\Omega(\sqrt{S})$ in the purified quantum query access model. 
    Therefore, in the remainder of this proof, we focus on the sample complexity lower bounds for estimating the $q$-Tsallis entropy for different ranges of $q$ separately. 

    \begin{itemize}
        \item For $0 < q < 0.5$, in \cite[Theorem 2]{JVHW15}, it was shown that the minimax mean squared error for estimating the $q$-Tsallis entropy is $\Theta\left(\frac{n^2}{(S\log(S))^{2q}}\right)$ for $n \geq \Omega(S^{1-\frac{1}{2q}}\log(S))$ (with the specific constraints stated in \cite[Page 2852]{JVHW15}). 
        To ensure the additive error $\varepsilon$, we need to set $\varepsilon^2 = \Theta\left(\frac{n^2}{(S\log(S))^{2q}}\right)$ for $n \geq \Omega(S^{1-\frac{1}{2q}}\log(S))$, which gives $S \geq \Omega\left(\frac{n^{\frac{1}{q}}}{\varepsilon^{\frac{1}{q}}\log(n/\varepsilon)}\right)$ with the constraints $\varepsilon \geq \Omega(n^{\frac{2q^2-2q+1}{1-2q}}\log^{\frac{q}{1-2q}}(n))$. 

        \item For $0.5 \leq q < 1$, in \cite[Theorem 2]{JVHW15}, it was shown that the minimax mean squared error for estimating the $q$-Tsallis entropy is $\Theta\left(\frac{n^2}{(S\log(S))^{2q}}+\frac{n^{2-2q}}{S}\right)$.\footnote{The case of $q = 0.5$ should be classified under \cite[Theorem 2(2)]{JVHW15} rather than \cite[Theorem 2(1)]{JVHW15}.} 
        To ensure the additive error $\varepsilon$, we need to set $\varepsilon^2 = \Theta\left(\frac{n^2}{(S\log(S))^{2q}}+\frac{n^{2-2q}}{S}\right)$, which gives $S \geq \Omega\left(\frac{n^{\frac{1}{q}}}{\varepsilon^{\frac{1}{q}}\log(n/\varepsilon)}+\frac{n^{2-2q}}{\varepsilon^2}\right)$.

        \item For $1 < q < 1.5$, in \cite[Theorems 3 and 4]{JVHW15}, it was shown that the minimax mean squared error for estimating the $q$-Tsallis entropy is $\Theta\left(\frac{1}{(S\log(S))^{2(q-1)}}\right)$.
        To ensure the additive error $\varepsilon$, we need to set $\varepsilon^2 = \Theta\left(\frac{1}{(S\log(S))^{2(q-1)}}\right)$, which gives $S \geq \Omega\left(\frac{1}{\varepsilon^{\frac{1}{q-1}}\log(1/\varepsilon)}\right)$. 

        \item For $q \geq 1.5$, in \cite[Section VI-A]{JVHW17}, it was shown that the minimax mean squared error for estimating the $q$-Tsallis entropy is $\Theta\left(\frac{1}{S}\right)$.
        To ensure the additive error $\varepsilon$, we need to set $\varepsilon^2 = \Theta\left(\frac{1}{S}\right)$, which gives $S \geq \Omega\left(\frac{1}{\varepsilon^{2}}\right)$. 
    \end{itemize}
\end{proof}

\bibliographystyle{alphaurl}
\bibliography{ref}
\end{document}